\def\algbackskip{\hskip-\ALG@thistlm}
\newtheorem{theorem}{Theorem}
\newtheorem{lemma}{Lemma}
\newtheorem{remark}{Remark}
\newtheorem{definition}{Definition}
\newcommand{\eps}{\varepsilon}
\newcommand{\R}{\mathbb{R}}
\newcommand{\tr}{\text{Tr}}
\newcommand{\mc}{\text{mc}}
\newcommand{\KL}{\text{KL}}
\title{Variational Bayes for high-dimensional linear regression with sparse priors}
\author{Kolyan Ray\footnote{Department of Mathematics, Imperial College London. E-mail: \href{mailto:kolyan.ray@imperial.ac.uk}{kolyan.ray@imperial.ac.uk}}
~and Botond Szab\'o\footnote{Department of Mathematics, Vrije Universiteit Amsterdam. E-mail: \href{mailto:b.t.szabo@vu.nl}{b.t.szabo@vu.nl}\newline \indent {Botond Szab\'o received funding from the Netherlands Organization for Scientific Research (NWO) under Project number: 639.031.654.}}\\
\\
\textit{Imperial College London and Vrije Universiteit Amsterdam}}
\date{}
\begin{document}

\maketitle

\vspace{-0.75cm}

\begin{abstract}
We study a mean-field spike and slab variational Bayes (VB) approximation to Bayesian model selection priors in sparse high-dimensional linear regression. Under compatibility conditions on the design matrix, oracle inequalities are derived for the mean-field VB approximation, implying that it converges to the sparse truth at the optimal rate and gives optimal prediction of the response vector. The empirical performance of our algorithm is studied, showing that it works comparably well as other state-of-the-art Bayesian variable selection methods. We also numerically demonstrate that the widely used coordinate-ascent variational inference (CAVI) algorithm can be highly sensitive to the parameter updating order, leading to potentially poor performance. To mitigate this, we propose a novel prioritized updating scheme that uses a data-driven updating order and performs better in simulations. The variational algorithm is implemented in the R package \texttt{sparsevb}.\\

\noindent\emph{AMS 2000 subject classifications:} Primary 62G20; secondary 62G05, 65K10.\\
\noindent\emph{Keywords and phrases:} Variational Bayes, spike-and-slab prior, model selection, sparsity, oracle inequalities.
\end{abstract}

\section{Introduction}
\label{sec:intro}

Inference under sparsity constraints has found many applications in statistics and machine learning \cite{mitchell:1988,tipping:2001}. Perhaps the most widely applied such model is sparse linear regression, where we observe
\begin{equation}\label{eq:model}
Y = X \theta +  Z,
\end{equation}
where $Y \in \mathbb{R}^n$, $X$ is a given, deterministic $n \times p$ design matrix, $\theta \in \mathbb{R}^p$ is the parameter of interest and $Z \sim N_n(0,I_n)$ is additive Gaussian noise. We are interested in the \textit{sparse high-dimensional} setting, where $n\leq p$ and typically $n\ll p$, and many of the coefficients $\theta_i$ are (close to) zero.

From a Bayesian perspective, perhaps the most natural way to impose sparsity is through a \textit{model selection} prior, which assigns probabilistic weights to each potential model, i.e. each subset of $\{1,\dots,p\}$ corresponding to selecting the non-zero coordinates of $\theta\in\mathbb{R}^p$. This is one of the most widely used approaches within the Bayesian community \cite{efron:2008,george:1993,mitchell:1988,west:2003} and includes the popular spike-and-slab prior, which is often considered the gold standard in sparse Bayesian linear regression. Such priors have been shown to perform well for estimation and prediction \cite{johnstone:2004,castillo:2012,castillo:2015,chae:2019}, uncertainty quantification \cite{ray:2017,castillo:2018} and multiple hypothesis testing \cite{castillo:roquain:2018}, see \cite{bcg:review:2020} for a recent review.

However, while these priors perform excellently both empirically and theoretically, the discrete model selection component of the prior can make computation hugely challenging. For $\theta \in \mathbb{R}^p$, inference using the spike-and-slab prior generally involves a combinatorial search over all $2^p$ possible models, a hugely expensive task for even moderate $p$. Fast algorithms for exact posterior computation are thus usually restricted to the diagonal design case \cite{castillo:2012,erven:szabo:2018}, while Markov chain Monte Carlo methods are known to have problems mixing for typical problem sizes of interest \cite{griffin2017}.
 
A popular scalable alternative is variational Bayes (VB), which recasts posterior approximation as an optimization problem. One minimizes the VB objective function, consisting of the Kullback-Leibler (KL) divergence between  a family of tractable distributions, called the variational family, and the posterior. Though the resulting approximation does not provide exact Bayesian inference, picking a computationally convenient variational class can dramatically increase scalability, see for example \cite{blei:2003,hoffman:2013}. An especially popular variational family consists of distributions under which the model parameters are independent, so called \textit{mean-field variational Bayes}. For a nice recent review of VB, see \cite{blei:2016}.

In this work, we consider a mean field family consisting of distributions independently assigning each coordinate of $\theta$ an independent mixture of a Gaussian and Dirac mass at zero, thereby mirroring the form of the spike-and-slab prior (but crucially not the form of the posterior). Such a computational relaxation is significant, reducing the posterior dimension to a much more tractable $O(p)$. This is a natural approximation since it keeps the discrete model selection aspect and many of the interpretable features of the original posterior, for example access to posterior probabilities of submodels and inclusion probabilities of particular covariates. This sparse variational family has been applied in practice \cite{logsdon2010variational,titsias:2011,carbonetto:2012,huang:2016,ormerod:2017}, but comes with few theoretical guarantees.

We study this VB procedure under the frequentist assumption that the data $Y$ has been generated according to a given sparse parameter $\theta_0$. Under standard conditions on the design matrix, we obtain refined oracle type contraction rates for the mean-field VB approximation of model selection priors. As a consequence, these imply that the VB posterior performs optimally regarding both estimation of a sparse $\theta$ and for prediction of the response vector. This provides a theoretical justification for this attractive approximation algorithm in a sparsity context.

While similar VB approaches have been applied in the methodological literature \cite{logsdon2010variational,titsias:2011,carbonetto:2012,huang:2016,ormerod:2017}, our contribution also possesses a crucial methodological difference. These existing works typically use Gaussian slabs for the prior, which allows analytic evaluation of certain formulas in the variational algorithm leading to fast optimization. However, Gaussian slabs are inappropriate for recovering the true signal $\theta_0$ since the \textit{true} underlying posterior performs excessive shrinkage causing poor performance \cite{castillo:2012}. One cannot typically expect a VB approximation based on a poorly performing underlying posterior to perform well for recovery. We instead consider Laplace slabs for the prior, which result in optimal recovery when using the true posterior \cite{castillo:2012,castillo:2015}. We are thus using a similar variational family to estimate a \textit{different posterior distribution} compared to previous works. Another way to correct the original posterior is to explicitly shift the posterior mean using an empirical Bayes approach \cite{martin:2017,martin:2019,belitser2020,belitser2020a}.

We provide the methodological details for applying the widely-used coordinate-ascent variational inference (CAVI) algorithm \cite{blei:2016} with Laplace slabs and investigate our method numerically on both simulated and real world ozone interaction data. As predicted by the theory, our method performs well in a number of settings and typically outperforms VB approaches with prior Gaussian slabs. In fact, we find that our approach generally performs at least as well as other state-of-the-art Bayesian variable selection methods. We have implemented our algorithm in the R-package \texttt{sparsevb} \cite{sparsevb}.

Our simulations also show that the CAVI algorithm is highly sensitive to the updating order of the parameters. Since the VB objective function is non-convex and typically has multiple local minima, a poorly chosen updating order can trap the algorithm near a highly suboptimal local minimum causing poor performance. To resolve this, we propose a novel \textit{prioritized} update scheme where we base the CAVI parameter update order on the estimated size of the coefficients via a preliminary estimator. Our simulations indicate that such a data-driven updating order performs better than using either a naive or random update order and provides more robustness against being trapped at a suboptimal local minimum. This idea is applicable beyond the present setting and may be useful for other CAVI approaches.

\textbf{Related work.} Whilst VB has found increasing usage in practice, its theoretical understanding is still in the early stages. In low dimensional settings, some Bernstein-von Mises type results have been derived \cite{lu:2016,wang:2017}, while in high-dimensional and nonparametric settings, first results have only recently appeared \cite{zhang:2017,zhang:gao:2017,pati:2017}. There has also been theoretical work on studying variational approximations to \textit{fractional posteriors}, which down-weight the likelihood \cite{alquier:2017,yang:2017,yang:martin:2020}. The papers \cite{zhang:gao:2017,pati:2017,yang:2017} provide general proof methods which employ the classical prior mass and testing approach of Bayesian nonparametrics \cite{ghosal:2000}. However, since it is known that posterior convergence rates, let alone oracle rates as we derive here, for model selection priors cannot easily be established using this approach \cite{castillo:2012,castillo:2015}, their results do not apply to our setting. We have extended some of the present results to high-dimensional logistic regression in follow up work \cite{ray:szabo:clara:2020}.

\textbf{Organization.} In Section \ref{sec:prior+design} we give details of the prior, variational approximation and conditions on the design matrix. We present our main results in Section \ref{sec:main_results}, details of the VB algorithm in Section \ref{sec: VBalgorithm}, numerical results in Section \ref{sec:numerical} and conclusions in Section \ref{sec:conclusion}. In the supplementary material, we give additional numerical results in Section A, full oracle results and proofs in Section B, additional methodological details in Section C and further discussion of the design matrix assumptions in Section D.

\textbf{Notation.} Let $P_\theta$ be the probability distribution of the observation $Y$ arising in model \eqref{eq:model} and let $E_\theta$ denote the corresponding expectation. For two probability distributions $P,Q$, $\KL(P\|Q) = \int \log \tfrac{dP}{dQ} dP$ denotes the Kullback-Leibler divergence. For $x\in \mathbb{R}^d$, we write $\|x\|_2 = (\sum_{i=1}^d |x_i|^2)^{1/2}$ for the Euclidean norm. For a vector $\theta \in \mathbb{R}^p$ and a subset $S \subseteq \{1,\dots,p\}$ of indices, set $\theta_S$ to be the vector $(\theta_i)_{i\in S}$ in $\R^{|S|}$, where $|S|$ denotes the cardinality of $S$. Further let $S_\theta = \{ i: \theta_i \neq 0\}$ be the set of non-zero coefficients of $\theta$. We will often write $S_0 = S_{\theta_0}$ and $s_0 = |S_{\theta_0}|$, where $\theta_0$ is the true vector. For $X_{\cdot i}$ the $i^{th}$ column of $X$, set
\begin{align}\label{X}
\|X\| := \max_{1\leq i \leq p} \|X_{\cdot i}\|_2 = \max_{1\leq i \leq p} (X^TX)_{ii}^{1/2}.
\end{align}

\section{Prior, variational families and design matrix}\label{sec:prior+design}

\subsection{Model selection priors}

We first present the desirable, but computationally challenging, model selection priors that underlie our VB approximation. Consider a prior for $\theta\in \mathbb{R}^p$ that first selects a \textit{dimension} $s$ from a prior $\pi_p$ on $\{0,\dots,p\}$, then uniformly selects a random subset $S \subset \{1,\dots,p\}$ of cardinality $|S|=s$ and lastly a set of non-zero values $\theta_S = \{ \theta_i :i\in S\}$ from a prior density $g_S$ on $\R^{|S|}$. Since it is known that the `slab' distribution should have exponential tails or heavier to achieve good recovery \cite{castillo:2012}, we restrict to the case where $g_S = \prod_{i\in S} \text{Lap}(\lambda)$ is a product of centered Laplace densities with parameter $\lambda>0$ on $\mathbb{R}^s$. This yields the hierarchical prior:
\begin{equation}\label{eq:prior}
\begin{split}
s \sim \pi_p(s) \\
S | |S|=s \sim \text{Unif}_{p,s}\\
\theta_i \stackrel{ind}{\sim} \begin{cases} 
      \text{Lap}(\lambda), & i\in S, \\
      \delta_0, & i \not\in S,
   \end{cases}
\end{split}
\end{equation}
where $\text{Unif}_{p,s}$ selects $S$ from the $p\choose s$ possible subsets of $\{1,\dots,p\}$ of size $s$ with equal probability and $\delta_0$ denotes the Dirac mass at zero. Since we wish the prior to perform model selection via the prior $\pi_p$ on the dimension $s$ rather than via shrinkage of the Laplace distribution, the choice of prior $\pi_p$ is crucial. The aim is to select a distribution which sufficiently downweights large models while simultaneously placing enough mass to the true model. Following \cite{castillo:2015}, we select an exponentially decreasing prior: we assume that there are constants $A_1,A_2,A_3,A_4>0$ with
\begin{align}\label{eq:prior_cond}
A_1 p^{-A_3} \pi_p (s-1) \leq \pi_p (s) \leq A_2 p^{-A_4} \pi_p (s-1), \quad \quad s= 1,\dots,p.
\end{align}
Assumption \eqref{eq:prior_cond} is satisfied by a variety of piors, including those of the form $\pi_p(s) \propto a^{-s} p^{-bs}$ for constants $a,b>0$ (`complexity priors' \cite{castillo:2012}) and binomial priors. The spike-and-slab prior, where we model $\theta_i \stackrel{iid}{\sim} r\text{Lap}(\lambda) + (1-r)\delta_0$, falls within this framework by taking $\pi_p$ to be $\text{Bin}(p,r)$. The value $r$ is the prior inclusion probability of the coordinate $i$ and controls the model selection. Taking a hyperprior $r\sim \text{Beta}(1,p^u)$ for $u>1$ also satisfies \eqref{eq:prior_cond} (\cite{castillo:2012}, Example 2.2), allows mixing over the sparsity level $r$ and gives a prior that does not depend on unknown hyper-parameters.

The regularization parameter $\lambda$ in the slab distribution in \eqref{eq:prior} is allowed to vary with $p$ within the range
\begin{align}\label{prior_lambda}
\frac{\|X\|}{p} \leq \lambda \leq 2\bar{\lambda}, \qquad \qquad \bar{\lambda} = 2\|X\| \sqrt{\log p},
\end{align}
where the norm $\|X\|$ is the maximal column norm defined in \eqref{X}. The quantity $\bar{\lambda}$ is the usual value of the regularization parameter of the LASSO (\cite{buhlmann:2011}, Chapter 6). Large values of $\lambda$ may shrink many coordinates $\theta_i$ in the slab towards zero, which is undesirable in our Bayesian setup since we wish to induce sparsity via $\pi_p$ instead. Indeed, since the slab component identifies the non-zero coordinates, it is unnatural to further shrink these values. It is natural to take fixed values of $\lambda$ or $\lambda \to 0$, both of which are typically allowed by \eqref{prior_lambda} depending on the specific design matrix and regression setting. Specific values of $\|X\|$ for some examples of design matrices are given in Section D in the supplement.

The theoretical frequentist behaviour of the full posterior arising from prior \eqref{eq:prior} has been studied in \cite{castillo:2012,castillo:2015}, who obtain oracle contraction rates amongst other things. We build on their work to show that these results extend to the scalable variational approximation. 

We briefly comment on the more realistic situation that the model has unknown variance $\varsigma^2$, in which case we instead observe $Y=X\theta+\varsigma Z$. Since then
\begin{equation}\label{rescaled_var}
Y/\varsigma = (X/\varsigma)\theta + Z,
\end{equation}
one may first rescale the data using an estimate $\hat{\varsigma}$ of $\varsigma$ and as before endow $\theta$ with the prior \eqref{eq:prior}, thereby obtaining an empirical Bayes approach. We investigate this empirical Bayes approach numerically in Section \ref{sec:unknown_var}, showing that our method continues to perform well in the more realistic scenario of unknown noise level. One can alternatively use a hierarchical Bayesian approach by endowing $\varsigma$ with a hyper-prior, common choices including the inverse Gamma distribution, $c/\varsigma^2$ or the improper prior $1/\varsigma$.

\subsection{Variational approximations}

The posterior $\Pi(\cdot |Y)$ arising from the prior \eqref{eq:prior} and data \eqref{eq:model} assigns weights to all the $2^p$ possible models, except for very special instances of the design matrix $X$ and prior. Since the posterior is difficult to compute for even moderate $p$, we take a VB approximation using the mean-field variational family
\begin{align}\label{def: variational:family}
\mathcal{P}_{MF}= \left\{P_{\mu,\sigma,\gamma} =  \bigotimes_{i=1}^p \left[ \gamma_i N(\mu_i,\sigma_i^2)+  (1-\gamma_i)\delta_0\right] : \mu_i \in\mathbb{R}, \, \, \sigma_i \in \mathbb{R}^+, \,\, \gamma_i \in [0,1] \right\},
\end{align}
with corresponding VB posterior
\begin{align}\label{def: var}
\widetilde{\Pi} = \underset{P_{\mu,\sigma,\gamma} \in \mathcal{P}_{MF}}{\text{argmin}} \KL(P_{\mu,\sigma,\gamma} || \Pi (\cdot | Y)),
\end{align}
the minimizer of the Kullback-Leibler (KL) divergence with respect to the posterior. Under $P_{\mu,\sigma,\gamma}$, we have $\theta_i \sim \gamma_i N(\mu_i,\sigma_i^2)+  (1-\gamma_i)\delta_0$ independent. We thus approximate the posterior with a spike-and-slab distribution with Gaussian slabs under which every coordinate is independent. Note that while the prior may take the form \eqref{def: variational:family}, the posterior will in general not. The key reduction here is that we replace the $2^p$ model weights with the $p$ VB inclusion probabilities $(\gamma_i)$, thereby dramatically shrinking the posterior dimension. The VB approximation \eqref{def: var} forces (substantial) additional independence into the resulting distribution, breaking dependencies between the variables. For instance, pairwise information that two coefficients $\theta_i$ and $\theta_j$ are likely to be selected simultaneously or not at all is lost.

While we use Gaussian slabs in our variational family, it is crucial the \textit{true prior} has slab distributions with at least exponential tails (e.g. Laplace) \cite{castillo:2012}. The reason a Gaussian approximation works well here is that the likelihood induces Gaussian tails in the posterior. We emphasize that we use the same variational family to estimate a different posterior compared to previous works \cite{logsdon2010variational,titsias:2011,carbonetto:2012,huang:2016,ormerod:2017}, which use Gaussian \textit{prior} slabs. While using Gaussian prior slabs is particularly efficient computationally, it can yield poor performance due to excessive shrinkage of the estimated coefficients, as we demonstrate numerically in Section A.2 in the supplement. Computing the VB estimate \eqref{def: var} is an optimization problem that can be tackled using coordinate-ascent variational inference (CAVI), see Section \ref{sec: VBalgorithm} for details.

While the family $\mathcal{P}_{MF}$ is our main object of interest, our proofs yield similar theoretical results for two other closely related variational families. Consider the family of distributions consisting of products of a single multivariate normal distribution with a Dirac measure:
\begin{equation}
\begin{split}
\mathcal{Q}=\{N_S(\mu_S,\Sigma_S)\otimes \delta_{S^c}:&\, S\subseteq \{1,2,...,p\}, \mu_S\in \mathbb{R}^{|S|},\, \\
&\Sigma_S\in\mathbb{R}^{|S|\times|S|} \text{ a positive definite covariance matrix} \},\label{def: variational:family2}
\end{split}
\end{equation}
where $\delta_{S^c}$ denotes the Dirac measure on the coordinates $S^c$. This family is more rigid on the model selection level than $\mathcal{P}_{MF}$, selecting a distribution with a single fixed support set $S$. On this set, however, the family permits a richer representation for the non-zero coefficients, allowing non-zero correlations. Next consider the mean field subclass of $\mathcal{Q}$:
\begin{equation}
\begin{split}
\mathcal{Q}_{MF} =\{N_S(\mu_S,D_S)\otimes \delta_{S^c}:&\, S\subseteq \{1,2,...,p\}, \mu_S\in \mathbb{R}^{|S|},\, \\
&D_S\in\mathbb{R}^{|S|\times|S|}  \text{ a positive definite diagonal matrix} \}.\label{def: variational:family3}
\end{split}
\end{equation}
This family again allows distributions with only a single fixed support set $S$, but further forces independence of the non-zero coefficients. This class is contained in $\mathcal{P}_{MF}$ by considering distributions $P_{\mu,\sigma,\gamma}$ with inclusion probabilities restricted to $\gamma_i \in \{0,1\}$. We define the corresponding VB posteriors by
\begin{align}
\hat{Q} = \underset{Q \in \mathcal{Q}}{\text{argmin}} ~ \KL(Q || \Pi (\cdot | Y)), \qquad \qquad  \widetilde{Q}=\underset{Q \in \mathcal{Q}_{MF}}{\text{argmin}} ~\KL(Q\| \Pi(\cdot|Y)). \label{def: VB2}
\end{align}
While all our theoretical results also apply to the VB posteriors $\hat{Q}$ and $\widetilde{Q}$, these seem to perform worse in practice than $\widetilde{\Pi}$, see Section A.2 in the supplement. This is potentially due to the discrete constraint $\gamma_i \in \{0,1\}$ for these two families, which renders the highly non-convex optimization problems \eqref{def: VB2} difficult to solve.

\subsection{Design matrix}\label{sec:design_matrix}

The parameter $\theta$ in model \eqref{eq:model} is not estimable without further conditions on the regression matrix $X$. For the high-dimensional case $p>n$, which is of most interest to us, $\theta$ is not even identifiable without additional assumptions. We thus assume that there is some ``true'' sparse $\theta_0$ generating the observation \eqref{eq:model} with at most $s_n$ non-zero coefficients:
\begin{align*}
\theta_0\in \{\theta:\, \#(j:\, \theta_j\neq 0)\leq s_n\},\quad \text{for some }s_n = o(n).
\end{align*}
In the sparse setting, it suffices for estimation to have `local invertibility' of the Gram matrix $X^T X$. The notion of invertibility can be made more precise using the following definitions, which are based on the sparse high-dimensional literature (e.g. \cite{buhlmann:2011}), and have been adapted to the Bayesian setting in \cite{castillo:2015}. We provide only a brief description, referring the interested reader to Section 2.2 of \cite{castillo:2015} for further discussion.
\begin{definition}[Compatibility]\label{def:compat}
A model $S\subseteq \{1,\dots,p\}$ has \emph{compatibility number}
$$\phi(S) = \inf \left\{ \frac{\|X\theta\|_2 |S|^{1/2}}{\|X\|\|\theta_S\|_1} : \|\theta_{S^c}\|_1 \leq 7 \|\theta_S\|_1, \theta_S \neq 0 \right\}.$$
\end{definition}
A model is considered `compatible' if $\phi(S)>0$, in which case $\|X\theta\|_2|S|^{1/2}\geq \phi(S) \|X\| \||\theta_S\|_1$ for all $\theta$ in the above set. The number 7 is not important and is taken in Definition 2.1 of \cite{castillo:2015} to provide a specific numerical value; since we use several results from \cite{castillo:2015}, we employ the same convention. The compatibility number does not directly require sparsity, but reduces the problem to approximate sparsity by considering only vectors $\theta$ whose coordinates are small outside $S$. Conversely, the following two definitions deal only with sparse vectors.

\begin{definition}[Uniform compatibility for sparse vectors]\label{def:unif_compat}
The compatibility number for vectors of dimension $s$ is
$$\overline{\phi}(s) = \inf \left\{ \frac{\|X\theta\|_2 |S_\theta|^{1/2}}{\|X\| \|\theta\|_1}: 0 \neq |S_\theta| \leq s \right\}.$$
\end{definition}

\begin{definition}[Smallest scaled sparse singular value]\label{def:ssssv}
The \textit{smallest scaled sparse singular value} of dimension $s$ is
\begin{align*}
\widetilde{\phi}(s) := \inf \left\{ \frac{\|X\theta\|_2}{\|X\|\|\theta\|_2} : 0 \neq |S_\theta| \leq s \right\}.
\end{align*}
\end{definition}
We shall require that these numbers are bounded away from zero for $s$ a multiple of the true model size. If $\|X\| =1$, then $\widetilde{\phi}(s)$ is simply the smallest scaled singular value of a submatrix of $X$ of dimension $s$. Note that Definitions \ref{def:compat}-\ref{def:ssssv} are Definitions 2.1-2.3 of \cite{castillo:2015}. Such compatibility conditions are standard for sparse recovery problems, see Sections 6.13 and 7.15 of \cite{buhlmann:2011} for further discussion.

These compatibility type constants are bounded away from zero for many standard design matrices, such as diagonal matrices, orthogonal designs, i.i.d. (including Gaussian) random matrices and matrices satisfying the `strong irrepresentability condition' of \cite{zhao:2006}. Details of these examples are provided in Section D in the supplement.

\section{Main results}\label{sec:main_results}

We now provide the main theoretical results of this paper concerning the frequentist behaviour of the VB posterior $\widetilde{\Pi}$ in the asymptotic regime $n,p\rightarrow \infty$. While the results are obtained assuming Gaussian noise in model \eqref{eq:model}, they are in fact robust to misspecification of the error distribution, see Remark B.1 in Section B. This robustness to misspecification is reflected in practice, see Section A.4 in the supplement for numerical results.

Our first result establishes contraction rates for the VB posterior to a sparse truth in $\ell_1$-loss, $\ell_2$-loss and \textit{prediction error} $\|X(\theta - \theta_0)\|_2$. Apart from the sparsity level, the rate also depends on compatibility. For $M>0$, set
\begin{equation}\label{tilde_compat}
\begin{split}
& \overline{\psi}_{M} (S) = \overline{\phi} \left( \left( 2 + \frac{4M}{A_4}\left( 1+ \frac{16}{\phi(S)^2}\frac{\lambda}{\bar{\lambda}} \right) \right)|S| \right),\\
& \widetilde{\psi}_{M} (S) = \widetilde{\phi} \left( \left( 2 + \frac{4M}{A_4}\left( 1+ \frac{16}{\phi(S)^2}\frac{\lambda}{\bar{\lambda}} \right) \right)|S| \right).
\end{split}
\end{equation}
In the natural case $\lambda \ll \bar{\lambda}$, these constants are asymptotically bounded from below by $\overline{\phi} ((2 + \tfrac{4M}{A_4})|S|)$ and $\widetilde{\phi} ((2 + \tfrac{4M}{A_4})|S|)$ if $\phi(S)$ is bounded away from zero. Our results are uniform over vectors in sets of the form
\begin{equation}
\Theta_{\rho_n,s_n}:=\{ \theta\in \mathbb{R}^{p}:\,\phi(S_0) \geq c_0, \quad |S_0| \leq s_n, \quad  \widetilde{\psi}_{\rho_n}(S_0) \geq c_0\},\label{eq: assump:compatibility}
\end{equation}
for $S_0 = S_{\theta_0}$, $s_n \geq 1$, $c_0>0$ and $\rho_n \to \infty$ (arbitrarily slowly).

\begin{theorem}[Recovery]\label{thm:recovery}
Suppose the model selection prior \eqref{eq:prior} satisfies \eqref{eq:prior_cond}, \eqref{prior_lambda} and $\lambda =O(\|X\| \sqrt{\log p}/s_n)$. Then the variational Bayes posterior $\widetilde{\Pi}$ satisfies, with $S_0 = S_{\theta_0}$,
\begin{align*}
& \sup_{\theta_0\in\Theta_{\rho_n,s_n}}  E_{\theta_0} \widetilde{\Pi} \left( \theta: \|X(\theta-\theta_0)\|_2\geq \frac{M\rho_n^{1/2} }{\overline{\psi}_{\rho_n}(S_0)} \frac{\sqrt{|S_0|\log p}}{\phi(S_0)} \right) \to 0,
\end{align*}
$$\sup_{\theta_0\in\Theta_{\rho_n,s_n}} E_{\theta_0} \widetilde{\Pi} \left( \theta : \|\theta-\theta_0\|_1 > \frac{M\rho_n }{\overline{\psi}_{\rho_n}(S_0)^2} \frac{|S_0| \sqrt{\log p}}{\|X\| \phi(S_0)^2}  \right) \to 0,$$
$$\sup_{\theta_0\in\Theta_{\rho_n,s_n}} E_{\theta_0} \widetilde{\Pi} \left( \theta : \|\theta-\theta_0\|_2 > \frac{M\rho_n^{1/2} }{\|X\| \widetilde{\psi}_{\rho_n}(S_0)^2} \frac{\sqrt{|S_0|\log p}}{\phi(S_0)}   \right) \to 0 $$
for any $\rho_n \to \infty$ (arbitrarily slowly), $\Theta_{\rho_n,s_n}$ defined in \eqref{eq: assump:compatibility} and where $M>0$ depends only on the prior. Moreover, the same holds true for the variational Bayes posteriors $\hat{Q}$ and $\widetilde{Q}$.
\end{theorem}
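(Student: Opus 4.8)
The plan is to exploit the variational characterisation of $\widetilde{\Pi}$ directly via an evidence-lower-bound (oracle) argument, since, as the authors note, the prior-mass-and-testing approach fails for model selection priors. Writing $\ell(\theta) = -\tfrac12\|Y-X\theta\|_2^2$ for the Gaussian log-likelihood, minimising $\KL(P\|\Pi(\cdot|Y))$ over $P\in\mathcal{P}_{MF}$ is equivalent to minimising
\[ \Psi(P) = \int [\ell(\theta_0)-\ell(\theta)]\,dP(\theta) + \KL(P\|\Pi), \]
as the two differ by the log-evidence $\log\int e^{\ell(\theta)}\Pi(d\theta)$, which does not depend on $P$. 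Since $Y = X\theta_0 + Z$, expanding the square gives $\ell(\theta_0)-\ell(\theta) = \tfrac12\|X(\theta-\theta_0)\|_2^2 - Z^TX(\theta-\theta_0)$, so that
\[ \Psi(P) = \tfrac12\int\|X(\theta-\theta_0)\|_2^2\,dP(\theta) - Z^TX(m_P-\theta_0) + \KL(P\|\Pi), \]
where $m_P = \int\theta\,dP(\theta)$; crucially the noise term is linear in $\theta$ and so depends on $P$ only through $m_P$.

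First I would bound $\Psi(\widetilde{\Pi})$ from above. As $\widetilde{\Pi}$ is the minimiser, $\Psi(\widetilde{\Pi})\le\Psi(P^*)$ for any reference $P^*\in\mathcal{P}_{MF}$; I would take $P^*$ with inclusion weights $\gamma_i=1$ and means $\mu_i=\theta_{0,i}$ on $S_0$, a small slab variance, and $\gamma_i=0$ off $S_0$. This makes the prediction term $O(s_0)$ and annihilates the noise term (as $m_{P^*}=\theta_0$), reducing the bound to the prior-mass term $\KL(P^*\|\Pi)$. Evaluating the latter is essentially a calculation: the model-selection cost of selecting $S_0$ is of order $s_0\log p$ by the prior condition \eqref{eq:prior_cond}, while the per-coordinate slab mismatch $\KL(N(\theta_{0,i},\sigma^2)\|\mathrm{Lap}(\lambda))$ and the contribution $\lambda\|\theta_{0,S_0}\|_1$ are kept of order $s_0\log p$ using the range \eqref{prior_lambda} and the hypothesis $\lambda = O(\|X\|\sqrt{\log p}/s_n)$. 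This yields $\Psi(\widetilde{\Pi})\lesssim s_0\log p$ on the relevant event.

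Next I would combine this with the nonnegativity $\KL(\widetilde{\Pi}\|\Pi)\ge 0$, while keeping the $\KL$ term on the left to control dimension, to obtain
\[ \tfrac12\int\|X(\theta-\theta_0)\|_2^2\,d\widetilde{\Pi}(\theta) + \KL(\widetilde{\Pi}\|\Pi) \le Z^TX(m_{\widetilde{\Pi}}-\theta_0) + C s_0\log p. \]
The noise term is at most $\|X^TZ\|_\infty\|m_{\widetilde{\Pi}}-\theta_0\|_1 \le \bar{\lambda}\|m_{\widetilde{\Pi}}-\theta_0\|_1$ on the event $\{\|X^TZ\|_\infty\le\bar{\lambda}\}$, which holds with probability tending to one by a Gaussian maximal inequality (each $X_{\cdot i}^TZ$ is centred Gaussian with variance at most $\|X\|^2$). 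I expect this to be the crux: $\|m_{\widetilde{\Pi}}-\theta_0\|_1$ is itself a target quantity, so the prediction error, the $\ell_1$ error and the effective dimension must be disentangled simultaneously, presumably by a bootstrapping argument. The retained $\KL$ term is the lever, as its mean-field decomposition shows that putting mass on, or $\ell_1$-mass off, $S_0$ is penalised at rate $\log p$ per coordinate, which both caps the effective dimension at a multiple of $s_0$ and bounds the off-support mass $\sum_{i\notin S_0}\gamma_i|\mu_i|$. This verifies a cone condition so that the compatibility inequality of Definition \ref{def:compat} gives $\|m_{\widetilde{\Pi}}-\theta_0\|_1\lesssim s_0^{1/2}\|X(m_{\widetilde{\Pi}}-\theta_0)\|_2/(\phi(S_0)\|X\|)$; bounding $\|X(m_{\widetilde{\Pi}}-\theta_0)\|_2^2\le\int\|X(\theta-\theta_0)\|_2^2\,d\widetilde{\Pi}$ by Jensen and substituting yields a quadratic inequality in this integral, which I would solve by an AM--GM step to close the loop and obtain the integrated prediction bound of order $s_0\log p/\phi(S_0)^2$.

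Finally I would pass to the posterior-probability statements. A Markov inequality applied to the integrated prediction bound gives the first assertion, with the slowly growing factor $\rho_n$ quantifying the slack in the effective-dimension bound and in the exceptional probabilities, and thereby propagating into the rates and the constants $\overline{\psi}_{\rho_n}$, $\widetilde{\psi}_{\rho_n}$. Restricting to the event that $\widetilde{\Pi}$ charges only models of size $\lesssim\rho_n s_0$ (a consequence of the dimension control), for such sparse $\theta$ the smallest scaled sparse singular value and uniform compatibility numbers of Definitions \ref{def:unif_compat}--\ref{def:ssssv} convert the prediction error pointwise into the $\ell_2$ and $\ell_1$ errors, yielding the remaining two rates. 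The extension to $\hat{Q}$ and $\widetilde{Q}$ requires no new ideas: the upper bound transfers by choosing the reference $P^*$ inside $\mathcal{Q}_{MF}\subseteq\mathcal{P}_{MF}$ (which already enforces $\gamma_i\in\{0,1\}$), while the lower-bound argument used only the fundamental inequality, the noise event and compatibility, none of which are sensitive to whether the family permits correlations ($\mathcal{Q}$) or enforces the discrete constraint; one re-runs the three steps with $\widetilde{\Pi}$ replaced by $\hat{Q}$ or $\widetilde{Q}$.
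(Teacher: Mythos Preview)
Your approach is genuinely different from the paper's. The paper does \emph{not} work directly with the ELBO; instead it proves a general transfer principle (Theorem~\ref{thm: general:VB}), based on the KL duality formula, which says that if the \emph{true} posterior satisfies $E_{\theta_0}\Pi(\Theta_n|Y)1_A\le Ce^{-\delta_n}$ then $E_{\theta_0}Q(\Theta_n)1_A\lesssim \delta_n^{-1}[E_{\theta_0}\KL(Q\|\Pi(\cdot|Y))1_A + Ce^{-\delta_n/2}]$. The exponential posterior bounds are imported from Castillo et al.~\cite{castillo:2015}, so the remaining work is to bound $\KL(\widetilde{\Pi}\|\Pi(\cdot|Y))$ (not $\KL(\widetilde{\Pi}\|\Pi)$). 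This is done via a localisation argument: on a high-probability event the posterior already concentrates near $\theta_0$, so one can find a set $\tilde S$ with $|\tilde S|\lesssim s_0$, $\hat q_{\tilde S}\gtrsim p^{-Cs_0}$ and $\|\theta_{0,\tilde S^c}\|_2$ small, and then bound $\KL(N_{\tilde S}(\hat\theta_{\tilde S},(X_{\tilde S}^TX_{\tilde S})^{-1})\otimes\delta_{\tilde S^c}\|\Pi(\cdot|Y))$. Because one compares to the \emph{posterior} restricted to $\tilde S$, the Laplace penalty enters only as $\lambda E(\|\theta_{\tilde S}\|_1-\|\theta_{0,\tilde S}\|_1)$, a difference that is controlled by $\lambda$ times a local quantity; the potentially unbounded $\lambda\|\theta_0\|_1$ never appears.

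This is exactly where your argument has a gap. You claim $\KL(P^*\|\Pi)\lesssim s_0\log p$ by asserting that ``the contribution $\lambda\|\theta_{0,S_0}\|_1$ [is] kept of order $s_0\log p$ using \eqref{prior_lambda} and $\lambda=O(\|X\|\sqrt{\log p}/s_n)$''. This is false: nothing in the hypotheses bounds $\|\theta_0\|_1$, so $\Psi(P^*)$ and hence your upper bound on $\Psi(\widetilde\Pi)$ are not $O(s_0\log p)$ uniformly over $\Theta_{\rho_n,s_n}$. The argument is repairable, but not as written: you must keep $\KL(\widetilde\Pi\|\Pi)$ and $\KL(P^*\|\Pi)$ together and exploit that their $\lambda$-terms differ by at most $\lambda\|m_{\widetilde\Pi}-\theta_0\|_1$ (plus spread terms), which then gets absorbed into the $(\bar\lambda+\lambda)\|m_{\widetilde\Pi}-\theta_0\|_1$ you already carry. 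That in turn means you cannot simply drop $\KL(\widetilde\Pi\|\Pi)$ to $0$ and still claim the $s_0\log p$ bound; the dimension/cone control and the $\ell_1$ cancellation have to be extracted from the \emph{same} retained KL term, which makes the ``bootstrapping'' step you flag as the crux considerably more delicate than sketched. The paper's route sidesteps all of this by paying the price of needing exponential contraction for the true posterior, which is available here from \cite{castillo:2015}; your direct ELBO route would have the advantage of not relying on such prior posterior theory, but at the cost of the careful simultaneous cancellation just described.
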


Theorem \ref{thm:recovery} follows directly from the oracle type Theorem \ref{thm:oracle_recovery} below upon setting $\theta_* = \theta_0$. Recall that we are working under the frequentist model where there is a ``true" $\theta_0$ generating data $Y$ of the form \eqref{eq:model}. Since the above rates equal the minimax estimation rates over $|S_0|$-sparse vectors, Theorem \ref{thm:recovery} states that the VB posterior puts most of its mass in a neighbourhood of optimal size around the truth with high $P_{\theta_0}$-probability in terms of $\ell_1$, $\ell_2$ and prediction loss. Thus for estimating $\theta_0$, the VB approximation behaves optimally from a theoretical frequentist perspective. This backs up the empirical evidence that VB can provide excellent scalable estimation.

The VB posterior mean often provides a good point estimator and the VB posterior is known to typically underestimate the marginal posterior variance (see e.g. \cite{blei:2016} - this is a result of using the KL divergence as optimization criterion). The combination of good centering point and the posterior shrinking at least as fast as the true posterior explains why the VB posterior still provides optimal recovery, despite the loss of information from using a mean-field approximation.

Since the prior and variational family do not depend on the unknown sparsity level $|S_0|$ and the VB estimate contracts around the truth at the minimax rate, the procedure is \textit{adaptive}. That is, the procedure can recover an $|S_0|$-sparse truth nearly as well as if we knew the exact level of sparsity of the unknown $\theta_0$. However, the choice of tuning parameters still has an effect on the finite-sample performance, see Section A.3 for a numerical investigation of the effect of the hyper-parameter $\lambda$. Note that Theorem \ref{thm:recovery} does not imply that the VB posterior $\widetilde{\Pi}$ converges to the true posterior $\Pi(\cdot|Y)$. Indeed, this is neither a typical situation nor a necessary property since the VB estimate should be substantially simpler than the true posterior to be useful.

Theorem \ref{thm:recovery} implies the variational families $\mathcal{Q}$ and $\mathcal{Q}_{MF}$ also provide optimal asymptotic estimation of $\theta_0$ in $\ell_1$, $\ell_2$ and prediction loss. {However, the corresponding optimization routine seems to yield worse performance in practice, see Section A.2.

An important motivation for using model selection priors is their ability to perform variable selection. The following result shows that the variational approximation puts most of its mass on models of size at most a multiple of the true dimension, thereby bounding the number of false positives.

\begin{theorem}[Dimension]\label{thm:dim}
Suppose the model selection prior \eqref{eq:prior} satisfies \eqref{eq:prior_cond}, \eqref{prior_lambda} and $\lambda =O(\|X\| \sqrt{\log p}/s_n)$. Then the variational Bayes posterior $\widetilde{\Pi}$ satisfies, with $S_0 = S_{\theta_0}$,
\begin{align*}
& \sup_{\theta_0\in\Theta_{\rho_n,s_n}}  E_{\theta_0} \widetilde{\Pi} \left( \theta: |S_\theta| \geq |S_0| + M\rho_n \left(1 +\tfrac{16}{\phi(S_0)^2}\tfrac{\lambda}{\bar{\lambda}} \right)|S_0|   \right) \to 0,
\end{align*}
for any $\rho_n \to \infty$ (arbitrarily slowly), $\Theta_{\rho_n,s_n}$ defined in \eqref{eq: assump:compatibility} and where $M>0$ depends only on the prior. Moreover, the same holds true for the variational Bayes posteriors $\hat{Q}$ and $\widetilde{Q}$.
\end{theorem}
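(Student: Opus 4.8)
The plan is to exploit the variational characterisation of $\widetilde\Pi$ and to reduce the statement to a bound on the expected model size $E_{\widetilde\Pi}|S_\theta| = \sum_{i=1}^p \gamma_i$, which a concentration argument then upgrades to the claimed tail bound. Writing $\ell_n(\theta) = \log p_\theta(Y)$ and using $d\Pi(\cdot|Y)\propto e^{\ell_n(\theta)}d\Pi(\theta)$, one checks that for every $Q\in\mathcal{P}_{MF}$
\[
\KL(Q\|\Pi(\cdot|Y)) = \int\big(\ell_n(\theta_0)-\ell_n(\theta)\big)\,dQ(\theta) + \KL(Q\|\Pi) + C(Y),
\]
with $C(Y)$ independent of $Q$, so that $\widetilde\Pi$ minimises the objective $\Psi(Q):=\int(\ell_n(\theta_0)-\ell_n(\theta))dQ+\KL(Q\|\Pi)$ defining \eqref{def: var}. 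In the Gaussian model $\ell_n(\theta_0)-\ell_n(\theta)=\tfrac12\|X(\theta-\theta_0)\|_2^2 - Z^TX(\theta-\theta_0)$, whence $\Psi(Q)=\tfrac12\int\|X(\theta-\theta_0)\|_2^2\,dQ - Z^TX(\bar\theta_Q-\theta_0)+\KL(Q\|\Pi)$ with $\bar\theta_Q$ the mean of $Q$.

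The first step is an upper bound on $\Psi(\widetilde\Pi)$. By optimality $\Psi(\widetilde\Pi)\le\Psi(Q^*)$ for any competitor, and I would take $Q^*\in\mathcal{P}_{MF}$ supported on $S_0$ and concentrated near $\theta_0$ (small $\sigma_i$, $\mu_i\approx\theta_{0,i}$, $\gamma_i\approx1$ on $S_0$ and $\gamma_i\approx0$ off $S_0$). Then the prediction term is negligible, the cross term is controlled on the event $\{\|X^TZ\|_\infty\lesssim\bar\lambda\}$ (which has $P_{\theta_0}$-probability tending to one by a Gaussian maximal inequality and \eqref{X}), and $\KL(Q^*\|\Pi)$ splits into a model-selection cost of order $|S_0|\log p$ coming from the combinatorial factor and \eqref{eq:prior_cond}, plus a slab term $\sum_{i\in S_0}\KL(N(\mu_i,\sigma_i^2)\|\text{Lap}(\lambda))$ whose Laplace log-density contributes $\lambda\|\theta_{0,S_0}\|_1$. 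Using compatibility to bound $\|\theta_{0,S_0}\|_1$ by $\|X\|\sqrt{|S_0|}/\phi(S_0)$ and the prediction rate, this produces $\Psi(\widetilde\Pi)\lesssim |S_0|\log p\,(1+\tfrac{16}{\phi(S_0)^2}\tfrac{\lambda}{\bar\lambda})$ on a high-probability event; this is precisely the master bound already established in proving Theorem \ref{thm:recovery} via the oracle Theorem \ref{thm:oracle_recovery}, so I would reuse it.

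The second step converts this into a bound on $\sum_i\gamma_i$. Since the prediction term in $\Psi$ is nonnegative and the cross term $Z^TX(\bar\theta_{\widetilde\Pi}-\theta_0)$ is absorbed into it by the basic inequality on the event $\{\|X^TZ\|_\infty\lesssim\bar\lambda\}$ together with compatibility, the likelihood part of $\Psi(\widetilde\Pi)$ is bounded below, and hence $\KL(\widetilde\Pi\|\Pi)\lesssim|S_0|\log p(1+\tfrac{16}{\phi(S_0)^2}\tfrac{\lambda}{\bar\lambda})$. On the other hand, exploiting the (near-)product structure of the prior, the mixture KL factorises coordinatewise into terms $\gamma_i\log\tfrac{\gamma_i}{r}+(1-\gamma_i)\log\tfrac{1-\gamma_i}{1-r}+\gamma_i\KL(N(\mu_i,\sigma_i^2)\|\text{Lap}(\lambda))$, and the penalty $\log(1/r)\gtrsim\log p$ enforced by \eqref{eq:prior_cond} gives the lower bound $\KL(\widetilde\Pi\|\Pi)\ge c(\log p)\sum_i\gamma_i-$(lower order). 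Comparing the two estimates yields $\sum_i\gamma_i\lesssim(1+\tfrac{16}{\phi(S_0)^2}\tfrac{\lambda}{\bar\lambda})|S_0|$.

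Finally, under $\widetilde\Pi=P_{\mu,\sigma,\gamma}$ the dimension $|S_\theta|=\sum_iB_i$ is a sum of independent $\mathrm{Ber}(\gamma_i)$ variables with mean $\sum_i\gamma_i$, so the claimed tail on $\{|S_\theta|\ge|S_0|+M\rho_n(1+\ldots)|S_0|\}$ follows from a Chernoff bound; since $\rho_n\to\infty$ a crude Markov inequality already suffices, the $\rho_n$ factor providing the necessary slack. Integrating over the high-probability noise event handles the outer expectation $E_{\theta_0}$. For $\hat Q$ and $\widetilde Q$ the argument is in fact simpler, as their single fixed support makes $|S_\theta|$ deterministic and removes the concentration step. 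The main obstacle is the second step: one must track the constants from \eqref{eq:prior_cond} and the non-product combinatorics of the model-selection prior to obtain a KL lower bound with a strictly positive coefficient on $\sum_i\gamma_i$, while ensuring that absorbing the Gaussian cross term does not silently reintroduce a comparable dimension-dependent term on the other side — it is the careful bookkeeping of these competing contributions, and of the $\lambda/\bar\lambda$ correction, that makes the estimate delicate.
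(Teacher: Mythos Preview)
Your strategy differs fundamentally from the paper's. The paper does not attempt to bound $\sum_i\gamma_i$ directly; instead it deduces Theorem~\ref{thm:dim} from the oracle Theorem~\ref{thm:oracle_dim} (with $\theta_*=\theta_0$), which is proved by the general transfer principle of Theorem~\ref{thm: general:VB}. That result uses the Donsker--Varadhan duality formula to convert an exponential tail bound for the \emph{true} posterior, $E_{\theta_0}\Pi(|S_\theta|\ge\cdots\,|\,Y)1_{\mathcal T_0}\le Ce^{-\delta_n}$ (Lemma~\ref{lem: support}, from \cite{castillo:2015}), into a bound on $E_{\theta_0}\widetilde\Pi(|S_\theta|\ge\cdots)$ via the quantity $\KL(\widetilde\Pi\|\Pi(\cdot|Y))$, which is controlled separately in Lemmas~\ref{lem:KL_non_diag}--\ref{lem:KL_MF}. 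Neither $\KL(\widetilde\Pi\|\Pi)$ nor $\sum_i\gamma_i$ appears in the argument.

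There is a genuine gap in your second step. Even for the product spike-and-slab prior with fixed inclusion probability $r$, the coordinatewise identity does \emph{not} deliver the lower bound you claim. One has $\KL(\mathrm{Ber}(\gamma)\|\mathrm{Ber}(r))=\gamma\log(1/r)-H(\gamma)+(1-\gamma)\log\tfrac{1}{1-r}$, so
\[
\KL(\widetilde\Pi\|\Pi)\ \ge\ \log(1/r)\sum_{i=1}^p\gamma_i-\sum_{i=1}^pH(\gamma_i)\ \ge\ \log(1/r)\sum_{i=1}^p\gamma_i - p\log 2,
\]
and the subtracted $p\log 2$ is not ``lower order'': in the sparse regime $|S_0|\log p\ll p$, so combining with your upper bound yields only $\sum_i\gamma_i\lesssim p/\log p$, which is useless. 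The entropy cost cannot be removed without first knowing that most $\gamma_i$ are close to $\{0,1\}$, which is essentially the dimension statement you are trying to prove. For the general prior \eqref{eq:prior} the coordinatewise factorisation itself fails, as you note but do not resolve.

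Your first step also has a gap. For $Q^*$ centred near $\theta_0$, the slab contribution to $\KL(Q^*\|\Pi)$ is of order $\lambda\|\theta_0\|_1$, and compatibility does not bound $\|\theta_0\|_1$ independently of the signal size: Definition~\ref{def:compat} controls $\|\theta_{0,S_0}\|_1$ only in terms of $\|X\theta_0\|_2$, which can be arbitrarily large. Hence $\Psi(Q^*)\lesssim|S_0|\log p$ is false in general. The paper avoids this by working with $\KL(\cdot\|\Pi(\cdot|Y))$ rather than $\KL(\cdot\|\Pi)$: in that quantity the Laplace prior factor appears in both numerator and denominator, and only differences $\|\theta\|_1-\|\theta_0\|_1$ enter (see \eqref{eq:KL_split}), which are controllable.
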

Theorem \ref{thm:dim} follows directly from the oracle type Theorem \ref{thm:oracle_dim} below upon setting $\theta_* = \theta_0$. In the interesting case $\lambda \ll \bar{\lambda}$, the factor in Theorem \ref{thm:dim} can be simplified to $(1+M\rho_n)$ if the true parameter is compatible. Note also that under the conditions of Theorems \ref{thm:recovery} and \ref{thm:dim}, it is not possible to consistently estimate the true support $S_{\theta_0}$ of $\theta_0$ since one cannot separate small and exactly zero signals.

Since the variational families $\mathcal{Q}$ and $\mathcal{Q}_{MF}$ contain only distributions with a single support set $S$, the last statement says the resulting VB posteriors will select such a set of size at most a multiple times $|S_{0}|$ with high $P_{\theta_0}$-probability. The VB estimates based on these two variational families perform model selection in a hard-thresholding manner, reporting only whether a variable is selected or not. On the other hand, the more flexible family $\mathcal{P}_{MF}$ quantifies the individual variable selection via the reported non-trivial inclusion probabilities $0\leq \gamma_i \leq1$, and in this regard provides a richer approximation of the target posterior. Information on pairwise variable inclusion is obviously lost given the mean-field nature of the approximation. Nevertheless, it is interesting to note that all these families still permit good estimation of $\theta_0$.

We now provide more refined \textit{oracle}-type versions of Theorems \ref{thm:recovery} and \ref{thm:dim} as are known to hold for the true posterior \cite{castillo:2015}.

\begin{theorem}[Oracle recovery]\label{thm:oracle_recovery}
Suppose the model selection prior \eqref{eq:prior} satisfies \eqref{eq:prior_cond}, \eqref{prior_lambda} and $\lambda =O(\|X\| \sqrt{\log p}/s_n)$. For $\theta_0 \in \R^p \backslash \{0\}$, let $\theta_*\in \R^p$ be any vector satisfying $1 \leq s_* = |S_{\theta_*}| \leq |S_{\theta_0}| = s_0$ and $\|X(\theta_0 - \theta_*)\|_2^2 \leq (s_0 - s_*) \log p.$
Then the variational Bayes posterior $\widetilde{\Pi}$ satisfies, for any $\theta_*$ as above,
\begin{align*}
& \sup_{\theta_0\in\Theta_{\rho_n,s_n}}  E_{\theta_0} \widetilde{\Pi} \left( \theta: \|X(\theta-\theta_0)\|_2\geq \frac{M\rho_n^{1/2} }{\overline{\psi}_{\rho_n}(S_0)} \left[ \frac{\sqrt{s_*\log p}}{\phi(S_*)} +\|X(\theta_0-\theta_*)\|_2 \right] \right) \to 0,
\end{align*}
$$\sup_{\theta_0\in\Theta_{\rho_n,s_n}} E_{\theta_0} \widetilde{\Pi} \left( \theta : \|\theta-\theta_0\|_1 > \| \theta_0 - \theta_*\|_1 + \frac{M\rho_n }{\overline{\psi}_{\rho_n}(S_0)^2} \left[ \frac{s_* \sqrt{\log p}}{\|X\| \phi(S_*)^2} +\frac{\|X(\theta_0-\theta_*)\|_2^2}{\|X\| \sqrt{\log p}} \right] \right) \to 0,$$
$$\sup_{\theta_0\in\Theta_{\rho_n,s_n}} E_{\theta_0} \widetilde{\Pi} \left( \theta : \|\theta-\theta_0\|_2 > \frac{M\rho_n^{1/2} }{\|X\| \widetilde{\psi}_{\rho_n}(S_0)^2} \left[ \frac{\sqrt{s_*\log p}}{\phi(S_*)} +\|X(\theta_0-\theta_*)\|_2 \right]  \right) \to 0 $$
for any $\rho_n \to \infty$ (arbitrarily slowly), $\Theta_{\rho_n,s_n}$ defined in \eqref{eq: assump:compatibility} and where $M>0$ depends only on the prior. Moreover, the same holds true for the variational Bayes posteriors $\hat{Q}$ and $\widetilde{Q}$.
\end{theorem}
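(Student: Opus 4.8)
The plan is to exploit that $\widetilde\Pi$ is the $\KL$-projection of the true posterior $\Pi(\cdot|Y)$ onto $\mathcal P_{MF}$, and to transfer the known concentration of $\Pi(\cdot|Y)$ (from \cite{castillo:2015}) across this projection. Writing $\Lambda(\theta)=\log\frac{p_\theta(Y)}{p_{\theta_0}(Y)}=Z^TX(\theta-\theta_0)-\tfrac12\|X(\theta-\theta_0)\|_2^2$, recall the identity $\KL(Q\|\Pi(\cdot|Y))=\KL(Q\|\Pi)-\int\Lambda\,dQ+\log\int e^{\Lambda}\,d\Pi$, valid for every $Q$. Each of the three assertions reduces to showing $E_{\theta_0}\widetilde\Pi(A)\to0$ for the corresponding ``bad set'' $A$ (in prediction, $\ell_1$ or $\ell_2$ distance). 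Fix such an $A$. Applying the data-processing inequality to the partition $\{A,A^c\}$ and then the optimality of $\widetilde\Pi$ over $\mathcal P_{MF}$ gives, for any reference $\bar Q\in\mathcal P_{MF}$,
$$\mathrm{kl}\big(\widetilde\Pi(A)\,\|\,\Pi(A|Y)\big)\le \KL(\widetilde\Pi\|\Pi(\cdot|Y))\le \KL(\bar Q\|\Pi(\cdot|Y)),$$
where $\mathrm{kl}$ is the binary $\KL$ divergence. Since $\mathrm{kl}(a\|b)\ge a\log(1/b)-\log2-1/e$, this yields the master bound
$$\widetilde\Pi(A)\le \frac{\KL(\bar Q\|\Pi(\cdot|Y))+\log 2+1/e}{\log(1/\Pi(A|Y))}.$$
It therefore suffices to upper bound the numerator using a good reference and to lower bound $\log(1/\Pi(A|Y))$ using the true posterior.

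For the numerator I would take $\bar Q\in\mathcal Q_{MF}\subset\mathcal P_{MF}$ supported exactly on $S_*$, with Gaussian slabs $N(\theta_{*,i},\sigma_i^2)$ for $i\in S_*$ and suitably small variances $\sigma_i^2$. Using the identity above, $\KL(\bar Q\|\Pi(\cdot|Y))=\KL(\bar Q\|\Pi)-\int\Lambda\,d\bar Q+\log\int e^{\Lambda}\,d\Pi$. The first term splits into a model-selection penalty, bounded by $\log\big(\binom{p}{s_*}/\pi_p(s_*)\big)\lesssim s_*\log p$ via \eqref{eq:prior_cond}, plus a sum of Gaussian-versus-Laplace slab divergences, controlled by $\lambda\|\theta_{*,S_*}\|_1$ and logarithmic terms after optimizing over $\sigma_i^2$. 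The middle term equals $-Z^TX(\theta_*-\theta_0)+\tfrac12\|X(\theta_*-\theta_0)\|_2^2$ up to a variance correction $\tfrac12\sum_{i\in S_*}\sigma_i^2(X^TX)_{ii}$; here the oracle hypothesis $\|X(\theta_0-\theta_*)\|_2^2\le(s_0-s_*)\log p$ controls the quadratic part and a single Gaussian tail bound controls the linear part. The last term satisfies $\log\int e^{\Lambda}\,d\Pi\le\log(1/\delta_n)$ on an event of probability $\ge1-\delta_n$, since $E_{\theta_0}\int e^{\Lambda}\,d\Pi=1$; taking $\delta_n\to0$ slowly makes this negligible. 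Altogether, on a high-probability event, $\KL(\bar Q\|\Pi(\cdot|Y))\lesssim \tfrac{s_*\log p}{\phi(S_*)^2}+\|X(\theta_0-\theta_*)\|_2^2$.

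For the denominator I would establish a refined exponential bound on the true posterior mass of the enlarged bad set, of the form $\Pi(A|Y)\le e^{-c\rho_n(s_*\log p/\phi(S_*)^2+\|X(\theta_0-\theta_*)\|_2^2)}$ on a high-probability event. This is the quantitative strengthening of the contraction results of \cite{castillo:2015}: combining their construction of exponentially powerful tests (whose errors decay like $e^{-c\|X(\theta-\theta_0)\|_2^2}$) with a lower bound on the evidence $\int e^{\Lambda}\,d\Pi\ge e^{-Cs_0\log p}$ and summation over models against the prior dimension penalty yields posterior tails decaying like $e^{-cR^2}$ in the prediction norm beyond the contraction radius; evaluating at radius $R^2\asymp\rho_n(s_*\log p/\phi(S_*)^2+\|X(\theta_0-\theta_*)\|_2^2)$ gives the claim, and the compatibility and smallest-scaled-sparse-singular-value numbers of Definitions \ref{def:compat}--\ref{def:ssssv} convert the prediction-norm statement into the $\ell_1$ and $\ell_2$ statements, producing the factors $\overline\psi_{\rho_n}(S_0)$ and $\widetilde\psi_{\rho_n}(S_0)$. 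Inserting the numerator and denominator bounds into the master inequality gives $\widetilde\Pi(A)\lesssim 1/\rho_n$ on the intersection of the relevant high-probability events, and since $\rho_n\to\infty$ while the complementary event has vanishing probability, $E_{\theta_0}\widetilde\Pi(A)\to0$.

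Finally, because the reference $\bar Q$ lies in $\mathcal Q_{MF}\subset\mathcal Q$ as well, the identical argument applies verbatim to $\widetilde Q$ and $\hat Q$, giving the last sentence of the theorem. The main obstacle I anticipate is the denominator: extending \cite{castillo:2015} from ``posterior mass $\to0$ at the critical radius'' to an explicit super-polynomially small bound at the $\rho_n$-enlarged radius, with the correct compatibility dependence, since this is precisely the reservoir of smallness that the mean-field projection consumes --- the whole argument works only because the numerator is of order $s_*\log p$ whereas the denominator is of the strictly larger order $\rho_n\,s_*\log p$. A secondary technical point is the careful choice of slab variances $\sigma_i^2$ balancing the Gaussian-versus-Laplace divergence against the variance correction appearing in $\int\Lambda\,d\bar Q$.
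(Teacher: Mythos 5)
Your overall architecture matches the paper's: a transfer principle converting a bound on $\KL(\widetilde\Pi\|\Pi(\cdot|Y))$ together with an exponentially small posterior mass $\Pi(A|Y)\leq e^{-\delta_n}$ into $\widetilde\Pi(A)\lesssim \KL/\delta_n$ (your data-processing/binary-KL master bound is equivalent to the paper's Theorem \ref{thm: general:VB}, proved there via the Donsker--Varadhan duality with $f=\tfrac{\delta_n}{2}1_{\Theta_n}$), and your denominator step is exactly what the paper's Lemma \ref{lem: contraction:spike:slab} supplies by re-deriving the oracle contraction of \cite{castillo:2015} with explicit exponential tails. The genuine gap is in your numerator bound.

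Your KL estimate fails because of the term $\lambda\|\theta_{*,S_*}\|_1$, which you correctly note arises in $\KL(\bar Q\|\Pi)$ (it is the $\bar Q$-mean of the negative log-Laplace density) but which then silently disappears from your conclusion $\KL(\bar Q\|\Pi(\cdot|Y))\lesssim s_*\log p/\phi(S_*)^2+\|X(\theta_0-\theta_*)\|_2^2$. This term is not bounded by that right-hand side: the supremum in the theorem runs over $\theta_0\in\Theta_{\rho_n,s_n}$, which constrains sparsity and compatibility but not signal magnitude, so taking for instance $\theta_*=\theta_0$ with a single enormous coefficient makes $\lambda\|\theta_*\|_1$ arbitrarily large while $s_*\log p+\|X(\theta_0-\theta_*)\|_2^2=\log p$; the resulting ratio $\lambda\|\theta_*\|_1/(\rho_n\log p)$ need not vanish. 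In the exact posterior this term is cancelled by the factor $e^{-\lambda\|\theta\|_1}\approx e^{-\lambda\|\theta_0\|_1}$ hidden inside the evidence: $\log\int e^{\Lambda}d\Pi$ is itself of order $-\lambda\|\theta_0\|_1+O(s_0\log p)$ for large signals. Your Markov bound $\log\int e^{\Lambda}d\Pi\leq\log(1/\delta_n)$ is valid but far too crude in exactly this regime, and it destroys the cancellation; a matching upper bound on the evidence of the form $-\lambda\|\theta_0\|_1+O(s_0\log p/\phi^2)$ would require controlling the likelihood integral over all models $S$, including large ones for which no compatibility is assumed, so it is not available directly. The paper circumvents this by making the reference distribution data-dependent and local: on the event $\mathcal{T}_1$ (posterior dimension at most $\Gamma$ and posterior $\ell_2$-localization at radius $\eps$, established first via Lemma \ref{lem: prob}), there exists a model $\tilde S$ with posterior weight $\hat q_{\tilde S}\geq(2e)^{-1}p^{-\Gamma}$; one pays $\log(1/\hat q_{\tilde S})\lesssim\Gamma\log p$ and then compares a Gaussian centered at the least-squares estimator on $\tilde S$ with the \emph{restricted} posterior $\Pi_{\tilde S}(\cdot|Y)$, in which the Laplace factor enters only through the difference $\lambda(\|\theta_{\tilde S}\|_1-\|\theta_{0,\tilde S}\|_1)\leq\lambda\|\theta_{\tilde S}-\theta_{0,\tilde S}\|_1$, controlled by the localization radius (Lemmas \ref{lem:KL_non_diag}--\ref{lem:KL_MF}). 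Your fixed, oracle-centered $\bar Q$ cannot reproduce this cancellation, and repairing your argument essentially requires importing the paper's localization-plus-restricted-model construction.
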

This can yield better rates than Theorem \ref{thm:recovery} for certain parameters and choices of $\theta_*$. For example, if $X = I$ is the identity matrix so that $\overline{\psi}_{\rho_n}(S) = \phi(S) = 1$ for all $S$, setting $\theta_* = 0$ yields
\begin{align*}
& \sup_{\theta_0} E_{\theta_0} \widetilde{\Pi} \left( \theta: \|\theta-\theta_0\|_2\geq M\rho_n^{1/2} \|\theta_0\|_2 \right) \to 0
\end{align*}
for any $\rho_n \to \infty$. If $\|\theta_0\|_2^2 \ll |S_0| \log p$, this improves upon the rate $\sqrt{|S_0|\log p}$ in Theorem \ref{thm:recovery} by accounting for the size of the coefficients of $\theta_0$ and not only its sparsity level.

The advantage of the oracle bound is it can take into account small non-zero coefficients of $\theta_0$ and capture its `effective sparsity'. If $S_* \subset S_0$, as one typically takes, the condition $\|X(\theta_0-\theta_*)\|_2^2 = \|X\theta_{0,S_*^c}\|_2^2\leq (s_0 - s_*) \log p$ implies that the coordinates of $\theta_0$ in $S_0 \backslash S_*$ contribute on average at most $\log p$ to the squared prediction error. Thus if the coefficient contributes less than $\log p$ to the squared prediction loss, it is preferable to assign it as bias rather than pay the full $\log p$ term required by the squared minimax rate $s_0 \log p$, which accounts only for sparsity irrespective of signal size.

\begin{theorem}[Oracle dimension]\label{thm:oracle_dim}
Suppose the model selection prior \eqref{eq:prior} satisfies \eqref{eq:prior_cond}, \eqref{prior_lambda},and $\lambda =O(\|X\| \sqrt{\log p}/s_n)$. For $\theta_0 \in \R^p \backslash \{0\}$, let $\theta_*\in \R^p$ be any vector satisfying $1 \leq s_* = |S_{\theta_*}| \leq |S_{\theta_0}| = s_0$ and $\|X(\theta_0 - \theta_*)\|_2^2 \leq (s_0 - s_*) \log p.$
Then the variational Bayes posterior $\widetilde{\Pi}$ satisfies, for any $\theta_*$ as above,
\begin{align*}
& \sup_{\theta_0\in\Theta_{\rho_n,s_n}}  E_{\theta_0} \widetilde{\Pi} \left( \theta: |S_\theta| \geq |S_*| + M\rho_n \left[  \left(1 +\tfrac{16}{\phi(S_*)^2}\tfrac{\lambda}{\bar{\lambda}} \right)|S_*|  + \tfrac{\|X(\theta_0-\theta_*)\|_2^2}{\log p} \right] \right) \to 0,
\end{align*}
for any $\rho_n \to \infty$ (arbitrarily slowly), $\Theta_{\rho_n,s_n}$ defined in \eqref{eq: assump:compatibility} and where $M>0$ depends only on the prior. Moreover, the same holds true for the variational Bayes posteriors $\hat{Q}$ and $\widetilde{Q}$.
\end{theorem}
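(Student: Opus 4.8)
The target is a tail bound on the model size under $\widetilde\Pi$, and the natural first move is to reduce it to a bound on the VB-expected dimension. Under any $P_{\mu,\sigma,\gamma}\in\mathcal P_{MF}$ one has $E|S_\theta|=\sum_{i=1}^p\gamma_i$, so writing $(\widetilde\gamma_i)$ for the inclusion probabilities of $\widetilde\Pi$ and applying Markov's inequality, $\widetilde\Pi(|S_\theta|\geq t)\leq t^{-1}\sum_i\widetilde\gamma_i$. It therefore suffices to show that, on an event of $P_{\theta_0}$-probability tending to one uniformly over $\Theta_{\rho_n,s_n}$,
\[
\sum_{i=1}^p\widetilde\gamma_i \;\leq\; |S_*| + C\left[\Big(1+\tfrac{16}{\phi(S_*)^2}\tfrac{\lambda}{\bar\lambda}\Big)|S_*| + \tfrac{\|X(\theta_0-\theta_*)\|_2^2}{\log p}\right]
\]
for a constant $C$ depending only on the prior; taking $t$ equal to the $\rho_n$-inflated right-hand side of the theorem and integrating against $E_{\theta_0}$ then yields a bound of order $1/\rho_n\to 0$. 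For $\hat Q$ and $\widetilde Q$ the support $S$ is deterministic, so $|S_\theta|=|S|$ and the same inequality for $|S|$ suffices directly.

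The engine is the evidence (ELBO) inequality. Writing $\Pi(d\theta\mid Y)\propto e^{\ell_Y(\theta)}\Pi(d\theta)$ with $\ell_Y(\theta)=-\tfrac12\|Y-X\theta\|_2^2$, the defining property of $\widetilde\Pi$ as the KL-projection onto $\mathcal P_{MF}$ gives $\KL(\widetilde\Pi\|\Pi)-\int\ell_Y\,d\widetilde\Pi\leq \KL(Q\|\Pi)-\int\ell_Y\,dQ$ for every $Q\in\mathcal P_{MF}$. Substituting $Y=X\theta_0+Z$ and cancelling the $Q$-free term $\tfrac12\|Z\|_2^2$ rearranges this to
\[
\tfrac12\!\int\!\|X(\theta-\theta_0)\|_2^2\,d\widetilde\Pi + \KL(\widetilde\Pi\|\Pi) \;\leq\; \tfrac12\!\int\!\|X(\theta-\theta_0)\|_2^2\,dQ + \KL(Q\|\Pi) + \!\int\!\langle Z,X(\theta-\theta_0)\rangle\,d(\widetilde\Pi-Q).
\]
I would take the competitor $\bar Q$ to be the mean-field Gaussian-slab law supported exactly on $S_*$, with means $\theta_{*,S_*}$ and a common slab variance of order $(\|X\|^2 s_*)^{-1}$; this lies in $\mathcal Q_{MF}\subset\mathcal P_{MF}$, so the same choice handles all three families at once. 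For $\bar Q$ the prediction term equals $\tfrac12\|X(\theta_*-\theta_0)\|_2^2$ up to an $O(1)$ variance contribution, while the decisive quantity $\KL(\bar Q\|\Pi)$ splits into a model-selection part, bounded by $\lesssim s_*\log p$ through the upper estimate in \eqref{eq:prior_cond}, a negligible Bernoulli-entropy part, and the Gaussian-versus-Laplace slab mismatch. The mismatch produces a term proportional to $\lambda\|\theta_{*,S_*}\|_1$, which compatibility of $S_*$ converts into the correction $\tfrac{16}{\phi(S_*)^2}\tfrac{\lambda}{\bar\lambda}\,s_*$ appearing in the statement.

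On the left I must convert $\KL(\widetilde\Pi\|\Pi)$ into the expected dimension. For any $P_{\mu,\sigma,\gamma}\in\mathcal P_{MF}$ the KL to the prior decomposes into coordinatewise Bernoulli-selection contributions plus nonnegative slab terms $\KL(N(\mu_i,\sigma_i^2)\|\text{Lap}(\lambda))$; discarding the latter and using the per-coordinate prior decay in \eqref{eq:prior_cond} (each additional coordinate costs at least of order $A_4\log p$ in $-\log$ prior) yields $\KL(P_{\mu,\sigma,\gamma}\|\Pi)\geq c\,(\log p)\sum_i\gamma_i - o(\cdots)$. Combining this with the ELBO inequality and the bound on $\KL(\bar Q\|\Pi)$, dropping the nonnegative prediction integral on the left and dividing by $\log p$, delivers exactly the displayed expected-dimension bound, hence the theorem via Markov.

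The main obstacle is the stochastic cross term $\int\langle Z,X(\theta-\theta_0)\rangle\,d(\widetilde\Pi-\bar Q)$: because $\widetilde\Pi$ is data-dependent one cannot fix $\theta$ and invoke a single Gaussian tail bound. For the $\bar Q$-part a fixed concentration inequality (with its $O(1)$ slab variance) suffices, but the $\widetilde\Pi$-part must be self-bounded, using $\langle Z,X(\theta-\theta_0)\rangle\leq \tfrac14\|X(\theta-\theta_0)\|_2^2 + \|P Z\|_2^2$, where $P$ projects onto the column span of the columns of $X$ indexed by $S_\theta\cup S_0$, and then controlling $\sup_S\|P_S Z\|_2^2$ over models $S$ of dimension a multiple of $s_0$ by a Gaussian-maximal / peeling-over-dimension argument — this is where the compatibility guaranteed on $\Theta_{\rho_n,s_n}$ up to dimension $\asymp\rho_n s_0$ is used. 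The quadratic half is absorbed by the prediction integral on the left, and the residual $(\dim)\times\log p$ term is reabsorbed into the KL-dimension lower bound; making this reabsorption quantitatively consistent, so that the left-hand $\log p$ coefficient strictly dominates the noise, is the delicate accounting and is precisely what fixes the prior-dependent constant $M$ and consumes the $\rho_n\to\infty$ slack.
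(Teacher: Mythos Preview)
Your approach via the ELBO inequality and a Markov-to-expected-dimension reduction is a genuinely different route from the paper's, but it contains a real gap at the step where you handle the Gaussian--Laplace slab mismatch in $\KL(\bar Q\|\Pi)$.

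The mismatch indeed produces a term $\lambda\,E_{\bar Q}\|\theta_{S_*}\|_1 \approx \lambda\|\theta_{*,S_*}\|_1$, but compatibility does \emph{not} convert this into the $\tfrac{16}{\phi(S_*)^2}\tfrac{\lambda}{\bar\lambda}\,s_*$ correction. Definition~\ref{def:compat} gives only a \emph{lower} bound $\|X\theta\|_2 |S|^{1/2}\geq \phi(S)\|X\|\,\|\theta_S\|_1$, i.e.\ it controls $\|\theta_{*,S_*}\|_1$ in terms of $\|X\theta_*\|_2$, which is the signal size and is completely unconstrained by the hypotheses. For instance in the sequence model $X=I$, take $\theta_*=\theta_0$ with $s_0$ nonzero coordinates all equal to some large $A$; then $\lambda\|\theta_0\|_1 = \lambda s_0 A$ can be made arbitrarily large relative to $s_0\log p$ even under the upper bound $\lambda\leq C\|X\|\sqrt{\log p}/s_n$. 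So your right-hand side is not $O(s_*\log p + \|X(\theta_0-\theta_*)\|_2^2)$ as required, and the expected-dimension bound fails.

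The paper's argument avoids this entirely by never comparing the variational law to the \emph{prior}. It instead bounds $\KL(\widetilde\Pi\|\Pi(\cdot|Y))$ directly (Lemmas~\ref{lem:KL_non_diag}--\ref{lem:KL_MF}) by choosing the competitor to be the Gaussian centered at the restricted least-squares estimator $\hat\theta_{\tilde S}$ on a data-dependent set $\tilde S$, and by writing the KL as $\lambda E(\|\theta_{\tilde S}\|_1-\|\theta_{0,\tilde S}\|_1)+\log(D_\Pi/D_N)$. The subtraction of $\|\theta_{0,\tilde S}\|_1$ is exact algebra, and the ratio $D_\Pi/D_N$ is controlled only after a \emph{posterior localization} step: one first establishes that the restricted posterior $\Pi_{\tilde S}(\cdot|Y)$ concentrates on an $\ell_2$-ball of radius $\varepsilon$ around $\theta_0$ (the third event in $\mathcal{T}_1$), and then uses $|\,\|\theta\|_1-\|\theta_0\|_1|\leq |\tilde S|^{1/2}\varepsilon$ on that ball. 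This localization is precisely what replaces the uncontrollable $\lambda\|\theta_*\|_1$ by $\lambda\Gamma^{1/2}\varepsilon$. The resulting KL bound is then fed into Theorem~\ref{thm: general:VB} together with the exponential posterior dimension bound of Lemma~\ref{lem: support}; there is no ELBO comparison to a prior-level competitor and no reabsorption of a noise term into a KL-to-prior lower bound.

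A secondary concern in your sketch is the reabsorption of the cross term: lower-bounding $\KL(\widetilde\Pi\|\Pi)$ by $(A_4-o(1))(\log p)\sum_i\widetilde\gamma_i$ and upper-bounding $E_{\widetilde\Pi}\|P_{S_\theta\cup S_0}Z\|_2^2$ by $C(\sum_i\widetilde\gamma_i+s_0)\log p$ only closes if $A_4>C$, which is not guaranteed by assumption \eqref{eq:prior_cond}. But this is moot until the $\lambda\|\theta_*\|_1$ obstruction is resolved.
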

Theorems \ref{thm:oracle_recovery} and \ref{thm:oracle_dim} are special cases of the finite-sample Theorems B.1 and B.2 in the supplement. Our proofs are based on the following crucial result, which allows one to exploit exponential probability bounds for the posterior to control the corresponding probability under the variational approximation. 

\begin{theorem}\label{thm: general:VB}
Let $\Theta_n$ be a subset of the parameter space, $A$ be an event and $Q$ be a distribution for $\theta$. If there exist $C>0$ and $\delta_n>0$ such that
\begin{align}\label{cond:exp_decay}
E_{\theta_0} \Pi(\theta\in \Theta_n|Y)1_A \leq Ce^{-\delta_n},
\end{align}
then 
\begin{align*}
E_{\theta_0} Q(\theta\in \Theta_n )1_A \leq \frac{2}{\delta_n} \Big[ E_{\theta_0} \emph{KL}(Q\| \Pi(\cdot|Y))1_A + Ce^{-\delta_n/2} \Big].
\end{align*}
\end{theorem}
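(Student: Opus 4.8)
The plan is to use the Donsker--Varadhan variational representation of the Kullback--Leibler divergence, i.e. the change-of-measure inequality asserting that for any probability measures $Q,P$ and any measurable $f$ with $\int e^f\,dP<\infty$,
\[
E_Q f \leq \KL(Q\|P) + \log\int e^f\,dP,
\]
which follows at once by applying the nonnegativity of $\KL(Q\|P_f)$ to the tilted measure $dP_f \propto e^f\,dP$. I would apply this \emph{pointwise in the data} $Y$, with $P=\Pi(\cdot|Y)$ and the test function $f = t\,1_{\{\theta\in\Theta_n\}}$ for a parameter $t>0$ to be fixed later. Since $1_{\{\theta\in\Theta_n\}}$ takes only the values $0$ and $1$, the log-moment term evaluates explicitly and can be bounded using $\log(1+x)\leq x$:
\[
\log\int e^{t 1_{\{\theta\in\Theta_n\}}}\,d\Pi(\cdot|Y) = \log\bigl(1+(e^t-1)\Pi(\theta\in\Theta_n|Y)\bigr) \leq (e^t-1)\Pi(\theta\in\Theta_n|Y).
\]
This yields, for every realization of $Y$, the pointwise bound $t\,Q(\theta\in\Theta_n) \leq \KL(Q\|\Pi(\cdot|Y)) + (e^t-1)\Pi(\theta\in\Theta_n|Y)$. (If $\KL(Q\|\Pi(\cdot|Y))=\infty$ the target inequality is trivial, so we may assume $Q\ll\Pi(\cdot|Y)$.)

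The second step is to multiply this pointwise inequality by the indicator $1_A$, take the expectation $E_{\theta_0}$, and invoke the hypothesis \eqref{cond:exp_decay} to control the last term, giving
\[
t\,E_{\theta_0}\bigl[Q(\theta\in\Theta_n)1_A\bigr] \leq E_{\theta_0}\bigl[\KL(Q\|\Pi(\cdot|Y))1_A\bigr] + (e^t-1)Ce^{-\delta_n}.
\]
It then remains to optimize the free parameter $t$. Choosing $t=\delta_n/2$, dividing through by $t$, and using $(e^{\delta_n/2}-1)e^{-\delta_n}\leq e^{\delta_n/2}e^{-\delta_n}=e^{-\delta_n/2}$ converts the second term into $\tfrac{2}{\delta_n}Ce^{-\delta_n/2}$ and produces exactly the stated bound
\[
E_{\theta_0}\bigl[Q(\theta\in\Theta_n)1_A\bigr] \leq \frac{2}{\delta_n}\Bigl[E_{\theta_0}\KL(Q\|\Pi(\cdot|Y))1_A + Ce^{-\delta_n/2}\Bigr].
\]

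The argument is short, and there is no single deep obstacle; the crux is recognizing that the Donsker--Varadhan inequality with the indicator test function $f=t\,1_{\{\theta\in\Theta_n\}}$ is the right device, since it trades the probability $Q(\theta\in\Theta_n)$ against $\KL(Q\|\Pi(\cdot|Y))$ plus an exponential-moment term that reduces to the posterior probability $\Pi(\theta\in\Theta_n|Y)$ controlled by the hypothesis. The only points requiring minor care are that the inequality is applied \emph{conditionally on} $Y$ so that the (data-dependent) variational distribution $Q$, the posterior $\Pi(\cdot|Y)$, and the event $A$ are all handled simply by integrating the pointwise bound against $1_A$, and the bookkeeping in the choice $t=\delta_n/2$ that pins down the constants $2/\delta_n$ and the $e^{-\delta_n/2}$ rate.
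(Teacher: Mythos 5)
Your proof is correct and follows essentially the same route as the paper's: both apply the Kullback--Leibler duality (Donsker--Varadhan) inequality pointwise in $Y$ with the test function $f=\tfrac{1}{2}\delta_n 1_{\Theta_n}$, bound the log-moment term via $\log(1+x)\leq x$, and then integrate against $1_A$ and invoke the hypothesis \eqref{cond:exp_decay}. Your presentation with a free parameter $t$ later set to $\delta_n/2$, and the exact evaluation of the exponential moment, are only cosmetic refinements of the paper's argument.
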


\begin{proof}
Recall the duality formula for the Kullback-Leibler divergence (\cite{boucheron:2013}, Corollary 4.15)
$$\KL (Q\|P) = \sup_{f} \left[ \int f dQ - \log \int e^f dP \right],$$
where the supremum is taken over all measurable $f$ such that $\int e^f dP < \infty$. In particular,
$$\int f(\theta)dQ(\theta)\leq \KL\big(Q\big \| \Pi(\cdot|Y)\big)+ \log\int e^{f(\theta)}d\Pi(\theta|Y).$$
Applying this inequality with $f(\theta)= \tfrac{1}{2}\delta_n 1_{\Theta_n}(\theta)$ and using that $\log (1+x) \leq x$ for $x\geq 0$,
\begin{align*}
\tfrac{1}{2} \delta_n Q(\theta\in \Theta)1_A
& \leq \KL(Q\| \Pi (\cdot |Y))1_A + \log \Big(1+  \Pi(\theta\in\Theta_n|Y)e^{\delta_n/2}   \Big)1_A \\
&\leq \KL(Q\| \Pi (\cdot |Y))1_A+
  e^{\delta_n/2}\Pi(\theta\in\Theta_n|Y)1_A.
\end{align*}
Taking $E_{\theta_0}$-expectations on both sides and using \eqref{cond:exp_decay} gives the result.
\end{proof}
When deriving oracle rates for the original posterior, the exponent $e^{-\delta_n}$ in \eqref{cond:exp_decay} depends on the oracle quantity, see Section B.3. To apply Theorem \ref{thm: general:VB}, we must thus develop novel oracle type bounds on the KL divergence $\KL(\widetilde{\Pi}\| \Pi(\cdot|Y))$, which is the main technical difficulty in establishing our results, see Section B.2. The proof uses an iterative structure, using successive posterior localizations to eventually bound the KL divergence (see e.g. \cite{nicklray2019} for a similar idea).

\section{Variational Bayes algorithm}\label{sec: VBalgorithm}

\subsection{Coordinate update equations}

We now provide a coordinate-ascent variational inference (CAVI) algorithm (see for instance \cite{blei:2016}) to compute the mean-field VB posterior $\widetilde{\Pi}$ based on the spike-and-slab prior with Laplace slabs. Since in the literature \cite{logsdon2010variational,carbonetto:2012,huang:2016} the VB approximation is typically considered for Gaussian prior slabs, and can therefore take advantage of explicit analytic formulas, our algorithm requires modification.

Introducing binary latent variables $(z_i)_{i=1}^p$, the spike and slab prior can be rewritten as
\begin{equation}\label{def: spikeslab}
\begin{split}
w & \sim \text{Beta}(a_0,b_0),\\
z_i|w&\stackrel{iid}{\sim} \text{Bernoulli}(w),\\ 
\theta_i | z_i &\stackrel{ind}{\sim} z_i \text{Lap}(\lambda) + (1-z_i) \delta_0.
\end{split}
\end{equation}
The prior inclusion probability equals $\Pi(z_i=1)= \int w d\pi(w)=a_0/(a_0+b_0)$, the expectation of a beta random variable. In CAVI, we sequentially update the parameters $\gamma_i,\sigma_i,\mu_i$, $i=1,...,p$, of the VB posterior by minimizing the KL divergence between the variational class with the rest of the parameters kept fixed and the true posterior. We iterate this algorithm until convergence, measured by the change in entropy. 


We now give the component-wise variational updates in the algorithm. Fixing the latent variable $z_i=1$ and all variational factors except $\mu_i$ or $\sigma_i$ (i.e. using vector notation, $\boldsymbol{\mu}_{-i},\boldsymbol{\sigma},\boldsymbol{\gamma}$ or $\boldsymbol{\mu},\boldsymbol{\sigma}_{-i},\boldsymbol{\gamma}$ are all fixed), the minimizer of the conditional KL divergence between $\mathcal{P}_{MF}$ and the posterior is the same as the minimizer of
\begin{equation}\label{eq: joint:KL}
\begin{split}
f_i(\mu_i|\boldsymbol\sigma,\boldsymbol\mu_{-i},\boldsymbol\gamma,z_i=1)&= \mu_i \sum_{k\neq i} (X^TX)_{ik} \gamma_k\mu_k+\frac{1}{2}(X^TX)_{ii}\mu_i^2-(Y^TX)_i\mu_i+\lambda\sigma_i\sqrt{2/\pi}e^{-\mu_i^2/(2\sigma_i^2)}\\
&\qquad +\lambda\mu_i(1-2\Phi(-\mu_i/\sigma_i)),\\
g_i(\sigma_i|\boldsymbol\sigma_{-i},\boldsymbol\mu,\boldsymbol\gamma,z_i=1)&= \tfrac{1}{2}(X^TX)_{ii}\sigma_i^2 +\lambda\mu_i\sigma_i\sqrt{2/\pi}e^{-\mu_i^2/(2\sigma_i^2)}+\lambda\mu_i(1-\Phi(\mu_i/\sigma_i))-\log \sigma_i,
\end{split}
\end{equation}
respectively (see Section C.1 of the supplement for the proof of the above assertion), where $\Phi$ denotes the cdf of the standard normal distribution. The minimizers of these functions do not have closed form expressions and hence must be computed by optimization; in our R implementation, we used the built-in optimize() function.

The minimizer $\gamma_i$ of the conditional KL divergence given $\boldsymbol{\mu},\boldsymbol{\sigma},\boldsymbol{\gamma}_{-i}$ solves
\begin{align}
\log \frac{\gamma_i}{1-\gamma_i}&=\log \frac{a_0}{b_0}+\log \frac{\sqrt{\pi}\sigma_i \lambda}{\sqrt{2}}+
(Y^TX)_i \mu_i- \mu_i \sum_{k\neq i} (X^TX)_{ik}\gamma_k\mu_k-\tfrac{1}{2}(X^TX)_{ii}(\sigma_i^2 + \mu_i^2) \nonumber\\
&\qquad- \lambda\sigma_i\sqrt{2/\pi}e^{-\mu_i^2/(2\sigma_i^2)}- \lambda\mu_i(1-2\Phi(-\mu_i/\sigma_i))+\frac{1}{2}=:\Gamma_i(\boldsymbol{\mu},\boldsymbol{\sigma},\boldsymbol{\gamma}_{-i}),\label{eq: VB:step:lambda}
\end{align}
see Section C.1 of the supplement for the proof.

Following \cite{huang:2016}, we terminate the procedure once the coordinate-wise maximal change in binary entropy of the posterior inclusion probabilities falls below a prespecified small threshold $\eps$ (e.g. $\eps=10^{-3}$), i.e. stop when $\Delta_H:= \max_{i=1,...,p} | H(\gamma_i)-H(\gamma_i')|\leq \eps$, where $H(p)=-p\log p-(1-p)\log(1-p)$, $p\in(0,1)$, and $\gamma_i$, $\gamma_i'$ are the $i$th coordinate of the starting and updated parameters $\boldsymbol{\gamma}$, $\boldsymbol{\gamma}'$, respectively. The full algorithm is present in Algorithm \ref{alg: VB}.

\subsection{Prioritized updating order}

The VB objective function is generally non-convex and so CAVI can be sensitive to initialization \cite{blei:2016}. It turns out the algorithm is also highly sensitive to the order of the component-wise updates. In fact, naively updating the coordinates in lexicographic order $i=1,...,p$ is typically suboptimal in our setting. We demonstrate in the next section on various simulated data sets that, unless the significant non-zero coefficients are located at the beginning of the signal, the procedure typically converges to a poor local minimum and gives misleading, inconsistent answers. In particular, CAVI returns a solution that is far from the desired VB posterior it is trying to compute. It is clearly undesirable that the algorithm's performance depends on the arbitrary ordering of the parameter coordinates. A natural fix is to randomize the order of the coordinate-wise updates and use different initializations, choosing the local minimum which provides the smallest overall KL-divergence to the posterior. We show, however, that due to the large number of local minima and their substantially different behaviour, this approach can also perform badly (although somewhat better than the lexicographic approach).

We instead propose a novel \textit{prioritized} update scheme. In a first preprocessing step, we compute an initial estimator $\hat{\mu}^{(0)}$ of the mean vector $\boldsymbol{\mu}$ of the variational class. We then place the coefficients in decreasing order with respect to the absolute value of their estimate and update the parameters coordinate-wise in the corresponding order, i.e. denoting by $\boldsymbol{a}=(a_1,...,a_p)$ the permutation of the indices $(1,2,...,p)$ such that $|\hat{\mu}^{(0)}_{a_i}|\geq |\hat{\mu}^{(0)}_{a_j}|$ for every $1\leq i<j\leq p$, we update the coordinates in the order $\mu_{a_i},\sigma_{a_i},\lambda_{a_i}$, $i=1,...,p$.

The intuition behind this method is that when CAVI begins by updating indices whose signal coefficients are small or zero in the target VB posterior, it may incorrectly assign signal strength to such indices to better fit the data (this is especially the case if the initialization value of the signal coefficient is far from its value in the target VB posterior). Consequently, the estimates of the significant non-zero signal components may be overly small since part of the signal strength has already been falsely assigned to signal coefficients that should in fact be small under the VB posterior. This can trap the algorithm near a poor local minimum from which it cannot escape, see the corresponding simulation study in Section \ref{sec:numerical}.

To avoid this, we wish to first update those coefficients which are large in the target VB posterior. Since these are unknown, the idea here is to identify them using a preliminary estimator: if the target VB posterior does a good job of estimating the signal, these large coefficients should roughly match those that are large in the true underlying signal, which can be identified using a reasonable estimator. The algorithm is given in Algorithm \ref{alg: VB}, where the function $order(|\boldsymbol{\mu}|)$ returns the indices of $|\boldsymbol{\mu}|$ in descending order.

\begin{algorithm}\label{alg:VB:Laplace}
\caption{Variational Bayes for Laplace prior slabs}\label{alg: VB}
\begin{algorithmic}[1]
\BState \textbf{Initialize}: $(\Delta_H,\boldsymbol{\sigma},\boldsymbol{\gamma})$, $\boldsymbol{\mu}:=\hat{\mu}^{(0)}$ (for a preliminary estimator $\hat{\mu}^{(0)}$), $\boldsymbol{a}:=order( |\boldsymbol{\mu}|)$
\While{$ \Delta_{H}\geq \eps$}
\For {$ j=1$ to $p$}
\State{$i:= a_j$}
\State{$\mu_i:=\text{argmax}_{\mu_i}f_i(\mu_i|\boldsymbol\mu_{-i},\boldsymbol\sigma,\boldsymbol\gamma,z_i=1)$, \qquad \qquad \qquad \  // see equation \eqref{eq: joint:KL}}
\State{$\sigma_i:=\text{argmax}_{\sigma_i}g_i(\sigma_i|,\boldsymbol\mu,\boldsymbol\sigma_{-i},\boldsymbol\gamma,z_i=1)$,  \qquad \qquad \qquad  // see equation \eqref{eq: joint:KL}}
\State{$\gamma_{old,i}=\gamma_i$,\,  $\gamma_i=\text{logit}^{-1}\big(\Gamma_i(\boldsymbol{\mu},\boldsymbol{\sigma},\boldsymbol{\gamma}_{-i}) \big)$  \qquad \qquad \quad \quad \ // see equation \eqref{eq: VB:step:lambda}}
\EndFor
\State{ $\Delta_{H}:=\max_i\{ | H(\gamma_i)-H(\gamma_{old,i}) | \}$}
\EndWhile
\end{algorithmic}
\end{algorithm}
Instead of the prior \eqref{def: spikeslab}, one can instead take the $w_i \stackrel{iid}{\sim} \text{Beta}(a_0,b_0)$ and $z_i|w_i \sim^{ind} \text{Bernoulli}(w_i)$, so that the probabilities $w_i$ vary with $i$. This results in exactly the same variational algorithm since we are using a mean-field approximation. If one instead takes deterministic weights $w_i$, the above algorithm can be easily adapted by using the same update steps for $\mu_i$ and $\sigma_i$, while updating $\gamma_i$ as the solution to
\begin{align*}
\log \frac{\gamma_i}{1-\gamma_i}&=\log \frac{w_i}{1-w_i}+\log \frac{\sqrt{\pi}\sigma_i\lambda}{\sqrt{2}}+
(Y^TX)_i \mu_i- \sum_{j\neq i} (X^TX)_{ij}\gamma_j\mu_j\mu_i-\frac{\mu_i^2+\sigma_i^2}{2}(X^TX)_{ii} \\
&\qquad- \lambda\sigma_i\sqrt{2/\pi}e^{-\mu_i^2/(2\sigma_i^2)}- \lambda\mu_i(1-2\Phi(-\mu_i/\sigma_i))+\frac{1}{2}.
\end{align*}
The closely related algorithm for computing the VB posterior $\widetilde{Q}$ based on the family $\mathcal{Q}_{MF}$ is given in Algorithm \ref{alg:VB:Laplace2} in Section C.2 in the supplement.

\section{Numerical study}\label{sec:numerical}

In this section, we empirically compare the performance of our VB method using Laplace prior slabs, implemented in the \texttt{sparsevb} package \cite{sparsevb}, with various state-of-the-art Bayesian model selection methods on simulated data. We also demonstrate the importance of the prioritized updating scheme compared with standard CAVI implementations.

Additional numerical results are provided in the supplementary material as follows:
\begin{itemize}
\item[-] Section A.1: we apply our method and other Bayesian model selection methods to real world data.
\item[-] Section A.2: we show that Laplace prior slabs provide better estimation and model selection than Gaussian prior slabs. We also show that the optimization problem for finding the KL-optimizer for the class $\mathcal{Q}_{MF}$ is substantially harder than for the class $\mathcal{P}_{MF}$, with the former typically ending up at a poor local minimum.
\item[-] Section A.3: we show that although the theory indicates that the VB approach is (asymptotically) robust to the choice of the hyper-parameter $\lambda$, in finite-sample cases it can still have an effect and it may be helpful to use a data-driven choice in practice (e.g. cross validation).
\item[-] Section A.4: we show that several Bayesian model selection methods are robust to noise misspecification
\item[-] Section A.5: we compare different Bayesian model selection methods when the inputs are correlated.
\end{itemize}

We ran each experiment multiple times and report the average $\ell_2$-distance between the posterior mean (or maximum a posteriori (MAP) estimate for the SSLASSO) and the true parameter $\theta_0$, the false discovery rate (FDR), the true positive rate (TPR) and the computational time in seconds. We also report the standard deviations of these indicators to quantify their spread. For our computations, we used a MacBook Pro laptop with 2.9 GHz Intel Core i5 processor and 8 GB memory. Throughout the numerical study, we use the hyper-parameter choices $a_0=1$, $b_0=p$, $\lambda=1$ (except in Section A.3) and set the stopping threshold for the entropy change to $\Delta_H=10^{-5}$, see Algorithm \ref{alg: VB}. In each experiment and for every method, we take the ridge regression estimator $\hat\mu^{(0)}=(X^TX+I)^{-1}X^TY$ as initialization. Given the sparsity framework, it may be tempting to take the LASSO as initialization, however this is not recommended. The LASSO shrinks some coordinates to exactly zero and so is not suitable for $\mu$, which represents the estimated coefficients \textit{given} that they are included in the model, i.e. non-zero [the LASSO solution should be compared to $(\gamma_1\mu_1,\dots,\gamma_p\mu_p)$ rather than $(\mu_1,\dots,\mu_p)$].


\subsection{Prioritized updates}\label{sec:prioritized_sims}

We demonstrate here the relevance of our prioritized updating scheme for CAVI by comparing its performance with lexicographic and randomized updating orders, which are standard implementations for CAVI. We take $n=100$, $p=200$, $s=20$, $\theta_i=10$ for the non-zero coefficients, $\varsigma=1$ assumed to be known, $X_{ij}\stackrel{iid}{\sim}N(0,1)$, and consider four scenarios for the locations of the non-zero signal components. We place all non-zero coordinates (i) at the beginning of the signal, (ii) at the end of the signal, (iii) in the middle of the signal and (iv) uniformly at random. We ran the experiments 200 times and report the results in Table \ref{table:error:compare:update} (for the FDR and TPR, the $i$th coefficient is selected if $\gamma_i>0.5$). We also plot the posterior means resulting from a typical run in Figure \ref{fig: priority}.

Apart from the first scenario, where the significant signal coefficients are all located at the beginning of the signal, the prioritized method substantially outperforms both the randomized and lexicographic updating schemes for parameter estimation and model selection (recall that all three methods are trying to compute the \textit{same} VB estimate). The random updating order also slightly improves upon the lexicographic order, except for the first scenario, where the lexicographic order naturally updates the largest coefficients first. As well as being sensitive to initialization \cite{blei:2016}, it seems CAVI can also be very sensitive to the updating order of the parameters. Indeed, we see here that without prioritized ordering, the algorithm often terminates at poor local minima of the VB objective function. Since the VB objective is non-convex, naive (or random) update orderings may cause CAVI to return a solution that is far from the true minimizer of the KL divergence that it is trying to compute. Performing updates in a prioritized order can add some robustness against this, see Section \ref{sec: VBalgorithm} for some heuristics behind this idea. We also note that the runtime is comparable for the three updating orders.

\begin{figure}[tbp]
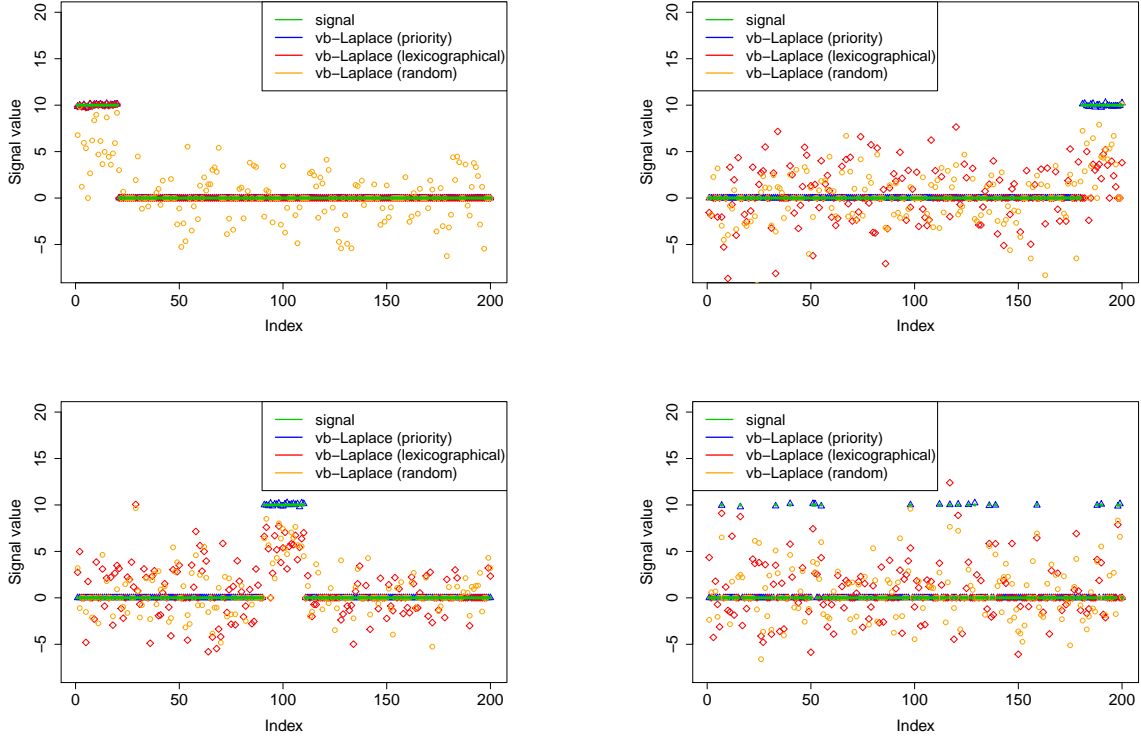

  \centering
  \begin{minipage}[b]{0.45\textwidth}
 \includegraphics[width=\textwidth]{priority_i.pdf}\\
\includegraphics[width=\textwidth]{priority_ii.pdf}
  \end{minipage}
  \hfill
  \begin{minipage}[b]{0.45\textwidth}
  \includegraphics[width=\textwidth]{priority_iii.pdf}\\
\includegraphics[width=\textwidth]{priority_iv.pdf}
  \end{minipage}
\caption{Linear regression with Gaussian design $X_{ij}\stackrel{iid}{\sim}N(0,1)$. We plot the underlying signal (green) and posterior means of the VB method with Laplace prior slabs computed using CAVI with parameter updates ordered in a prioritized way (blue), lexicographically (red) and randomly (light blue). We took $n=100$, $p=200$, $s=20$, $\theta_i=10$. From left to right and top to bottom we have: the non-zero coordinates are at the beginning, end, middle and at random locations of the signal. 
}
\label{fig: priority}
\end{figure}

\begin{table}[tbp]
\centering
\begin{footnotesize}
\begin{tabular}{ |l|l|c|c|c|c|    }
\hline

Metric& Method&(i) & (ii)&(iii)&(iv) \\ \hline
\hline
\multirow{3}{*}{$\ell_2-\text{error}$} &prioritized & 1.03 $\pm$ 3.39  &\textbf{1.18 $\pm$ 3.86} &\textbf{1.06 $\pm$ 3.48} &\textbf{0.61 $\pm$ 1.65} \\
&lexicographic &\textbf{0.71 $\pm$ 2.14}  & 26.61 $\pm$ 15.04 &45.72 $\pm$ 5.45 &37.91 $\pm$ 5.63\\
 & randomized& 27.81 $\pm$ 13.30 &27.26 $\pm$ 13.78 & 25.14 $\pm$ 14.70& 35.08 $\pm$ 8.28\\
 \hline
\hline
\multirow{3}{*}{FDR} & prioritized &0.02 $\pm$ 0.12  &\textbf{0.02 $\pm$ 0.13} &\textbf{0.02 $\pm$ 0.12}& \textbf{0.05 $\pm$ 0.18}\\
&lexicographic&\textbf{0.01 $\pm$ 0.08}  &0.63 $\pm$ 0.35 &0.87 $\pm$ 0.03& 0.54 $\pm$ 0.38\\
 &randomized&0.68 $\pm$ 0.31&0.66 $\pm$ 0.32 &0.62 $\pm$ 0.352 & 0.69 $\pm$ 0.30\\
 \hline
\hline
\multirow{3}{*}{TPR} & prioritized &\textbf{1.00 $\pm$ 0.00}&\textbf{1.00 $\pm$ 0.01} &\textbf{1.00 $\pm$ 0.01}  & \textbf{1.00 $\pm$ 0.01}\\
&lexicographic &\textbf{1.00 $\pm$ 0.00}&0.93 $\pm$ 0.06 &0.75 $\pm$ 0.11  & 0.95 $\pm$ 0.05\\
 &randomized& 0.93 $\pm$ 0.07 &0.92 $\pm$ 0.06 &0.93 $\pm$ 0.07 &0.91 $\pm$ 0.07 \\
 \hline
\hline
\multirow{3}{*}{runtime (sec)} & prioritized &0.28 $\pm$ 0.09 & 0.24  $\pm$ 0.06 &0.26 $\pm$ 0.06 &0.24 $\pm$ 0.08\\
& lexicographic&\textbf{0.22 $\pm$ 0.06} & \textbf{0.21 $\pm$ 0.05} &\textbf{0.21 $\pm$ 0.04} &\textbf{0.23 $\pm$ 0.06}\\
 &randomized&0.24 $\pm$ 0.08& 0.22 $\pm$ 0.05 & 0.23 $\pm$ 0.05& 0.25 $\pm$ 0.06 \\
\hline
\hline
\end{tabular}
\caption{We compare the prioritized, lexicographic and random updating schemes in the CAVI algorithm.  We take $X_{ij}\stackrel{iid}{\sim}N(0,1)$, $n=100$, $p=200$, $s=20$, $\theta_i=10$ for the non-zero coefficients, which are located at the (i) beginning, (ii) middle, (iii) end, (iv) (uniformly) random locations of the signal. We report the means and standard deviations over 200 runs.}
\label{table:error:compare:update}
\end{footnotesize}
\end{table}

\subsection{Comparing Bayesian variable selection methods}\label{sec:unknown_var}

We consider here the realistic situation of unknown noise variance $\varsigma^2$, that is the model $Y = X\theta + \varsigma Z$. As mentioned in Section \ref{sec:prior+design} (see \eqref{rescaled_var}), dividing both sides of this model by an empirical estimator $\hat{\varsigma}$ for the noise standard deviation $\varsigma$ gives $\tilde{Y}=\tilde{X}\theta+\tilde{Z},$ where $\tilde{Y}=Y/\hat\varsigma$, $\tilde{X}=X/\hat\varsigma$ and $\tilde{Z}=(\varsigma/\hat{\varsigma})Z$, $Z\sim N(0,I_n)$. Endowing $\theta$ with the spike-and-slab prior and if the estimator $\hat{\varsigma}$ is close to $\varsigma$, we should approximately recover the $\varsigma=1$ case studied above. We thus compute our VB estimator as described above based on the design matrix $\tilde{X}$ and data $\tilde{Y}$. For estimating $\varsigma$, we have used the R package \texttt{selectiveInference}, see \cite{reid:2016}.

We compare the performance of our VB method with various Bayesian (based) variable selection algorithms for sparse linear regression using simulated data. We consider the \texttt{varbvs} R-package (variational Bayes for spike-and-slab priors with Gaussian prior slabs using an importance sampling outer circle for estimating the posterior inclusion probabilities and noise variance \cite{carbonetto:2012}), \texttt{EMVS} R-package (an expectation-maximization algorithm for spike-and-slab \cite{rockova:george:2018}), \texttt{SSLASSO} R-package (spike-and-slab LASSO \cite{rockova:george:2018b}) and `ebreg.R' R-function (a fractional likelihood empirical Bayes approach using MCMC for re-centered Gaussian slab priors \cite{martin:2017} - the function is available on the first author's website).


For \texttt{varbvs}, we set $tol = 10^{-4}$ and $maxiter = 10^4$. For \texttt{EMVS} we took $v_0\in\{0.1,0.2,...,2\}$, $v_1=1000$ (these quantities were used in one of the examples provided in the package), $a=1$, $b=p$ and $\epsilon=10^{-5}$ and report the posterior mean corresponding to the $v_0=0.1$ case.  For \texttt{SSLASSO}, we took  $\lambda_1=0.01$, $\lambda_0$ an arithmetic series between $\lambda_1$ and $p$ with 200 elements, set the variance ``unknown'', $a=1$, $b=p$, and penalty=``adaptive'', and report the results corresponding to the stabilized $\lambda_0$ value as recommended by the authors \cite{rockova:george:2018b}. In the ebreg algorithm, we took the default parameters $M=5000$, $\alpha=0.99$, $\gamma=0.001$ and used the \texttt{selectiveInference} R-package for the estimation of $\varsigma$. We note that for most of these methods, additional careful hyper-parameter tuning beyond the default settings can often lead to improved performance, see Section A.3 for our VB method or Section 5 of \cite{george:rockova:2020} for discussion concerning the SSLASSO.

We first consider (i) $n = 100$, $p = 400$, $s = 20$, $\varsigma=5$ with the non-zero signal coefficients set to $\theta_i=A$, with $A=\log n$, and located at the end of the signal. The entries of the design matrix are
taken to be iid normal random variables $X_{ij}\stackrel{iid}{\sim} N(0,1)$. In the other experiments, we take $(n,p,s,\varsigma)$ equal to (ii) $(100,1000,40,1)$ (with non-zero coefficients at the beginning of the signal) and set the non-zero parameters to be $1,2,3$; (iii) $(200,800,5,0.2)$ (in the middle) and take $\theta_i\stackrel{iid}{\sim}U(-5,5)$; (iv) $(100,400,20,5)$ (at the end) and take $\theta_i=2\log n$. We ran each algorithm 100 times and report the results in Table \ref{table:error:compare:methods:unknown:sigma}. 
 Our method performs well compared to the other methods, in some cases providing substantially better estimation and model selection.

\begin{table}[tbp]
\centering

\begin{footnotesize}
\begin{tabular}{ |l|l|c|c|c|c|    }
\hline

Metric& Method &(i) & (ii)&(iii)&(iv) \\ \hline
\hline
\multirow{3}{*}{$\ell_2-\text{error}$} & sparsevb &10.48 $\pm$ 6.84 & 0.21 $\pm$ 0.14 &\textbf{0.03 $\pm$ 0.01}&\textbf{6.55 $\pm$ 7.80}  \\
&varbvs &14.23 $\pm$ 6.51  & 0.18 $\pm$ 0.07 &\textbf{0.03 $\pm$ 0.01} &20.43 $\pm$ 17.15 \\
 & EMVS& 14.02$\pm$ 2.46&3.57 $\pm$ 0.03 & 5.04 $\pm$ 0.33 & 21.52 $\pm$ 11.29 \\
 & SSLASSO &20.62 $\pm$ 0.17 & \textbf{0.16 $\pm$ 0.11} & 0.09 $\pm$ 0.12 &  37.92 $\pm$ 9.84 \\
 & ebreg & \textbf{9.38 $\pm$ 6.05} & 0.18 $\pm$ 0.07& 0.17 $\pm$ 0.04 & 7.39 $\pm$ 7.42 \\
 \hline
\hline
\multirow{3}{*}{FDR} & sparsevb&0.12 $\pm$ 0.17  &0.06 $\pm$ 0.16 & \textbf{0.00 $\pm$ 0.00}& 0.02 $\pm$ 0.07\\
& varbvs& 0.06 $\pm$ 0.11  &0.01 $\pm$ 0.04 & \textbf{0.00 $\pm$ 0.00}&0.07 $\pm$ 0.15\\
 & EMVS&0.24 $\pm$ 0.13&\textbf{0.00 $\pm$ 0.00} & \textbf{0.00 $\pm$ 0.00} & 0.43 $\pm$ 0.25\\
 & SSLASSO &\textbf{0.00 $\pm$ 0.00}&\textbf{0.00 $\pm$ 0.00}& \textbf{0.00 $\pm$ 0.00}&\textbf{0.00 $\pm$ 0.00} \\
 & ebreg &0.38 $\pm$ 0.20 &0.01 $\pm$ 0.02 &  \textbf{0.00 $\pm$ 0.00}  & 0.28 $\pm$ 0.16\\
\hline
\hline
\multirow{3}{*}{TPR} & sparsevb & 0.70 $\pm$ 0.31&\textbf{1.00 $\pm$ 0.00} & 0.96 $\pm$ 0.13   & 0.94 $\pm$ 0.18  \\
& varbvs &0.340 $\pm$ 0.37&\textbf{1.00 $\pm$ 0.00} &0.57 $\pm$ 0.43 &0.53 $\pm$ 0.44\\
 & EMVS& 0.59 $\pm$ 0.14&0.00 $\pm$ 0.00 &0.86 $\pm$ 0.09 & 0.88 $\pm$ 0.10\\
 & SSLASSO & 0.01 $\pm$ 0.01 &0.94 $\pm$ 0.13  &0.10 $\pm$ 0.29  & 0.09 $\pm$ 0.28\\
 & ebreg & \textbf{0.88 $\pm$ 0.18}&\textbf{1.00 $\pm$ 0.00}  & \textbf{0.98 $\pm$ 0.07} &  \textbf{1.00 $\pm$ 0.04} \\
 \hline
\hline
\multirow{3}{*}{runtime (sec)} & sparsevb &0.43 $\pm$ 0.27  & 0.71 $\pm$ 0.21&0.35 $\pm$ 0.25 & 0.65 $\pm$ 0.53\\
& varbvs&0.60 $\pm$ 0.28  & 2.02 $\pm$ 0.50 &0.51 $\pm$ 0.38& 0.56 $\pm$ 0.23\\
 & EMVS&0.20 $\pm$ 0.07& 1.72 $\pm$ 0.44 & 0.21 $\pm$ 0.05& 0.19 $\pm$ 0.09\\
 & SSLASSO  &\textbf{0.06 $\pm$ 0.03} &\textbf{0.37 $\pm$ 0.11} & \textbf{0.06 $\pm$ 0.01}  &  \textbf{0.07 $\pm$ 0.03}   \\
 & ebreg & 35.05 $\pm$ 7.03 & 21.20 $\pm$ 6.05 & 31.42 $\pm$ 4.01 &   36.33 $\pm$ 9.13 \\
\hline
\hline
\end{tabular}
\caption{Linear regression with Gaussian design $X_{ij}\stackrel{iid}{\sim}N(0,1)$, unknown noise variance $\varsigma^2$ and non-zero signal coefficients $\theta_i=A$, with parameter $(n,p,s,A,\varsigma)$ choices (i) $(100,400,20,\log n,5)$ (non-zero coefficients at the beginning); (ii) $\big(100,1000,3,(1,2,3),1\big)$ (at the end); (iii) $(200,800,5,\stackrel{iid}{\sim}U(-5,5),0.2)$ (in the middle); (iv)  $(100,400,20,2\log n,5)$ (at the end) .}
\label{table:error:compare:methods:unknown:sigma}
\end{footnotesize}
\end{table}

\section{Conclusion}\label{sec:conclusion}

We studied theoretical oracle contraction rates of a natural sparsity-inducing mean-field VB approximation to posteriors arising from widely used, but computationally challenging, model selection priors in high-dimensional sparse linear regression. We showed that under compatibility conditions on the design matrix, such an approximation converges to a sparse truth at an oracle rate in $\ell_1$, $\ell_2$ and prediction loss, implying optimal (minimax) recovery, and also performs suitable dimension selection. This provides a theoretical justification for this approximation algorithm in a sparsity context. Minimax guarantees for this VB method extend to high-dimensional logistic regression, as we show in the follow up work \cite{ray:szabo:clara:2020}.

We investigated the empirical performance of our algorithm via simulated and real world data and showed that it generally performs at least as well as other state-of-the-art Bayesian variable selection methods, including existing VB approaches. We also demonstrated how the widely used coordinate-ascent variational inference (CAVI) algorithm can be highly sensitive to the updating order of the parameters. We therefore proposed a novel prioritized updating scheme that uses a data-driven updating order and performs better in simulations. This idea may be applicable for CAVI approaches in other settings. Our variational algorithm is implemented in the R-package \texttt{sparsevb} \cite{sparsevb}.

\bigskip

\textbf{Acknowledgements.} We would like to thank two referees for valuable comments that helped considerably improve this manuscript.

\bigskip
\begin{center}
{\large\bf SUPPLEMENTARY MATERIAL}
\end{center}

\begin{description}
\item [Supplementary material.] In Section A, additional numerical results are given. First, we provide a real world data example, where we compare Bayesian model selection methods. We then consider various VB methods, demonstrating the advantages of using Laplace instead of Gaussian prior slabs, investigate the effect of the hyper-parameter $\lambda$ and further study Bayesian variable selection methods under noise misspecification and correlated inputs. Section B contains full oracle results and all proofs, Section C contains additional methodological details and Section D contains further discussion of the design matrix assumption, including examples.
\end{description}


\appendix

\renewcommand{\theequation}{\Alph{section}.\arabic{equation}}
\setcounter{equation}{0}

\section{Additional numerical results}\label{sec:Num:Stud:Extra}

\subsection{Ozone interaction data}\label{sec:realworld}
We apply our method to the real world ozone interaction data investigated in \cite{breiman:1985}. The dataset contains $n=203$ readings of maximal daily ozone measured in the Los Angeles basin and $p=134$ variables modeling the pairwise interaction of 9 meteorological and 3 time variables. We firstly normalize the design matrix by centering and rescaling each column to have Euclidean norm equal to $\sqrt{n}$ and then add a column vector of ones to add an intercept to the model.\footnote{Except for EMVS, since adding an intercept resulted in an error message.} We apply the four methods investigated above (i.e. our method \texttt{sparsevb} \cite{sparsevb}, \texttt{varbvs}, \texttt{EMVS}, \texttt{SSLASSO}) with unknown noise variance $\varsigma^2$, using the method settings described in Section \ref{sec:unknown_var}. We also tried to apply the ebreg method, but due to the highly co-linear nature of the design matrix, the code gave errors when trying to compute the Cholesky decomposition.

As we do not know the underlying truth, we consider the 10-fold cross validation prediction error, i.e. we use nine folds to compute the posterior mean or MAP $\hat{\theta}$ and then use the 10th fold to compute the prediction error $\|Y-X\hat\theta\|_2$. We report the averaged out cross-validation errors in Table \ref{table:cverror:realworld}, together with the runtimes and number of selected covariates. Our method outperforms the other approaches in cross-validated prediction loss. Furthermore, while there is some overlap between the models selected by the various methods, the results are quite different, see Figure \ref{fig: realworld}.

\begin{figure}[tbp]
\center
  \includegraphics[width=0.7\textwidth]{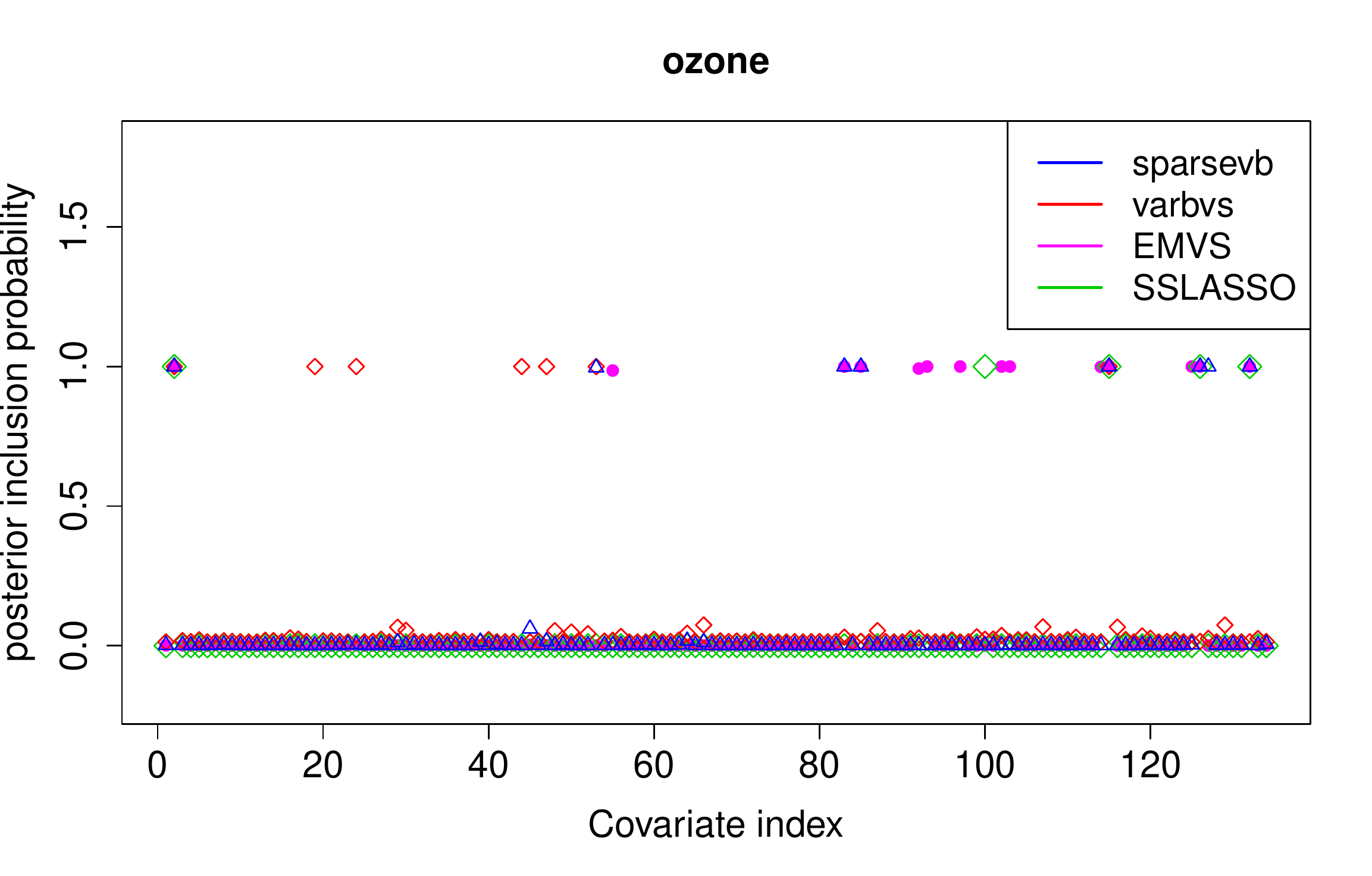}\\
\caption{Marginal inclusion probabilities of the variables for the ozone interaction data using sparsevb (blue), EMVS (purple), SSLASSO (green) and varbvs (red).}
\label{fig: realworld}
\end{figure}

\begin{table}[tbp]
\centering
\caption{Cross-validated $\ell_2$-estimation error of Bayesian model selection methods}
\label{table:cverror:realworld}
 \setlength{\tabcolsep}{1pt}%
 \renewcommand{\arraystretch}{1.3}%
\begin{tabular}{|l||l|l|l|l|l|l|l|}
 \hline
data~\textbackslash~Method\,&\, sparsevb \,&\, varbvs\, &\,EMVS\,&\, SSLASSO\,  \\
 \hline
 \hline
\text{CV error}&
  \, $\textbf{16.43}$
&\,$59.49$
 &\, $74.45$
 &\, $53.28$
\\
\hline
model size&
  \, $ 9$
&\, $7 $
 &\, $14 $
 &\, $5  $
\\
\hline
runtime (sec)&
  \, $1.49$
&\, $1.14$
 &\, $\textbf{0.02} $
 &\, $0.10  $
\\
\hline
\end{tabular}
\end{table}

\subsection{Comparing the VB algorithms}\label{sec:sim:compare:vb} 

We compare our VB method with Laplace slabs (Algorithm \ref{alg: VB}) with different variations of the VB algorithm. First, we consider the other mean-field VB posterior $\widetilde{Q}$ derived from the variational class $\mathcal{Q}_{MF}$ (Algorithm \ref{alg:VB:Laplace2}  in Section \ref{sec:gauss_VB_algos}). Next, we consider the VB method with Gaussian prior slabs, which is the standard choice in the literature, see for instance \cite{logsdon2010variational,carbonetto:2012,huang:2016}, both with component-wise and batch-wise computational approaches, see Algorithms \ref{alg: VB_Gauss} and \ref{alg: VB_Gauss:batch} in Section \ref{sec:gauss_VB_algos}. To compensate for the over-shrinkage of the posterior mean caused by the light tail of the Gaussian slabs, we also consider centered Gaussian prior slabs with standard deviation set to the (unknown) oracle $\rho=\|\theta_0\|_2$, as proposed by \cite{castillo:2012} for the sequence model (i.e. $X=I$ the identity matrix).

In all experiments, we placed the non-zero signal components $\theta_i=A$ at the beginning of the signal. In the first experiment, (i) we take the identity design matrix $X=I_n$ and set $n=p=400$, $s=40$, $A=4\sqrt{\log n}$. In the other three experiments, we consider a Gaussian design matrix with entries $X_{ij} \stackrel{iid}{\sim} N(0,\tau^2)$ and vary the parameters  $n,p, s, \tau$ and $A$. We take (ii) $(n,p, s, \tau)=(100,200,20,1)$, $A\stackrel{iid}{\sim}U(0,2\log n)$; (iii) $(n,p, s, \tau)=(200,800,40,0.1)$, $A=2\log n$; (iv) $(n,p, s, \tau)=(100,400,15,0.5)$, $A\stackrel{iid}{\sim}U(-8,8)$. In all experiments, we take $\varsigma=1$ assumed to be known. The results over 200 runs are reported in Table \ref{table:compare:various:VB} and we plot the outcome of a typical run in Figure \ref{fig: linreg1}.

Our Laplace VB method (sparsevb) with variational class $\mathcal{P}_{MF}$ typically outperforms the other VB algorithms. From the identity design case (i), it is clear that Gaussian prior slabs provide suboptimal recovery for $\theta_0$ unless the prior slab variance is rescaled by the norm of $\theta_0$. However, the rescaled Gaussian slabs perform much less well in the Gaussian design cases (ii)-(iv). The other mean-field variational class $\mathcal{Q}_{MF}$ performs similarly to our main method in the identity design case, but significantly worse in the more complicated Gaussian design cases. This is due to discrete nature of the variational parameter $\gamma\in \{0,1\}$ in this family, which makes the optimization problem even more difficult, causing the method to frequently get stuck at a poor local minimum. We do not report run times as the \texttt{sparsevb} R-package is optimized for computation and therefore runs substantially faster than the other methods, which are more simply implemented.

\begin{table}[tbp]
\centering
\begin{footnotesize}
\begin{tabular}{ |l|l|c|c|c|c|    }
\hline

Metric& Method$\backslash$ Experiment &(i) & (ii)&(iii)&(iv) \\ \hline
\hline
\multirow{4}{*}{$\ell_2-\text{error}$} &Laplace $\mathcal{P}_{MF}$ &8.80 $\pm$ 0.85  & \textbf{1.30 $\pm$ 0.26} &\textbf{9.25 $\pm$ 9.73} &\textbf{1.08 $\pm$ 0.20} \\
&Laplace $\mathcal{Q}_{MF}$ &8.80 $\pm$ 0.85  & 6.73 $\pm$ 1.79 &39.98 $\pm$ 6.88 &6.56 $\pm$ 1.97 \\
 & Gauss& 31.06 $\pm$ 0.49 &1.93 $\pm$ 0.51 &43.58 $\pm$ 2.94&1.40 $\pm$ 0.29  \\
 & Gauss (batch-wise) &31.11 $\pm$ 0.48 &16.38 $\pm$ 0.79 &66.98 $\pm$ 0.00&18.03 $\pm$ 0.00 \\
 & Gauss ($\rho=\|\theta_0\|_2$)&\textbf{6.26 $\pm$ 0.72}& 1.42 $\pm$ 0.32 &58.12 $\pm$ 19.01&2.05 $\pm$ 3.59\\ \hline
\hline
\multirow{4}{*}{FDR} & Laplace $\mathcal{P}_{MF}$&\textbf{0.00 $\pm$ 0.00}  &\textbf{0.00 $\pm$ 0.01} &\textbf{0.03 $\pm$ 0.11}& \textbf{0.00 $\pm$ 0.02}\\
& Laplace $\mathcal{Q}_{MF}$&\textbf{0.00 $\pm$ 0.00}  &0.70 $\pm$ 0.07 &0.45 $\pm$ 0.08& 0.55 $\pm$ 0.14\\
 & Gauss&\textbf{0.00 $\pm$ 0.00}&\textbf{0.00 $\pm$ 0.00} &0.50 $\pm$ 0.03& 0.01 $\pm$ 0.03\\
 & Gauss (batch-wise) &\textbf{0.00 $\pm$ 0.00}&0.87 $\pm$ 0.01&0.62 $\pm$ 0.03& 0.82 $\pm$ 0.03 \\
 & Gauss ($\rho=\|\theta_0\|_2$)&\textbf{0.00 $\pm$ 0.00}&0.25 $\pm$ 0.36 &0.57 $\pm$ 0.21& 0.06 $\pm$ 0.21\\ \hline
\hline
\multirow{4}{*}{TPR} & Laplace $\mathcal{P}_{MF}$ &\textbf{1.00 $\pm$ 0.00}&\textbf{0.89 $\pm$ 0.02} &\textbf{0.99 $\pm$ 0.06}   & \textbf{0.81 $\pm$ 0.03}\\
& Laplace $\mathcal{Q}_{MF}$ &\textbf{1.00 $\pm$ 0.00}&0.81 $\pm$ 0.06 &0.88 $\pm$ 0.08  & 0.74 $\pm$ 0.07\\
 & Gauss&\textbf{1.00 $\pm$ 0.00}&\textbf{0.89 $\pm$ 0.02} &0.94 $\pm$ 0.05 &\textbf{0.81 $\pm$ 0.03} \\
 & Gauss (batch-wise)& \textbf{1.00 $\pm$ 0.00}&0.88 $\pm$ 0.07  &0.82 $\pm$ 0.07  &0.68 $\pm$ 0.08 \\
 & Gauss ($\rho=\|\theta_0\|_2$)&\textbf{1.00 $\pm$ 0.00}&0.81 $\pm$ 0.10  &0.58 $\pm$ 0.17 & 0.78 $\pm$ 0.10 \\ \hline
\hline
\end{tabular}
\caption{Linear regression with (i) identity design $X=I_n$,  and $(ii)-(iv)$ Gaussian design $X_{ij}\stackrel{iid}{\sim}N(0,\tau^2)$. The non-zero coefficients are located in the beginning of the signal. The parameters $(n,p,s,A)$ are set to (i) $(400,400,40,4\sqrt{\log n})$; (ii) $(100,200,20,U(0,2\log(n)))$; (iii) $(200,800,40,2\log n)$; (iv) $(100,400,15,U(-8,8))$. We set (ii) $\tau=1$; (iii) $\tau=0.1$; (iv) $\tau=0.5$. We compare the means and standard deviations over 200 runs for our method and other variations of the VB algorithm.}
\label{table:compare:various:VB}
\end{footnotesize}
\end{table}

\begin{figure}[tbp]
  \centering
  \begin{minipage}[b]{0.45\textwidth}
 \includegraphics[width=\textwidth]{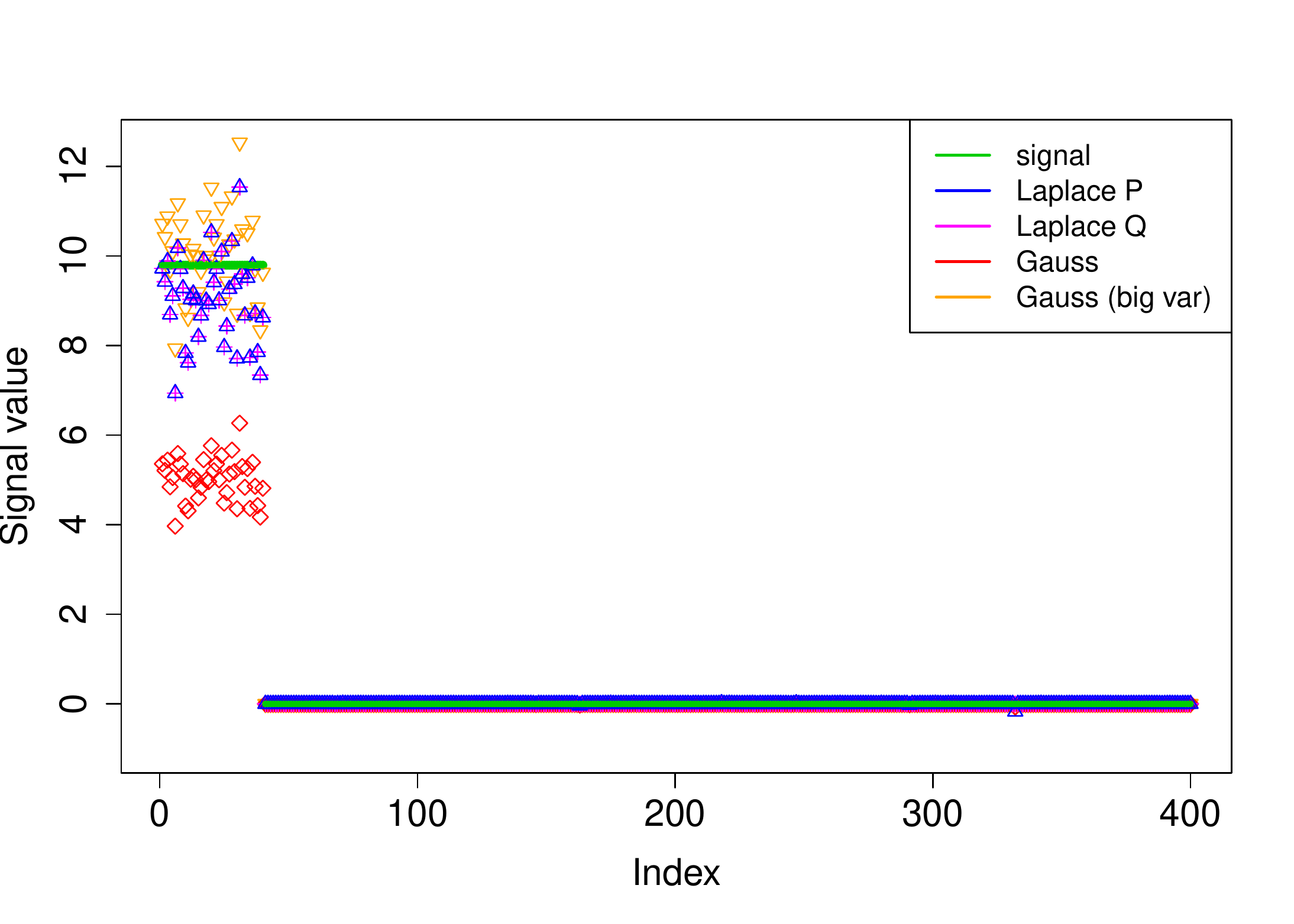}\\
\includegraphics[width=\textwidth]{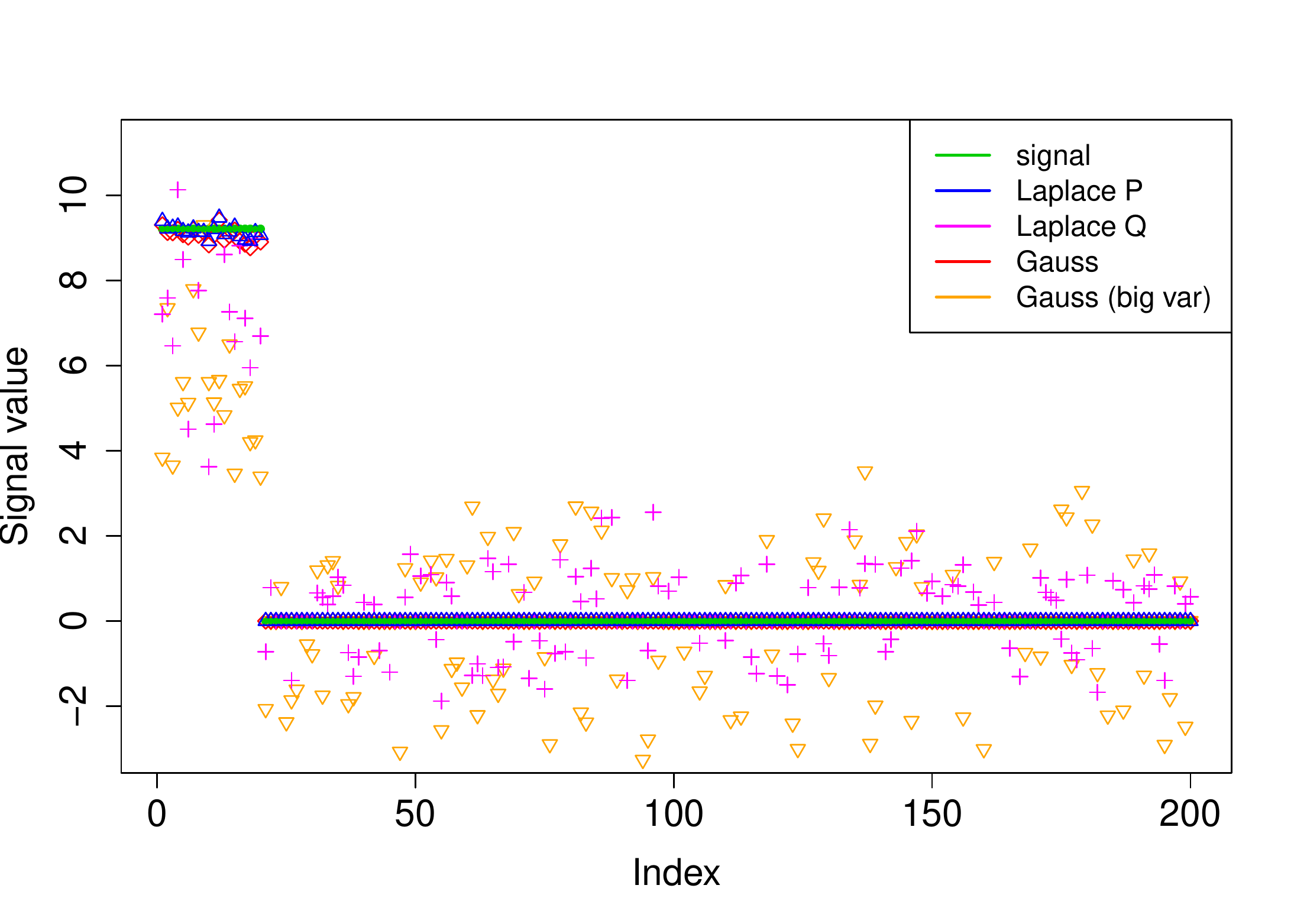}
  \end{minipage}
  \hfill
  \begin{minipage}[b]{0.45\textwidth}
  \includegraphics[width=\textwidth]{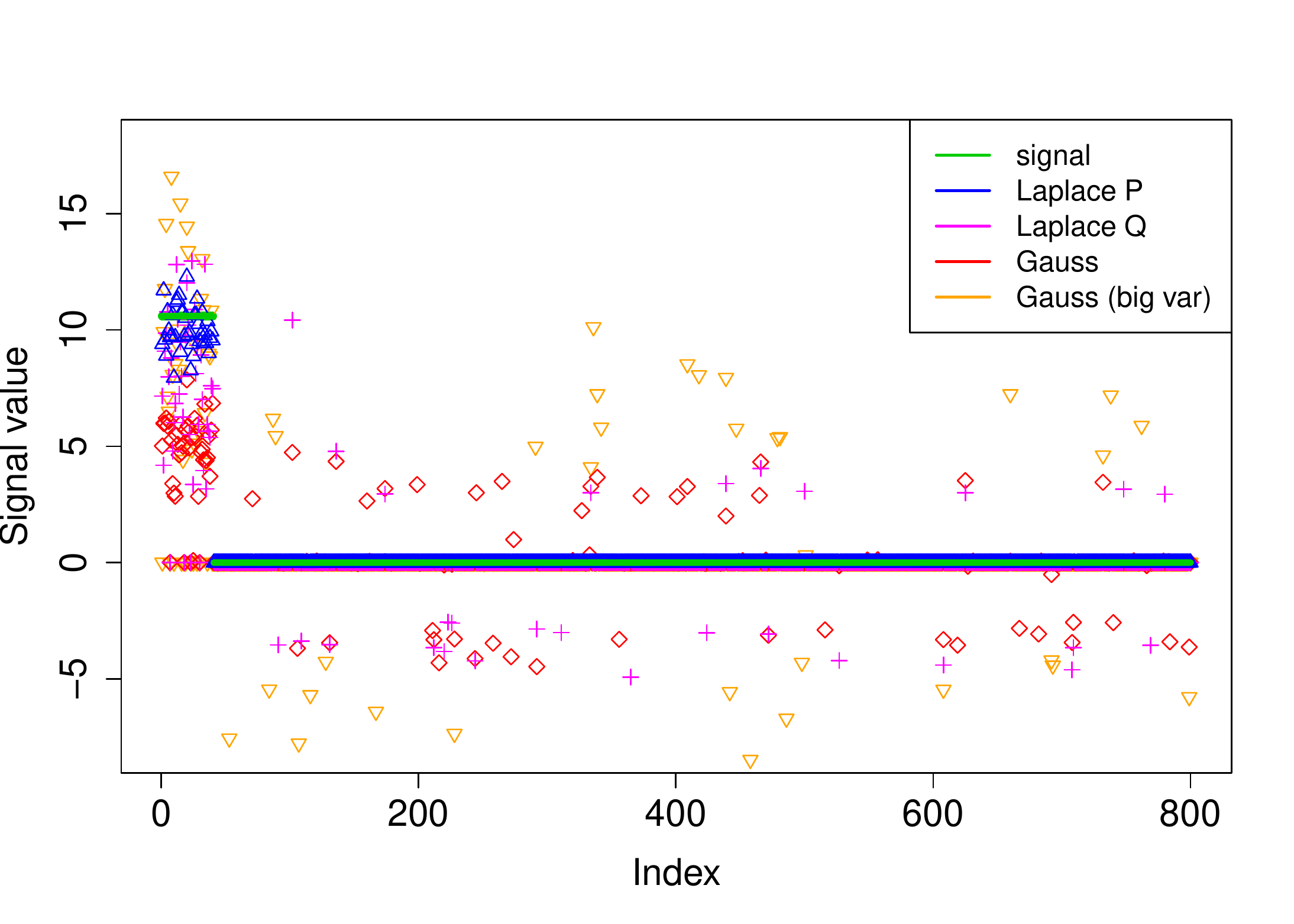}\\
\includegraphics[width=\textwidth]{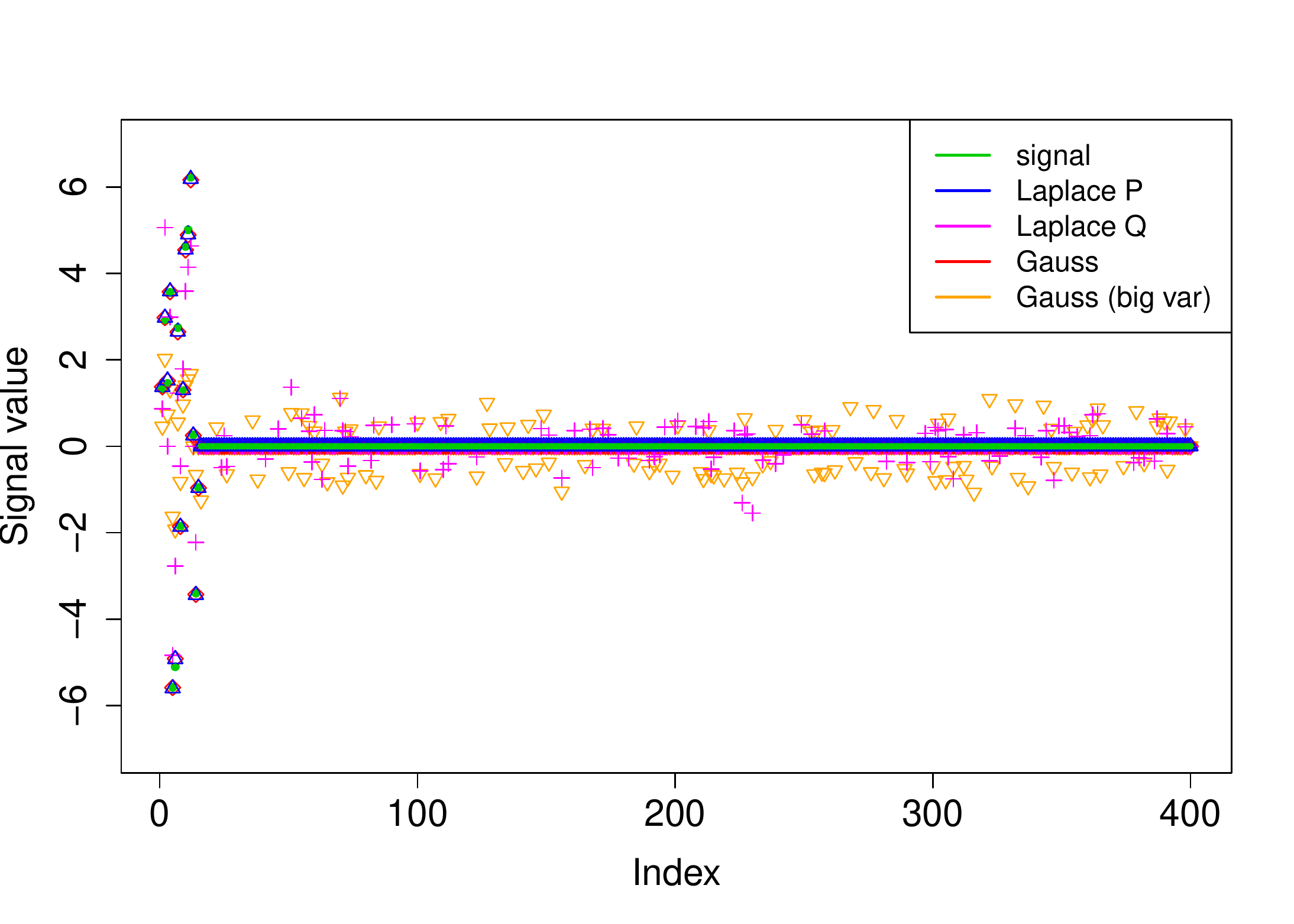}
  \end{minipage}
\caption{Linear regression with (i) identity design $X=I_n$ and (ii)-(iv) Gaussian design $X_{ij}\stackrel{iid}{\sim}N(0,\tau^2)$. We plot the underlying signal with non-zero components $\theta_i=A$ (green) and the posterior means of $\widetilde\Pi$ (blue), $\widetilde{Q}$ (purple), VB with Gaussian slabs (red) and VB with rescaled Gaussian slabs (orange). From left to right and top to bottom, we set the parameters $(n,p,s,A)$: (i) $(400,400,40,4\sqrt{\log n})$; (ii) $(100,200,20,2\log n)$; (iii) $(200,800,40,2\log n)$; (iv) $(100,400,15,U(-8,8))$. We set (ii) $\tau=1$; (iii) $\tau=0.1$; (iv) $\tau=0.5$.}
\label{fig: linreg1}
\end{figure}

\subsection{The effect of the hyper-parameter $\lambda$} \label{sec:sim:compare:lambda}

Theorem \ref{thm:recovery} states that for a wide range of hyper-parameter values $\lambda\in[\frac{\|X\|}{p}, \frac{C\|X\|\sqrt{\log p}}{s_0}]$, our VB algorithm has good asymptotic properties. However, the finite-sample performance depends on $\lambda$ as we now investigate. We ran our algorithm for different choices of $\lambda$, ranging from $1/20$ to $20$, on simulated data similar to that in the preceding subsections. 

We consider four different settings, each with Gaussian design with entries $X_{ij} \stackrel{iid}{\sim} N(0,\tau^2)$, non-zero signal components set to $\theta_i=A$ and noise variance $\varsigma^2=1$ assumed to be known. We take (i) $(n,p,s,\tau) = (200,300,15,0.5)$, $A=2\log n$; (ii) $(n,p,s,\tau) = (500,1000,50,1)$, $A=2\log n$; (iii) $(n,p,s,\tau) = (200,500,20,0.2)$, $A\stackrel{iid}{\sim}U(-10,10)$; and (iv) $(n,p,s,\tau) = (1000,2000,15,2)$, $A\stackrel{iid}{\sim}U(-8,8)$. In all cases, the non-zero signal components are located at the beginning of the signal. We ran each algorithm 200 times and report the results in Table \ref{table:compare:lambda}. The choice of $\lambda$ can indeed significantly influence the finite-sample behaviour of the algorithm (e.g. cases (ii) and (iii)), but not always ((i) and (iv)). There was not clear evidence to support a particular fixed choice of $\lambda$, since larger values sometimes performed better ((ii) and (iv)) and sometime worse ((i) and (iii)). This suggests using a data-driven choice of $\lambda$ may be helpful in practice. As expected, larger choices for $\lambda$, which cause more shrinkage, result in smaller FDR and TPR. The runtime across hyper-parameter choices were broadly comparable.

\begin{table}[tbp]
\centering

\begin{footnotesize}
\begin{tabular}{ |l|l|c|c|c|c|    }
\hline

Metric& Method  &(i) & (ii)&(iii)&(iv) \\ \hline
\hline
\multirow{4}{*}{$\ell_2-\text{error}$} & $\lambda=1/20$ &\textbf{0.56 $\pm$ 0.11}  &35.81 $\pm$ 2.17 &2.49 $\pm$ 0.50 &0.09 $\pm$ 0.02 \\
&$\lambda=1/4$ & 0.57 $\pm$ 0.12  & 35.28 $\pm$ 2.32 & \textbf{2.34 $\pm$ 0.50}  &0.08 $\pm$ 0.02 \\
 &$\lambda=1$& 0.57 $\pm$ 0.11 &16.38 $\pm$ 14.63 &\textbf{2.38 $\pm$ 0.48}&0.08 $\pm$ 0.02  \\
 & $\lambda=4$ & 0.67 $\pm$ 0.12 &\textbf{0.34 $\pm$ 0.03} &3.56 $\pm$ 0.51&\textbf{0.07 $\pm$ 0.02} \\
 & $\lambda=20$ & 1.85 $\pm$ 0.22&0.47 $\pm$ 0.05 &11.94 $\pm$ 1.01&\textbf{0.07 $\pm$ 0.02}\\ \hline
\hline
\multirow{4}{*}{FDR} & $\lambda=1/20$ &\textbf{0.00 $\pm$ 0.00} &0.92 $\pm$ 0.00 & \textbf{0.00 $\pm$ 0.00} & \textbf{0.00 $\pm$ 0.00}\\
& $\lambda=1/4$ &\textbf{0.00 $\pm$ 0.01}&0.92 $\pm$ 0.00 &\textbf{0.00 $\pm$ 0.00}&\textbf{0.00 $\pm$ 0.00}\\
 & $\lambda=1$ &\textbf{0.00 $\pm$ 0.01} & 0.51 $\pm$ 0.45& \textbf{0.00 $\pm$ 0.01} & \textbf{0.00 $\pm$ 0.00}\\
 & $\lambda=4$ &\textbf{0.00 $\pm$ 0.00}&\textbf{0.00 $\pm$ 0.00}&\textbf{0.00 $\pm$ 0.01}&\textbf{0.00 $\pm$ 0.01} \\
 & $\lambda=20$ &\textbf{0.00 $\pm$ 0.00}&\textbf{0.00 $\pm$ 0.00} &\textbf{0.00 $\pm$ 0.02}& \textbf{0.00 $\pm$ 0.00}\\ \hline
\hline
\multirow{4}{*}{TPR} & $\lambda=1/20$ &\textbf{1.00 $\pm$ 0.00}&\textbf{1.00 $\pm$ 0.00} &0.91 $\pm$ 0.04   & 0.95 $\pm$ 0.03\\
& $\lambda=1/4$ &\textbf{1.00 $\pm$ 0.00}&\textbf{1.00 $\pm$ 0.01} &\textbf{0.92 $\pm$ 0.04}  & 0.96 $\pm$ 0.03\\
 & $\lambda=1$ &\textbf{1.00 $\pm$ 0.00}&\textbf{1.00 $\pm$ 0.01} &\textbf{0.92 $\pm$ 0.04} &0.97 $\pm$ 0.03 \\
 & $\lambda=4$ & \textbf{1.00 $\pm$ 0.00}&\textbf{1.00 $\pm$ 0.01}  &0.90 $\pm$ 0.04  &\textbf{0.98 $\pm$ 0.03} \\
 & $\lambda=20$ &\textbf{1.00 $\pm$ 0.00}&\textbf{1.00 $\pm$ 0.00}  &0.59 $\pm$ 0.07 & \textbf{0.98 $\pm$ 0.03}\\ \hline
\hline
\multirow{4}{*}{runtime (sec)} &$\lambda=1/20$ &\textbf{0.04 $\pm$ 0.01}  & \textbf{0.75 $\pm$ 0.11}& \textbf{0.08 $\pm$ 0.02} &\textbf{4.23 $\pm$ 0.46}\\
& $\lambda=1/4$ &\textbf{0.04 $\pm$ 0.01}  & 0.81 $\pm$ 0.18 & \textbf{0.08 $\pm$ 0.02}  &\textbf{4.23 $\pm$ 0.50}\\
 & $\lambda=1$ & \textbf{0.04 $\pm$ 0.01}  & 1.71 $\pm$   0.63& \textbf{0.08 $\pm$ 0.02}&  \textbf{4.23 $\pm$ 0.46} \\
 & $\lambda=4$  &\textbf{0.04 $\pm$ 0.02} & 1.26 $\pm$ 0.15 & \textbf{0.08 $\pm$ 0.02}   & 4.25 $\pm$ 0.50 \\
 & $\lambda=20$  &\textbf{0.04 $\pm$ 0.01}  & \textbf{0.75 $\pm$ 0.06} & \textbf{0.08 $\pm$ 0.01}  & 4.29 $\pm$ 0.65 \\ \hline
\hline
\end{tabular}
\caption{Performance of sparsevb for different hyper-parameter values $\lambda$. We take Gaussian design $X_{ij}\stackrel{iid}{\sim}N(0,\tau^2)$, place the non-zero signal coefficients $\theta_{0,i}=A$ at the beginning of the signal, and set the parameters $(n,p,s,\tau,A)$ equal to (i) $(200, 300, 15, 0.5, 2\log n)$; (ii) $(500, 1000, 50, 1, 2\log n)$; (iii) $(200, 500, 20, 0.2, U(-10,10))$; (iv) $(1000, 2000, 15, 2, U(-8,8))$.}
\label{table:compare:lambda}
\end{footnotesize}
\end{table}

\subsection{Noise misspecification}\label{sec:sim:misspecification}

We investigate the robustness of the Bayesian model selection methods to misspecification of the noise distribution in practice. Note that our theoretical results are also robust to some misspecification, see Remark \ref{rem:misspecification} in Section \ref{sec: proof} below. We consider Gaussian design $X_{ij}\stackrel{iid}{\sim}N(0,2)$, set the model parameters $n=200$, $p=400$, $s=20$, and take non-zero signal coefficients $\theta_i\stackrel{iid}{\sim}U(-10,10)$ located in the beginning of $\theta$. We compare the correctly-specified Gaussian noise case (i) $Z_i\stackrel{iid}{\sim}N(0,1)$ in model \eqref{eq:model} with the misspecified noise cases: (ii) Laplace noise $Z_i\stackrel{iid}{\sim}\text{Lap}(0,1)$;  (iii) uniform noise $Z_i \stackrel{iid}{\sim} U(-2,2)$; (iv) Student noise with 3 degrees of freedom $Z_i \stackrel{iid}{\sim} t_3$. We apply the same parametrizations of the methods as in Section \ref{sec:unknown_var}. We ran each experiments 200 times and collect the results in Table \ref{table:misspecification}. Our method (sparsevb) gave similar results to varbvs, ebreg and EMVS, while the SSLASSO performed slightly worse. The noise distribution does not seem to have a major effect on the results, hence these algorithms seem robust to noise misspecification. It is worthwhile to further investigate this phenomenon both empirically and analytically.

\begin{table}[tbp]
\centering

\begin{footnotesize}
\begin{tabular}{ |l|l|c|c|c|c|    }
\hline

Metric& Method&(i) $N(0,1)$ & (ii) $\text{Lap}(0,1)$ &(iii) $U(-2,2)$ &(iv) Student $t_3$ \\ \hline
\hline
\multirow{3}{*}{$\ell_2-\text{error}$} & sparsevb& 0.18 $\pm$ 0.05 &\textbf{0.24 $\pm$ 0.04}  &\textbf{0.21 $\pm$ 0.03} &0.30 $\pm$ 0.06  \\
&varbvs& \textbf{0.17 $\pm$ 0.03} &\textbf{0.24 $\pm$ 0.04}  & \textbf{0.21 $\pm$ 0.03} &0.30 $\pm$ 0.06  \\
 & EMVS & 0.59 $\pm$ 0.03 &1.03  $\pm$ 0.14 & 0.89 $\pm$ 0.16 & 1.13 $\pm$ 0.43 \\
 & SSLASSO  &5.99 $\pm$ 0.98 &4.07 $\pm$ 1.02  &4.88 $\pm$ 0.62 &4.87 $\pm$ 0.78\\
 & ebreg & 0.26 $\pm$ 0.05 &0.26 $\pm$ 0.07 & 0.23 $\pm$ 0.05 &\textbf{0.23 $\pm$ 0.05} \\
 \hline
\hline
\multirow{3}{*}{FDR} & sparsevb& \textbf{0.00 $\pm$ 0.00}& \textbf{0.00 $\pm$ 0.00}  &\textbf{0.00 $\pm$ 0.00} &\textbf{0.00 $\pm$ 0.00}\\
& varbvs& \textbf{0.00 $\pm$ 0.00} &\textbf{0.00 $\pm$ 0.00} &\textbf{0.00 $\pm$ 0.00} &\textbf{0.00 $\pm$ 0.01}\\
 & EMVS&\textbf{0.00 $\pm$ 0.00} &\textbf{0.00 $\pm$ 0.00}&\textbf{0.00 $\pm$ 0.00}&\textbf{0.00 $\pm$ 0.01}\\
 & SSLASSO & \textbf{0.00 $\pm$ 0.00}&\textbf{0.00 $\pm$ 0.00}&\textbf{0.00 $\pm$ 0.00}&\textbf{0.00 $\pm$ 0.00}\\
& ebreg & 0.01 $\pm$ 0.02 &0.01 $\pm$ 0.05 &0.01 $\pm$ 0.05 &0.01 $\pm$ 0.03  \\
\hline
\hline
\multirow{3}{*}{TPR} & sparsevb &\textbf{1.00 $\pm$ 0.01} &\textbf{1.00 $\pm$ 0.00}&\textbf{0.95 $\pm$ 0.00} &\textbf{0.90 $\pm$ 0.01}   \\
& varbvs &\textbf{1.00 $\pm$ 0.00} &\textbf{1.00 $\pm$ 0.00}&\textbf{0.95 $\pm$ 0.01}  &\textbf{0.90 $\pm$ 0.01}  \\
 & EMVS& 0.95 $\pm$ 0.02 &0.92 $\pm$ 0.02&0.89 $\pm$ 0.02 &0.81 $\pm$ 0.04  \\
 & SSLASSO & 0.67 $\pm$ 0.04 & 0.72 $\pm$ 0.05&0.64 $\pm$ 0.02 &0.64 $\pm$ 0.04   \\
& ebreg &\textbf{1.00 $\pm$ 0.01} &\textbf{1.00 $\pm$ 0.00} &\textbf{0.95 $\pm$ 0.01} &\textbf{0.90 $\pm$ 0.01}  \\
 \hline
\hline
\multirow{3}{*}{runtime (sec)} & sparsevb& 0.22 $\pm$ 0.03 &\textbf{0.24 $\pm$ 0.06} & 0.24  $\pm$ 0.06&\textbf{0.26 $\pm$ 0.08}\\
& varbvs& 0.32 $\pm$ 0.05&0.32 $\pm$ 0.05  & 0.35 $\pm$ 0.08 &0.35 $\pm$ 0.09 \\
 & EMVS& 1.24 $\pm$ 0.15 &1.19 $\pm$ 0.24& 1.26 $\pm$ 0.27 & 1.31 $\pm$ 0.34 \\
 & SSLASSO  &\textbf{0.16 $\pm$ 0.03} &0.28 $\pm$ 0.04 &\textbf{0.22 $\pm$ 0.05} & 0.28 $\pm$ 0.07  \\
& ebreg & 24.37 $\pm$ 7.10 &24.89 $\pm$ 3.78 &127.72 $\pm$ 4.51 &28.19 $\pm$4.51 \\
\hline
\hline
\end{tabular}
\caption{Noise misspecification: we compare the robustness of Bayesian model selection methods under misspecified noise. We take Gaussian design $X_{ij}\stackrel{iid}{\sim}N(0,2)$, set the model parameters $n=200$, $p=400$, $s=20$, and take non-zero coefficients $\theta_i\stackrel{iid}{\sim}U(-10,10)$ located in the beginning of the signal. We ran each experiment 200 times and report the means and standard deviations.}
\label{table:misspecification}
\end{footnotesize}
\end{table}

\setcounter{lemma}{0}
\setcounter{theorem}{0}
\setcounter{definition}{0}
\setcounter{remark}{0}
\setcounter{equation}{0}
     \renewcommand{\thelemma}{\Alph{section}.\arabic{lemma}}
    \renewcommand{\thetheorem}{\Alph{section}.\arabic{theorem}}
   \renewcommand{\thedefinition}{\Alph{section}.\arabic{definition}}
   \renewcommand{\theremark}{\Alph{section}.\arabic{remark}}

\subsection{Bayesian variable selection methods under correlated inputs}\label{sec:correlated_design}

We lastly consider the common situation of correlated input variables. We take each row $X_{i\cdot} \stackrel{iid}{\sim} N_p(0,\Sigma)$ with $\Sigma_{jk}=\rho$ for $j\neq k$ and $\Sigma_{jj}=1$, giving standard normal predictors with non-zero correlation $\rho$. We take (i) $(n,p,s,\varsigma)=(100,400,10,0.2)$, correlation $\rho=0.3$ and non-zero coefficients $\theta_i\stackrel{iid}{\sim} U(-3,3)$ at the beginning of the signal; (ii) the same setting as in (i), but with higher correlation $\rho=0.7$; (iii) $(n,p,s,\varsigma)=(200,800,20,5)$, correlation $\rho = 0.3$ and non-zero coefficients $\theta_i=2\log n$ at the end of the signal; (iv) the same setting as in (iii), but with higher correlation $\rho=0.7$. We apply the same parametrizations of the methods as in Section \ref{sec:unknown_var}. The results are summarized in Table \ref{table:error:compare:methods:correlation}.

One might expect that mean-field VB methods should not perform so well under correlated inputs due to their factorizable structure. This was not the case in our simulations, where the VB methods perform competitively with the other methods, often providing the best results  (except perhaps in (iv), where varbvs sometimes sometimes gave large $\ell_2$ error). The correlated design also does not seem to substantially influence the run time.

While our simulations are certainly not extensive, they suggest that mean-field VB can perhaps still be effective in certain correlated input settings and understanding the exact effect of correlation on VB seems to be a subtle question. It is currently not well understood how VB, or indeed even the true posterior, behaves in general correlated design settings. This important and practically very relevant setting requires further investigation, both theoretically and empirically.

\begin{table}[tbp]
\centering

\begin{footnotesize}
\begin{tabular}{ |l|l|c|c|c|c|    }
\hline

Metric& Method &(i) & (ii)&(iii)&(iv) \\ \hline
\hline
\multirow{3}{*}{$\ell_2-\text{error}$} &sparsevb &\textbf{0.12 $\pm$ 0.06}  & 0.89 $\pm$ 1.40 &\textbf{1.97 $\pm$ 0.37} &\textbf{4.85  $\pm$ 1.29} \\
&varbvs &0.13 $\pm$ 0.06  & \textbf{0.30 $\pm$ 0.10}  & 2.10 $\pm$ 0.43 & 27.18 $\pm$ 23.59 \\
 & EMVS& 4.80 $\pm$ 0.21& 5.29 $\pm$ 0.26 & 4.04 $\pm$ 0.30 & 7.04 $\pm$ 0.98  \\
 & SSLASSO &1.62 $\pm$ 0.35 & 0.97 $\pm$ 0.36 & 56.70 $\pm$ 7.78 & 79.17 $\pm$ 4.95 \\
 & ebreg & 0.34 $\pm$ 0.06 & 0.56 $\pm$ 0.14 & 5.41 $\pm$ 0.67 & 6.41 $\pm$ 1.21 \\
 \hline
\hline
\multirow{3}{*}{FDR} & sparsevb &\textbf{0.00 $\pm$ 0.00}   &0.18 $\pm$ 0.34 & \textbf{0.00 $\pm$ 0.01} &\textbf{0.00 $\pm$ 0.00}\\
& varbvs& \textbf{0.00 $\pm$ 0.00}  &\textbf{0.00 $\pm$ 0.00} & 0.01 $\pm$ 0.02& 0.31 $\pm$ 0.26\\
 & EMVS& \textbf{0.00 $\pm$ 0.00} &\textbf{0.00 $\pm$ 0.00} &\textbf{0.00 $\pm$ 0.01} & 0.14 $\pm$ 0.08\\
 & SSLASSO &\textbf{0.00 $\pm$ 0.00}&\textbf{0.00 $\pm$ 0.00}& 0.18 $\pm$ 0.16& 0.41 $\pm$ 0.19\\
 & ebreg &\textbf{0.00 $\pm$ 0.00} & \textbf{0.00 $\pm$ 0.00} & 0.43 $\pm$ 0.05 & 0.28 $\pm$ 0.08 \\
\hline
\hline
\multirow{3}{*}{TPR} & sparsevb &\textbf{0.96 $\pm$ 0.05} &0.95 $\pm$ 0.10 & \textbf{1.00 $\pm$ 0.00}  &  \textbf{1.00 $\pm$ 0.00} \\
& varbvs & 0.95 $\pm$ 0.05 &\textbf{1.00 $\pm$ 0.00} & \textbf{1.00 $\pm$ 0.00} &0.69 $\pm$ 0.32\\
 & EMVS& 0.01 $\pm$ 0.03 &0.02 $\pm$ 0.04 &\textbf{1.00 $\pm$ 0.00} &\textbf{1.00 $\pm$ 0.00} \\
 & SSLASSO & 0.48 $\pm$ 0.04  &0.81 $\pm$ 0.08  & 0.34 $\pm$ 0.10  &0.18 $\pm$ 0.05 \\
 & ebreg &  0.90 $\pm$ 0.01 & 0.96 $\pm$ 0.05  & \textbf{1.00 $\pm$ 0.00} &  \textbf{1.00 $\pm$ 0.00} \\
 \hline
\hline
\multirow{3}{*}{runtime (sec)} & sparsevb &0.22 $\pm$ 0.06  & 0.30 $\pm$ 0.06& 0.91 $\pm$ 0.11 &1.28 $\pm$ 0.17\\
& varbvs&0.51 $\pm$ 0.19  & 0.80 $\pm$ 0.52 & 12.31 $\pm$ 3.34 &30.54 $\pm$ 6.38\\
 & EMVS&\textbf{0.14 $\pm$ 0.07} & \textbf{0.15 $\pm$ 0.06} & 0.90 $\pm$ 0.18& 1.06 $\pm$ 0.26 \\
 & SSLASSO  &0.33 $\pm$ 0.07 & 0.38 $\pm$ 0.19 & \textbf{0.16 $\pm$ 0.03}  &  \textbf{0.16 $\pm$ 0.02} \\
 & ebreg & 13.90 $\pm$ 1.62 & 15.06 $\pm$ 2.96  & 64.14 $\pm$ 6.36 &59.98 $\pm$ 4.41 \\

\hline
\hline
\end{tabular}
\caption{Linear regression with correlated Gaussian design $X_{i\cdot} \stackrel{iid}{\sim} N_p(0,\Sigma)$, with correlation $\Sigma_{jk}=\rho$ for $j\neq k$ and $\Sigma_{jj}=1$. The noise variance $\varsigma^2$ is unknown and the non-zero signal coefficients equal $\theta_i=A$. We take the parameters $(n,p,s,A,\rho,\varsigma)$ equal to
(i) $(100,400,10,\stackrel{iid}{\sim} U(-3,3),0.3,0.2)$ (non-zero coefficients at the beginning);
(ii) $(100,400,10,\stackrel{iid}{\sim} U(-3,3),0.7,0.2)$ (at the beginning);
(iii) $(200,800,20,2\log n,0.3,5)$ (at the end); 
(iv) $(200,800,20,2\log n,0.7,5)$ (at the end).
We compare the means and standard deviations over 100 runs.}
\label{table:error:compare:methods:correlation}
\end{footnotesize}
\end{table}

\section{Proofs}\label{sec: proof}

\subsection{Full oracle results}

The proofs of the full oracle results in Theorems \ref{thm:oracle_full_recovery} and \ref{thm:oracle_full_dim} below rely on Theorem \ref{thm: general:VB}, which allows one to exploit exponential probability bounds for the posterior to control the corresponding probability under the variational approximation. To prove our results, it therefore suffices to show that on a suitable event, one can (a) control the KL divergence between the variational approximation and the true posterior and (b) establish the appropriate posterior tail inequality \eqref{cond:exp_decay}. Part (a) is dealt with in Section \ref{sec:KL} and (b) in Section \ref{sec:contraction} below. Define the events
\begin{align}\label{eq:T_event}
\mathcal{T}_0 = \{\|X^T(Y-X\theta_0)\|_\infty \leq 2 \|X\|\sqrt{\log p}\}
\end{align}
and
\begin{equation}\label{T1}
\begin{split}
\mathcal{T}_{1} = \mathcal{T}_{1}(\Gamma,\eps,\kappa) & = \mathcal{T}_0 \cap \Big\{ \Pi\big(\theta: |S_\theta|>\Gamma \big|Y\big) \leq 1/4 \Big\}  \cap \Big\{ \Pi\big(\theta:\, \|\theta-\theta_0\|_2 > \eps |Y\big) \leq e^{-\kappa} \Big\},
\end{split}
\end{equation}
for $\Gamma ,\eps,\kappa> 0$. The middle event in $\mathcal{T}_1$ says that the posterior puts most of its mass on models of dimension at most $\Gamma$; the number $1/4$ is unimportant and any number less than $1/2$ suffices. The third event says the posterior places all but exponentially small probability on an $\ell_2$-ball of radius $\eps$ about the truth and is used for a localization argument when bounding the KL divergence. The proof uses an iterative structure, using successive posterior localizations to eventually bound the KL divergence in Section \ref{sec:KL}. This idea is a useful technique from Bayesian nonparametrics, see e.g. \cite{nicklray2019}.

For parameters $\theta_0,\theta_*\in \R^p$, set $S_* = S_{\theta_*}$ and $s_* = |S_*|$ and define
\begin{equation}\label{Delta}
\Delta_* = (1+\tfrac{16}{\phi(S_*)^2} \tfrac{\lambda}{\bar{\lambda}}) s_*\log p  + \|X(\theta_0 - \theta_*)\|_2^2.
\end{equation}
This quantity appears in the posterior exponential probabilities, which take the form $e^{-c\Delta_*}$. We require the following parameter choices for the event $\mathcal{T}_1$ in \eqref{T1}:
\begin{equation}\label{parameters}
\begin{split}
\Gamma & =  \Gamma_{\theta_0,\theta_*} = s_* + \frac{12}{A_4} \left(1 +\frac{16}{\phi(S_*)^2}\frac{\lambda}{\bar{\lambda}} \right)s_*  + \frac{12\|X(\theta_0-\theta_*)\|_2^2}{A_4 \log p} = s_* + \frac{12\Delta_*}{A_4\log p},\\
\eps & = \eps_{\theta_0,\theta_*} = \tfrac{ML_0^{1/2} }{\|X\| \widetilde{\psi}_{L_0+2}(S_0)^2} \left[ \tfrac{\sqrt{s_*\log p}}{\phi(S_*)} +\|X(\theta_0-\theta_*)\|_2 \right], \\
\kappa & = \kappa_{\theta_0,\theta_*} = (\Gamma_{\theta_0,\theta_*} + 1)\log p,\\
L_0 & = \max(3 + 12/A_4,2+A_4/2)
\end{split}
\end{equation}
for some $M>0$ large enough depending only on $A_1,A_3,A_4$.

\begin{lemma}[]\label{lem: prob}
(i) The event $\mathcal{T}_0$ defined in \eqref{eq:T_event} satisfies
$$\inf_{\theta_0 \in \R^p} P_{\theta_0}(\mathcal{T}_0) \geq 1- 2/p.$$
(ii) Suppose the prior satisfies \eqref{eq:prior_cond} and \eqref{prior_lambda}. For $\theta_0 \in \R^p \backslash \{0\}$, let $\theta_*\in \R^p$ be any vector satisfying $1 \leq s_* = |S_{\theta_*}| \leq |S_{\theta_0}| = s_0$,
$$\frac{s_*}{\phi(S_*)^2} \leq \frac{s_0}{\phi(S_0)^2} \quad \text{ and } \quad \|X(\theta_0 - \theta_*)\|_2^2 \leq (s_0 - s_*) \log p.$$
Then the event $\mathcal{T}_1$ given in \eqref{T1} with parameters $\Gamma,\eps,\kappa$ chosen according to \eqref{parameters} satisfies
$$P_{\theta_0} (\mathcal{T}_1) \to 1$$
uniformly over all $\theta_0$ and $\theta_*$ as above.
\end{lemma}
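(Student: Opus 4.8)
This is a direct Gaussian tail estimate. Under $P_{\theta_0}$ one has $Y-X\theta_0=Z\sim N_n(0,I_n)$, so the $i$th coordinate of $X^T(Y-X\theta_0)=X^TZ$ is $X_{\cdot i}^TZ\sim N(0,\|X_{\cdot i}\|_2^2)$, a centred Gaussian whose variance is at most $\|X\|^2$ by \eqref{X}. Applying the bound $P(|N(0,\sigma^2)|>t)\leq 2e^{-t^2/(2\sigma^2)}$ with $t=2\|X\|\sqrt{\log p}$ and $\sigma^2=\|X_{\cdot i}\|_2^2\leq \|X\|^2$ gives $P_{\theta_0}(|(X^TZ)_i|>t)\leq 2e^{-2\log p}=2p^{-2}$ for each $i$, and a union bound over the $p$ coordinates yields $P_{\theta_0}(\mathcal{T}_0^c)\leq 2/p$. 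Since the law of $Z$ is free of $\theta_0$, this is uniform in $\theta_0$.

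\textbf{Part (ii).} I would bound the three events defining $\mathcal{T}_1$ in \eqref{T1} separately. The event $\mathcal{T}_0$ contributes probability at least $1-2/p\to 1$ by part (i). Since $\mathcal{T}_1\subseteq\mathcal{T}_0$, it then suffices to show each of the two posterior tail events fails with $P_{\theta_0}$-probability tending to zero on $\mathcal{T}_0$. For this I would pass from the posterior probability to its $P_{\theta_0}$-expectation via Markov's inequality: for the dimension event, $P_{\theta_0}(\mathcal{T}_0\cap\{\Pi(|S_\theta|>\Gamma|Y)>1/4\})\leq 4\,E_{\theta_0}[\Pi(|S_\theta|>\Gamma|Y)1_{\mathcal{T}_0}]$, and for the recovery event, $P_{\theta_0}(\mathcal{T}_0\cap\{\Pi(\|\theta-\theta_0\|_2>\eps|Y)>e^{-\kappa}\})\leq e^{\kappa}E_{\theta_0}[\Pi(\|\theta-\theta_0\|_2>\eps|Y)1_{\mathcal{T}_0}]$. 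It therefore remains to bound the expected \emph{posterior} mass of the two bad sets on $\mathcal{T}_0$, the second needing to be $o(e^{-\kappa})$.

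These expected posterior masses are exactly the oracle-type posterior contraction statements for the \emph{true} posterior, which I would obtain by the classical prior-mass-versus-testing argument (adapting \cite{castillo:2015} / Section \ref{sec:contraction}). Writing $\Pi(A|Y)=N/D$ with $N=\int_A e^{\ell(\theta)-\ell(\theta_0)}d\Pi$ and $D=\int e^{\ell(\theta)-\ell(\theta_0)}d\Pi$ for the log-likelihood $\ell$, I would lower bound the evidence $D$ by restricting the prior to models containing $S_*$ and to a small slab neighbourhood of $\theta_*$, using the exponential prior \eqref{eq:prior_cond}, the Laplace slab mass, and the approximation condition $\|X(\theta_0-\theta_*)\|_2^2\leq (s_0-s_*)\log p$ to control the likelihood deficit; this produces $D\geq e^{-c_1\Delta_*}$ up to prior-mass factors, with $\Delta_*$ as in \eqref{Delta}. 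For the numerator I would split over model size $s$: the $\binom{p}{s}$ models of each size are penalised by the factor $(A_2p^{-A_4})^s$ from \eqref{eq:prior_cond}, which defeats the combinatorial entropy, while on $\mathcal{T}_0$ the control $\|X^TZ\|_\infty\leq 2\|X\|\sqrt{\log p}$ bounds the likelihood ratio on models that are too large or too far from $\theta_0$; the compatibility numbers $\phi,\overline\phi,\widetilde\phi$ then convert the resulting prediction-error bounds into bounds on $|S_\theta|$ and on $\|\theta-\theta_0\|_2$. Together these yield $E_{\theta_0}[\Pi(A|Y)1_{\mathcal{T}_0}]\leq Ce^{-c\Delta_*}$ for each bad set.

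The main obstacle is the calibration of exponents. The dimension bound follows once $\Gamma$ exceeds $s_*+(\text{const})\,\Delta_*/\log p$, which is exactly the choice in \eqref{parameters}, giving $E_{\theta_0}[\Pi(|S_\theta|>\Gamma|Y)1_{\mathcal{T}_0}]\to 0$. The recovery bound is more delicate: since $\kappa=(\Gamma+1)\log p=(s_*+1)\log p+\tfrac{12}{A_4}\Delta_*\lesssim \Delta_*$, I must ensure the contraction exponent $c\Delta_*$ \emph{strictly} exceeds $\kappa$, so that $e^{\kappa}E_{\theta_0}[\Pi(\|\theta-\theta_0\|_2>\eps|Y)1_{\mathcal{T}_0}]\to 0$. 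This is achieved by taking the radius $\eps$ in \eqref{parameters} at a large enough fixed multiple of the oracle $\ell_2$ rate, which is precisely the role of the constants $M$ and $L_0$, so that the exponent in the contraction bound outgrows the fixed multiple of $\Delta_*$ entering $\kappa$. The remaining bookkeeping, relating the compatibility numbers evaluated at $S_*$ and at $S_0$ and checking that membership $\theta_0\in\Theta_{\rho_n,s_n}$ propagates to $\theta_*$, is where the auxiliary hypothesis $s_*/\phi(S_*)^2\leq s_0/\phi(S_0)^2$ is used; all estimates are uniform in $(\theta_0,\theta_*)$ because the constants depend only on $A_1,A_3,A_4$.
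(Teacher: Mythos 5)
Your proof is correct and follows essentially the same route as the paper: part (i) is the identical Gaussian tail bound plus union bound, and part (ii) uses Markov's inequality (with factors $4$ and $e^{\kappa}$) to reduce to exponential-in-$\Delta_*$ bounds for the expected posterior mass on $\mathcal{T}_0$, calibrated via $L_0$ and $M$ so that the contraction exponent strictly beats $\kappa$. The only difference is presentational: where you sketch re-deriving the posterior dimension and contraction bounds from scratch, the paper simply invokes its Lemmas B.2 and B.4 (exponential reformulations of Theorems 10 and 3 of \cite{castillo:2015}), with the hypotheses $s_*/\phi(S_*)^2 \leq s_0/\phi(S_0)^2$ and $\|X(\theta_0-\theta_*)\|_2^2 \leq (s_0-s_*)\log p$ entering exactly where you place them, namely to dominate the $S_0$-dependent term of the contraction bound by the $\theta_*$-oracle quantity $\Delta_*$.
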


\begin{proof}
(i) Under $P_{\theta_0}$, $X^T(Y-X\theta_0) = X^T Z \sim N_p(0,X^TX)$. Since $(X^TZ)_i \sim N(0,(X^TX)_{ii})$ and $(X^TX)_{ii} \leq \|X\|^2$ for all $1 \leq i \leq p$, a union bound and the standard Gaussian tail inequality give
$$P_{\theta_0} (\mathcal{T}_0^c) = P(\|X^TZ\|_\infty \geq 2\|X\|\sqrt{\log p}) \leq \sum_{i=1}^p P( |N(0,1)| \geq 2\sqrt{\log p}) \leq p \frac{2}{\sqrt{2\pi}} e^{-2\log p}.$$
(ii) Applying Markov's inequality and Lemma \ref{lem: support} below with $M=3$ gives
\begin{align*}
& P_{\theta_0} \left( \big\{ \Pi\big(\theta: |S_\theta|> \Gamma_{\theta_0,\theta_*} |Y\big) > 1/4 \big\} \cap \mathcal{T}_0 \right) \\
& \quad  \leq 4 E_{\theta_0}\Pi\big(\theta: |S_\theta|> \Gamma_{\theta_0,\theta_*} |Y\big) 1_{\mathcal{T}_0} \\
& \quad  \leq C(A_2,A_4) \exp \left(  -\left( 1+\tfrac{16}{\phi(S_*)^2}\tfrac{\lambda}{\bar{\lambda}} \right) s_*\log p \right)\\
& \quad \leq C(A_2,A_4) e^{-s_* \log p} \leq C(A_2,A_4) e^{-\log p}.
\end{align*}
Since the right-hand side does not depend on $\theta_0$ or $\theta_*$, the probability tends to zero uniformly as required.

Under the assumptions on $\theta_*$,
\begin{equation}\label{exponent_comparison}
\begin{split}
(1+\tfrac{16}{\phi(S_*)^2}\tfrac{\lambda}{\bar{\lambda}}) s_*\log p + \|X(\theta_0-\theta_*)\|_2^2 & \leq s_* \log p + \tfrac{16}{\phi(S_0)^2}\tfrac{\lambda}{\bar{\lambda}} s_0\log p + (s_0 - s_*) \log p \\
& = (1+\tfrac{16}{\phi(S_0)^2}\tfrac{\lambda}{\bar{\lambda}}) s_0\log p.
\end{split}
\end{equation}
Therefore, applying Lemma \ref{lem: contraction:spike:slab} with $L \geq 1$ yields
\begin{align*}
& E_{\theta_0} \Pi \left( \theta : \|\theta-\theta_0\|_2 > \frac{ML^{1/2} }{\|X\| \overline{\psi}_{L+2}(S_0)^2} \left[ \frac{\sqrt{s_*\log p}}{\phi(S_*)} +\|X(\theta_0-\theta_*)\|_2 \right] \Big| Y \right)1_{\mathcal{T}_0},\\
& \leq C \exp \left(- \left[ L\wedge \tfrac{4(L+2)}{A_4} \right] \left[ (1+\tfrac{16}{\phi(S_*)^2}\tfrac{\lambda}{\bar{\lambda}}) s_*\log p + \|X(\theta_0-\theta_*)\|_2^2 \right] \right).
\end{align*}
Using Markov's inequality and the last display with $L = L_0 = \max(3 + 12/A_4,2+A_4/2)$,
\begin{align*}
& P_{\theta_0} \left( \{ \Pi\big(\theta: \|\theta-\theta_0\|_2> \eps |Y) > e^{-\kappa} \big\} \cap \mathcal{T}_0 \right) \\
& \leq  e^{\kappa}  E_{\theta_0}\Pi\big(\theta: \|\theta-\theta_0\|_2> \eps |Y\big) 1_{\mathcal{T}_0} \\
& \leq C \exp \left(- \left[ L\wedge \tfrac{4(L+2)}{A_4} -\tfrac{12}{A_4} \right] \left[ (1+\tfrac{16}{\phi(S_*)^2}\tfrac{\lambda}{\bar{\lambda}}) s_*\log p + \|X(\theta_0-\theta_*)\|_2^2 \right]  +(s_*+1) \log p \right) \\
& \leq C e^{-s_* \log p} \leq Ce^{-\log p}.
\end{align*}
Since the right-hand side again does not depend on $\theta_0$ or $\theta_*$, the probability tends to zero uniformly as required.
\end{proof}

\begin{theorem}[Full oracle recovery]\label{thm:oracle_full_recovery}
Suppose the model selection prior \eqref{eq:prior} satisfies \eqref{eq:prior_cond} and \eqref{prior_lambda}. For $\theta_0 \in \R^p \backslash \{0\}$, let $\theta_*\in \R^p$ be any vector satisfying $1 \leq s_* = |S_{\theta_*}| \leq |S_{\theta_0}| = s_0$ and $\|X(\theta_0 - \theta_*)\|_2^2 \leq (s_0 - s_*) \log p.$
Then the variational Bayes posterior $\widetilde{\Pi}$ satisfies, uniformly over all $\theta_0$ and $\theta_*$ as above,
\begin{align*}
& E_{\theta_0} \widetilde{\Pi} \left( \theta: \|X(\theta-\theta_0)\|_2\geq \frac{M\rho_n^{1/2} }{\overline{\psi}_{\rho_n}(S_0)} \left[ \frac{\sqrt{s_*\log p}}{\phi(S_*)} +\|X(\theta_0-\theta_*)\|_2 \right] \right) \\
& \lesssim \frac{1}{\rho_n} \left\{ 1+ \frac{\log (1/\widetilde{\phi}(\Gamma))}{\log p} + \frac{\lambda s_0 }{\|X\|\widetilde{\psi}_{L_0+2}(S_0)^2 \phi(S_0) \widetilde{\phi}(\Gamma)^2 \sqrt{\log p}}  \right\} + o(1)
\end{align*}
for any $\rho_n >2$, where $\Gamma, L_0$ are given in \eqref{parameters}. Moreover, both
$$E_{\theta_0} \widetilde{\Pi} \left( \theta : \|\theta-\theta_0\|_1 > \| \theta_0 - \theta_*\|_1 + \frac{M\rho_n }{\overline{\psi}_{\rho_n}(S_0)^2} \left[ \frac{s_* \sqrt{\log p}}{\|X\| \phi(S_*)^2} +\frac{\|X(\theta_0-\theta_*)\|_2^2}{\|X\| \sqrt{\log p}} \right] \right),$$
$$E_{\theta_0} \widetilde{\Pi} \left( \theta : \|\theta-\theta_0\|_2 > \frac{M\rho_n^{1/2} }{\|X\| \widetilde{\psi}_{\rho_n}(S_0)^2} \left[ \frac{\sqrt{s_*\log p}}{\phi(S_*)} +\|X(\theta_0-\theta_*)\|_2 \right]  \right),$$
satisfy the same inequality. Furthermore, the exact same inequalities hold for the variational Bayes posteriors $\widetilde{Q}$ and $\hat{Q}$.
\end{theorem}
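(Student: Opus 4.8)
The plan is to derive all three displayed bounds from the general variational inequality of Theorem~\ref{thm: general:VB}, applied on the high-probability event $A=\mathcal{T}_1$ of \eqref{T1}, by supplying two ingredients: an exponential posterior tail bound and an oracle bound on $\KL(\widetilde\Pi\|\Pi(\cdot|Y))$. Fix one bad event, say $\Theta_n=\{\theta:\|X(\theta-\theta_0)\|_2\geq M\rho_n^{1/2}\overline\psi_{\rho_n}(S_0)^{-1}[\sqrt{s_*\log p}/\phi(S_*)+\|X(\theta_0-\theta_*)\|_2]\}$; the $\ell_1$- and $\ell_2$-events are treated identically. Since $\mathcal{T}_1\subseteq\mathcal{T}_0$, I would apply the posterior contraction Lemma~\ref{lem: contraction:spike:slab} with $L\asymp\rho_n$ to obtain $E_{\theta_0}\Pi(\Theta_n|Y)1_{\mathcal{T}_1}\leq Ce^{-c\rho_n\Delta_*}$, verifying \eqref{cond:exp_decay} with $\delta_n\asymp\rho_n\Delta_*$ for the oracle quantity $\Delta_*$ in \eqref{Delta}; because $\Delta_*\geq s_*\log p\geq\log p$, we have $\delta_n\to\infty$. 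The same $L$ and $\delta_n$ serve all three losses, as the contraction lemma controls prediction, $\ell_1$ and $\ell_2$ error simultaneously with a common exponent.

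The heart of the argument is the oracle KL bound $E_{\theta_0}\KL(\widetilde\Pi\|\Pi(\cdot|Y))1_{\mathcal{T}_1}\lesssim\Delta_*\bigl\{1+\log(1/\widetilde\phi(\Gamma))/\log p+\lambda s_0/(\|X\|\widetilde\psi_{L_0+2}(S_0)^2\phi(S_0)\widetilde\phi(\Gamma)^2\sqrt{\log p})\bigr\}$. As $\widetilde\Pi$ minimizes KL over $\mathcal{P}_{MF}$, I would bound it by $\KL(\tilde P\|\Pi(\cdot|Y))$ for a well-chosen comparator $\tilde P$ lying in the smaller family $\mathcal{Q}_{MF}\subseteq\mathcal{P}_{MF}$: a diagonal Gaussian on a fixed support set times a Dirac off it, centred near the signal with coordinate variances of order $\|X\|^{-2}$. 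Writing the log-likelihood as $\ell$, I would decompose $\KL(\tilde P\|\Pi(\cdot|Y))=\KL(\tilde P\|\Pi)-E_{\tilde P}[\ell(\theta)-\ell(\theta_0)]+\log\int e^{\ell(\theta)-\ell(\theta_0)}\,d\Pi(\theta)$. The prior term $\KL(\tilde P\|\Pi)$ is computed explicitly from the Gaussian-versus-Laplace spike-and-slab forms and produces the bracketed compatibility and $\lambda$-dependent factors, while the likelihood increment $\ell(\theta)-\ell(\theta_0)=-\tfrac12\|X(\theta-\theta_0)\|_2^2+Z^TX(\theta-\theta_0)$ is controlled on $\mathcal{T}_0$ using $\|X^TZ\|_\infty\leq2\|X\|\sqrt{\log p}$.

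The main obstacle is the log-evidence term $\log\int e^{\ell-\ell(\theta_0)}\,d\Pi$, which must be bounded from \emph{above}---the reverse of the evidence lower bound used for contraction. Here the third event in $\mathcal{T}_1$ is decisive: $\Pi(\|\theta-\theta_0\|_2>\eps\,|Y)\leq e^{-\kappa}$ forces $\int e^{\ell-\ell(\theta_0)}\,d\Pi\leq 2\int_{\|\theta-\theta_0\|_2\leq\eps}e^{\ell-\ell(\theta_0)}\,d\Pi$, localizing the integral to a small $\ell_2$-ball on which the likelihood increment is easily controlled; I would sharpen the constants by iterating this localization as in \cite{nicklray2019}. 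Plugging the two ingredients into Theorem~\ref{thm: general:VB} then yields $E_{\theta_0}\widetilde\Pi(\Theta_n)1_{\mathcal{T}_1}\leq\tfrac{2}{\delta_n}\bigl[E_{\theta_0}\KL(\widetilde\Pi\|\Pi(\cdot|Y))1_{\mathcal{T}_1}+Ce^{-\delta_n/2}\bigr]\lesssim\rho_n^{-1}\{\cdots\}$, since $\delta_n\asymp\rho_n\Delta_*$ cancels the $\Delta_*$ in the KL bound and the $e^{-\delta_n/2}$ term is negligible; adding $E_{\theta_0}\widetilde\Pi(\Theta_n)1_{\mathcal{T}_1^c}\leq P_{\theta_0}(\mathcal{T}_1^c)=o(1)$ from Lemma~\ref{lem: prob}(ii) gives the stated inequality. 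Finally, because Theorem~\ref{thm: general:VB} applies to any distribution and the comparator $\tilde P$ already lies in $\mathcal{Q}_{MF}\subseteq\mathcal{Q}$, the identical KL bound, and hence the identical conclusion, hold for $\widetilde Q$ and $\hat Q$.
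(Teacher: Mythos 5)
Your outer scaffolding is exactly the paper's: run Theorem \ref{thm: general:VB} on the event $\mathcal{T}_1$ of \eqref{T1}, verify \eqref{cond:exp_decay} with $\delta_n\asymp\rho_n\Delta_*$ via Lemma \ref{lem: contraction:spike:slab}, bound $\KL(\widetilde\Pi\|\Pi(\cdot|Y))$ by the KL of a comparator in $\mathcal{Q}_{MF}\subseteq\mathcal{P}_{MF}$, and absorb $\mathcal{T}_1^c$ with Lemma \ref{lem: prob}(ii). The divergence --- and the gap --- is in how you bound the comparator's KL. The paper (Lemmas \ref{lem:KL_non_diag}--\ref{lem:KL_MF}) never compares to the prior: it writes the posterior in its mixture form \eqref{eq:posterior}, pays $\log(1/\hat q_{\tilde S})\leq\Gamma\log p+\log(2e)$ for a model $\tilde S$ whose \emph{posterior} weight is at least $(2e)^{-1}p^{-\Gamma}$ (a pigeonhole argument needing both the dimension event and the localization event of $\mathcal{T}_1$), and then computes the KL to the \emph{restricted posterior} $\Pi_{\tilde S}(\cdot|Y)$. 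Because the Gaussian comparator is precisely the Gaussian-likelihood factor of $\Pi_{\tilde S}(\cdot|Y)$, the likelihood cancels identically in \eqref{eq:KL_split}, leaving only the difference $\lambda(\|\theta_{\tilde S}\|_1-\|\theta_{0,\tilde S}\|_1)$ and a ratio of normalizing constants, both genuinely small (of order $\lambda\Gamma(s_0^{1/2}\eps+\sqrt{\log p}/\|X\|)/\widetilde\phi(\Gamma)^2$).

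Your decomposition $\KL(\tilde P\|\Pi(\cdot|Y))=\KL(\tilde P\|\Pi)-E_{\tilde P}[\ell-\ell(\theta_0)]+\log\int e^{\ell-\ell(\theta_0)}d\Pi$ is algebraically valid, but as executed it cannot yield an oracle bound. The prior term contains $\sum_{i\in\tilde S}\lambda E_{\tilde P}|\theta_i|\approx\lambda\|\theta_{0,\tilde S}\|_1$, which grows with the signal magnitude and appears nowhere on the theorem's right-hand side (take $\lambda=1$ and $|\theta_{0,i}|=p^{10}$: the right-hand side is unchanged while $\lambda\|\theta_0\|_1$ explodes). So your claim that $\KL(\tilde P\|\Pi)$ ``produces the bracketed compatibility and $\lambda$-dependent factors'' is false; the oracle $\lambda$-factors involve only \emph{differences} of $\ell_1$-norms. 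The only way to kill $+\lambda\|\theta_0\|_1$ in your decomposition is to extract a matching $-\lambda\|\theta_0\|_1$ from an \emph{upper} bound on the log-evidence, i.e. to retain the prior-mass factor $\Pi(\{\theta:\|\theta-\theta_0\|_2\leq\eps\})\approx e^{-\lambda\|\theta_0\|_1+O(\lambda\sqrt{\Gamma+s_0}\,\eps)}\times(\cdots)$ after localization. Your treatment --- localize with the third event of $\mathcal{T}_1$, then ``easily control'' the likelihood increment on the ball --- implicitly bounds this prior mass by one and so discards exactly the factor that makes the bound oracle; carried out correctly, your route is forced to reproduce the paper's cancellation in different notation. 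Two further omissions: Lemma \ref{lem: prob}(ii) and the comparison $\Delta_*\leq(1+\tfrac{16}{\phi(S_0)^2}\tfrac{\lambda}{\bar\lambda})s_0\log p$ require $s_*/\phi(S_*)^2\leq s_0/\phi(S_0)^2$, which the theorem does not assume, so you need the paper's case distinction (when it fails, the claim follows from the already-proved case $\theta_*=\theta_0$); and on the localization ball the bound $\|\theta-\theta_0\|_1\leq\sqrt{|S_\theta|+s_0}\,\eps$ is useless without also restricting to $|S_\theta|\leq\Gamma$, i.e. the dimension event must be invoked a second time inside the evidence bound.
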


\begin{proof}
Suppose first that $s_*/\phi(S_*)^2 \leq s_0/\phi(S_0)^2$. Let $\mathcal{T}_1$ denote the event in \eqref{T1} with parameters \eqref{parameters}, which by Lemma \ref{lem: prob}(ii) satisfies $P_{\theta_0}(\mathcal{T}_1) \to 1$ uniformly over all $\theta_0,\theta_*$ in the theorem hypothesis. Set
$$\Theta_n = \left\{ \theta: \|X(\theta-\theta_0)\|_2\geq \frac{M\rho_n^{1/2} }{\overline{\psi}_{\rho_n}(S_0)} \left[ \frac{\sqrt{s_*\log p}}{\phi(S_*)} +\|X(\theta_0-\theta_*)\|_2 \right] \right\}$$
and note $E_{\theta_0} \widetilde{\Pi}(\Theta_n)  \leq E_{\theta_0} \widetilde{\Pi} (\Theta_n) 1_{\mathcal{T}_1} + o(1).$ We now apply Theorem \ref{thm: general:VB} with this choice of $\Theta_n$ on the event $\mathcal{T}_1$. For $\Delta_*$ defined in \eqref{Delta}, it holds that $\Delta_* \leq (1+\tfrac{16}{\phi(S_0)^2}\tfrac{\lambda}{\bar{\lambda}}) s_0\log p$ by \eqref{exponent_comparison}. Using Lemma \ref{lem: contraction:spike:slab} below with $L+2= \rho_n$ thus gives
\begin{align*}
& E_{\theta_0} \Pi (\Theta_n|Y)1_{\mathcal{T}_0} \leq Ce^{-c\rho_n \Delta_*},
\end{align*}
for $p$ large enough depending on $A_1,A_3,A_4$, and where $C,c>0$ also depend only on the prior parameters. Since $\mathcal{T}_1 \subset \mathcal{T}_0$ by \eqref{T1}, condition \eqref{cond:exp_decay} is satisfied on $\mathcal{T}_1$ with $\delta_n = c\rho_n \Delta_*$. Applying Theorem \ref{thm: general:VB} gives
\begin{align*}
E_{\theta_0} \widetilde{\Pi} (\Theta_n) 1_{\mathcal{T}_1}  & \leq \tfrac{2}{c\rho_n \Delta_*} \KL (\widetilde{\Pi}\| \Pi(\cdot|Y))1_{\mathcal{T}_1} + o(1).
\end{align*}
Note that the parameters \eqref{parameters} satisfy $\Gamma \log p \lesssim \Delta_*$ and $\eps \lesssim \frac{\sqrt{s_0 \log p}}{\|X\|\widetilde{\psi}_{L_0+2}(S_0)^2\phi(S_0)}$. Using this and Lemma \ref{lem:KL_MF} below,
\begin{align*}
\tfrac{2}{c\rho_n \Delta_*} \KL (\widetilde{\Pi}\| \Pi(\cdot|Y))1_{\mathcal{T}_1} & \lesssim \frac{1}{\rho_n} \left\{1+ \frac{\log (1/\widetilde{\phi}(\Gamma))}{\log p} + \frac{\lambda s_0 }{\|X\|\widetilde{\psi}_{L_0+2}(S_0)^2 \phi(S_0) \widetilde{\phi}(\Gamma)^2 \sqrt{\log p}}  \right\} + o(1) 
\end{align*}
as required.

If $s_*/\phi(S_*)^2 > s_0/\phi(S_0)^2,$ then $\tfrac{\sqrt{s_*\log p}}{\phi(S_*)} +\|X(\theta_0-\theta_*)\|_2 > \tfrac{\sqrt{s_0\log p}}{\phi(S_0)}.$ The desired inequality then immediately follows from the stronger inequality with $\theta_* = \theta_0$ just established above. The results for $\ell_1$ and $\ell_2$ loss follow exactly as above by using the respective inequalities for the $\ell_1$ and $\ell_2$ oracle contraction rates in Lemma \ref{lem: contraction:spike:slab} to establish \eqref{cond:exp_decay}.

Similarly, the results for the variational Bayes posteriors $\hat{Q}$ and $\widetilde{Q}$ based on the mean-field variational families \eqref{def: variational:family2} and \eqref{def: variational:family3} follow identically upon using Lemmas \ref{lem:KL_non_diag} and \ref{lem:KL_Q_MF} instead of Lemma \ref{lem:KL_MF} to control the Kullback-Leibler divergence.
\end{proof}

\begin{theorem}[Full oracle dimension]\label{thm:oracle_full_dim}
Suppose the model selection prior \eqref{eq:prior} satisfies \eqref{eq:prior_cond} and \eqref{prior_lambda}. For $\theta_0 \in \R^p \backslash \{0\}$, let $\theta_*\in \R^p$ be any vector satisfying $1 \leq s_* = |S_{\theta_*}| \leq |S_{\theta_0}| = s_0$ and $\|X(\theta_0 - \theta_*)\|_2^2 \leq (s_0 - s_*) \log p.$
Then the variational Bayes posterior $\widetilde{\Pi}$ satisfies, uniformly over all $\theta_0$ and $\theta_*$ as above,
\begin{align*}
& E_{\theta_0} \widetilde{\Pi} \left( \theta: |S_\theta| \geq |S_*| + \tfrac{4(\rho_n+2)}{A_4} \left[  \left(1 +\tfrac{16}{\phi(S_*)^2}\tfrac{\lambda}{\bar{\lambda}} \right)|S_*|  + \tfrac{\|X(\theta_0-\theta_*)\|_2^2}{\log p} \right] \right)  \\
& \lesssim \frac{1}{\rho_n} \left\{ 1+\frac{\log (1/\widetilde{\phi}(\Gamma))}{\log p} + \frac{\lambda s_0 }{\|X\|\widetilde{\psi}_{L_0+2}(S_0)^2 \phi(S_0) \widetilde{\phi}(\Gamma)^2 \sqrt{\log p}}  \right\} + o(1)
\end{align*}
for any $\rho_n >0 $, where $\Gamma, L_0$ are given in \eqref{parameters}. Furthermore, the exact same inequality holds for the variational Bayes posteriors $\widetilde{Q}$ and $\hat{Q}$.
\end{theorem}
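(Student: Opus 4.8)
The plan is to mirror the proof of Theorem \ref{thm:oracle_full_recovery} essentially verbatim, swapping the recovery target set for the large-dimension set and the posterior recovery bound for the corresponding posterior dimension bound. Concretely, I would first assume $s_*/\phi(S_*)^2 \leq s_0/\phi(S_0)^2$, so that Lemma \ref{lem: prob}(ii) applies and the event $\mathcal{T}_1$ of \eqref{T1} with parameters \eqref{parameters} has $P_{\theta_0}$-probability tending to one uniformly over the admissible $\theta_0,\theta_*$. With $\Delta_*$ as in \eqref{Delta}, I would set
$$\Theta_n = \left\{ \theta: |S_\theta| \geq s_* + \tfrac{4(\rho_n+2)}{A_4}\, \tfrac{\Delta_*}{\log p} \right\},$$
which is exactly the event in the statement (using $\tfrac{\Delta_*}{\log p} = (1+\tfrac{16}{\phi(S_*)^2}\tfrac{\lambda}{\bar\lambda})s_* + \tfrac{\|X(\theta_0-\theta_*)\|_2^2}{\log p}$), and write $E_{\theta_0}\widetilde\Pi(\Theta_n) \leq E_{\theta_0}\widetilde\Pi(\Theta_n)1_{\mathcal{T}_1} + o(1)$ before applying Theorem \ref{thm: general:VB} on $\mathcal{T}_1$.

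To verify the exponential-decay hypothesis \eqref{cond:exp_decay}, I would invoke the posterior dimension bound (Lemma \ref{lem: support}, already used in the proof of Lemma \ref{lem: prob}(ii)) but now with parameter $M = \rho_n + 2$ in place of $M=3$. This choice makes the threshold defining $\Theta_n$ coincide with the $4(\rho_n+2)/A_4$ factor in the statement and yields $E_{\theta_0}\Pi(\Theta_n|Y)1_{\mathcal{T}_0} \leq Ce^{-\delta_n}$ with $\delta_n \asymp \rho_n\Delta_*$. Since $\mathcal{T}_1 \subset \mathcal{T}_0$, Theorem \ref{thm: general:VB} then gives
$$E_{\theta_0}\widetilde\Pi(\Theta_n)1_{\mathcal{T}_1} \lesssim \tfrac{1}{\rho_n\Delta_*}\,\KL\big(\widetilde\Pi\big\|\Pi(\cdot|Y)\big)1_{\mathcal{T}_1} + o(1).$$
The crucial point is that the KL bound required here is \emph{identical} to the one used for recovery: I would simply reuse Lemma \ref{lem:KL_MF}, which controls $\KL(\widetilde\Pi\|\Pi(\cdot|Y))1_{\mathcal{T}_1}$ by a constant multiple of $\Delta_*\{1 + \log(1/\widetilde\phi(\Gamma))/\log p + \lambda s_0/(\|X\|\widetilde\psi_{L_0+2}(S_0)^2\phi(S_0)\widetilde\phi(\Gamma)^2\sqrt{\log p})\}$. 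Dividing by $\rho_n\Delta_*$ cancels the $\Delta_*$ factor and reproduces exactly the stated right-hand side, which explains why Theorems \ref{thm:oracle_full_recovery} and \ref{thm:oracle_full_dim} share the same bound. The extension to $\widetilde Q$ and $\hat Q$ follows by substituting Lemmas \ref{lem:KL_Q_MF} and \ref{lem:KL_non_diag} for Lemma \ref{lem:KL_MF}, precisely as in the recovery proof.

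The main obstacle is bookkeeping rather than conceptual. First, one must confirm that Lemma \ref{lem: support} with $M=\rho_n+2$ indeed delivers an exponent $\delta_n$ proportional to $\rho_n\Delta_*$, so that the $1/\delta_n$ factor from Theorem \ref{thm: general:VB} combines with the $O(\Delta_*)$ KL bound to produce the clean $1/\rho_n$ prefactor. Second, and more delicately, the complementary case $s_*/\phi(S_*)^2 > s_0/\phi(S_0)^2$, in which Lemma \ref{lem: prob}(ii) is unavailable, must be reduced to an admissible configuration (e.g.\ $\theta_* = \theta_0$) as in the recovery proof; here one should carefully compare the two dimension thresholds rather than simply assert a monotone inclusion, since the $\tfrac{16\lambda}{\bar\lambda}\tfrac{s_*}{\phi(S_*)^2}$ term enters the threshold differently than the recovery radius. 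Both checks are routine once the recovery argument is in hand, so no genuinely new technical machinery beyond the dimension analogue of the posterior contraction lemma is needed.
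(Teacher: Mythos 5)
Your proposal follows the paper's proof essentially verbatim: the paper proves this theorem precisely by applying Theorem \ref{thm: general:VB} to the stated dimension-exceedance set with the event $A=\mathcal{T}_1$, verifying \eqref{cond:exp_decay} via Lemma \ref{lem: support} with $M=\rho_n+2$ in place of Lemma \ref{lem: contraction:spike:slab}, and reusing the KL bounds of Lemmas \ref{lem:KL_MF}, \ref{lem:KL_Q_MF} and \ref{lem:KL_non_diag} exactly as in Theorem \ref{thm:oracle_full_recovery}. Your closing caveat about the complementary case $s_*/\phi(S_*)^2 > s_0/\phi(S_0)^2$ --- where the monotone-inclusion reduction used for recovery does not transfer directly to the dimension thresholds --- is, if anything, more careful than the paper, whose own proof of this theorem is a two-line reduction that does not discuss that case.
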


\begin{proof}
The proof follows similarly to that of Theorem \ref{thm:oracle_full_recovery} by applying Theorem \ref{thm: general:VB} with
$$ \Theta_n =  \left\{ \theta: |S_\theta| \geq |S_*| + \tfrac{4(\rho_n+2)}{A_4} \left[  \left(1 +\tfrac{16}{\phi(S_*)^2}\tfrac{\lambda}{\bar{\lambda}} \right)|S_*|  + \tfrac{\|X(\theta_0-\theta_*)\|_2^2}{\log p} \right] \right\},$$
again taking the event $A=\mathcal{T}_1$ and using Lemma \ref{lem: support} with $M=\rho_n + 2$ instead of Lemma \ref{lem: contraction:spike:slab} to verify \eqref{cond:exp_decay}.
\end{proof}

\begin{remark}[Misspecification of the error distribution]\label{rem:misspecification}
The Gaussian error distribution is assumed in model \eqref{eq:model} for concreteness and can be relaxed. For recovery and dimension control (Theorems \ref{thm:recovery} and \ref{thm:dim}), inspection of the contraction rate proofs in \cite{castillo:2015} and the KL bounds in Section \ref{sec:KL} show that it suffices that there exists a constant $C>0$ such that
$$P_{\theta_0}(\|X^T(Y-X\theta_0)\|_\infty > C \|X\| \sqrt{\log p}) \to 0,$$
which holds for much more general noise distributions. This condition is commonly imposed when studying the LASSO, see e.g. \cite{buhlmann:2011}. For the full oracle bounds, we further need that Lemma 3 of \cite{castillo:2015}, which concerns a change of measure, holds. This indeed holds under a wider range of noise distributions, see Remark 1 of \cite{castillo:2015}. The results for VB in this paper are thus robust under noise misspecification as for the true posterior \cite{castillo:2015}, see also Section \ref{sec:sim:misspecification} for an empirical study of noise misspecification for our method.
\end{remark}

\subsection{Kullback-Leibler divergences between variational classes and the posterior}\label{sec:KL}

We now show that on the event $\mathcal{T}_1$ in \eqref{T1}, we can bound the (minimized) Kullback-Leibler divergences between the posterior and the approximating variational classes. In particular, we need oracle-type bounds on the KL divergence to obtain our oracle results. This is the major technical difficulty in establishing our result. We first consider the family $\mathcal{Q}$ of distributions \eqref{def: variational:family2}, which consists of products of non-diagonal multivariate normal distributions with Dirac delta distributions for a single fixed support set $S$.

For a given model $S \subseteq \{1,\dots,p\}$, let $X_S$ denote the $n\times |S|$-submatrix of the full regression matrix $X$, where we keep only the columns $X_{\cdot i}$, $i\in S$. Let $\hat{\theta}_S = (X_S^T X_S)^{-1}X_S^TY$ be the least squares estimator in the restricted model $Y = X_S \theta_S + Z$. If the restricted model were correctly specified, then $\hat{\theta}_S$ would have distribution $N_S (\theta_{0,S}, (X_S^T X_S)^{-1})$ under $P_{\theta_0}$. We approximate the posterior with a $N_S(\hat{\theta}_S,(X_S^T X_S)^{-1})\otimes \delta_{S^c}$ distribution, where $S$ is a suitable approximating set to which the posterior assigns sufficient probability. 

\begin{lemma}\label{lem:KL_non_diag}
If $4e^{1+\Gamma \log p - \kappa} \leq 1$, then the variational posterior $\hat{Q}$ arising from the family \eqref{def: variational:family2} satisfies
\begin{align*}
\emph{KL} (\hat{Q} \| \Pi(\cdot|Y))1_{\mathcal{T}_1} \leq \Gamma \log p + \frac{\lambda \Gamma}{\widetilde{\phi}(\Gamma)^2}  \left( 2s_0^{1/2} \eps+\frac{3\sqrt{\log p}}{\|X\|} \right) + \log (4e) .
\end{align*}
\end{lemma}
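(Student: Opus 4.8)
The plan is to exploit the minimality of $\hat{Q}$. Since $\hat{Q}$ minimizes $\KL(\cdot\|\Pi(\cdot|Y))$ over the family $\mathcal{Q}$ in \eqref{def: variational:family2} (see \eqref{def: VB2}), it suffices to exhibit one convenient member $Q^* = N_S(\hat\theta_S,(X_S^TX_S)^{-1})\otimes\delta_{S^c}\in\mathcal{Q}$ and use $\KL(\hat Q\|\Pi(\cdot|Y))\le\KL(Q^*\|\Pi(\cdot|Y))$, for a well-chosen support $S$. To choose $S$, I would argue on $\mathcal{T}_1$: there the posterior charges models of size at most $\Gamma$ with probability at least $3/4$ and the ball $\{\|\theta-\theta_0\|_2\le\eps\}$ with probability at least $1-e^{-\kappa}$, so the intersection $B=\{|S_\theta|\le\Gamma,\ \|\theta-\theta_0\|_2\le\eps\}$ carries posterior mass at least $1/2$. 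Partitioning $B$ by the support $S_\theta$ and using that there are at most $\sum_{s\le\Gamma}\binom ps\le 2p^{\Gamma}$ admissible supports, a pigeonhole argument produces a model $S$ with $|S|\le\Gamma$ and $\Pi(S_\theta=S|Y)\ge(4p^\Gamma)^{-1}$, whence $-\log\Pi(S_\theta=S|Y)\le \Gamma\log p+\log(4e)$. The hypothesis $4e^{1+\Gamma\log p-\kappa}\le1$ is precisely what forces $e^{-\kappa}/\Pi(S_\theta=S|Y)\le e^{-1}$, i.e. that \emph{conditionally} on the model $S$ the posterior still keeps all but a $1/e$ fraction of its mass within $\ell_2$-distance $\eps$ of $\theta_0$; this transfers the localization to the least-squares centre $\hat\theta_S$.

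Next I would decompose $\KL(Q^*\|\Pi(\cdot|Y))=-\log\Pi(S_\theta=S|Y)+\KL(q\|p_S)$, where $q=N_S(\hat\theta_S,(X_S^TX_S)^{-1})$ and $p_S$ is the posterior density conditional on model $S$, proportional to $e^{-\tfrac12\|Y-X_S\theta_S\|_2^2-\lambda\|\theta_S\|_1}$. The key algebraic simplification is that, because $\hat\theta_S$ is the least-squares estimator, the cross term vanishes and completing the square gives $\|Y-X_S\theta_S\|_2^2=\|Y-X_S\hat\theta_S\|_2^2+(\theta_S-\hat\theta_S)^TX_S^TX_S(\theta_S-\hat\theta_S)$, so the Gaussian factor of $q$ matches the likelihood kernel of $p_S$ exactly. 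The Gaussian normalisations and the squared-error term then cancel, leaving the clean identity
\[
\KL(q\|p_S)=\log E_q\big[e^{-\lambda\|\theta_S\|_1}\big]+\lambda\, E_q\|\theta_S\|_1 .
\]

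The heart of the argument, and the main obstacle, is bounding this expression by an oracle quantity that does \emph{not} scale with the magnitude of $\theta_0$. Naively discarding the (nonpositive) log-term leaves $\lambda E_q\|\theta_S\|_1\approx\lambda\|\theta_{0,S}\|_1$, which is far too large. Instead I would centre the $\ell_1$ norm at $\theta_0$: by the reverse triangle inequality $\big|\,\|\theta_S\|_1-\|\theta_{0,S}\|_1\,\big|\le\|\theta_S-\theta_{0,S}\|_1$, the large contributions $\pm\lambda\|\theta_{0,S}\|_1$ cancel between the two terms, reducing the bound to fluctuation terms of the form $\lambda E_q\|\theta_S-\theta_{0,S}\|_1$ plus a log-moment term of $\|\theta_S-\theta_{0,S}\|_1$ under $q$. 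These are handled through the mean-plus-deviation split of the Gaussian $q$. The random-fluctuation part is controlled using the covariance bound $\|(X_S^TX_S)^{-1}\|_{\mathrm{op}}\le(\widetilde\phi(\Gamma)\|X\|)^{-2}$ coming from Definition \ref{def:ssssv} (valid since $|S|\le\Gamma$), which together with $\lambda\le2\bar\lambda=4\|X\|\sqrt{\log p}$ produces the term $\tfrac{3\lambda\Gamma}{\widetilde\phi(\Gamma)^2}\tfrac{\sqrt{\log p}}{\|X\|}$; the bias part $\|\hat\theta_S-\theta_{0,S}\|$ is controlled by the conditional localization from the first paragraph (and the event $\mathcal{T}_0$ of \eqref{eq:T_event}), converting the $\ell_2$ radius $\eps$ to an $\ell_1$ bound over the effective support and contributing $\tfrac{2\lambda\Gamma}{\widetilde\phi(\Gamma)^2}s_0^{1/2}\eps$.

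Collecting the three pieces — $\Gamma\log p+\log(4e)$ from the model probability and the two fluctuation/bias terms from $\KL(q\|p_S)$ — yields the stated inequality. I expect the delicate points to be making the $\theta_0$-cancellation rigorous while simultaneously controlling the \emph{correlated} Gaussian fluctuations of $\|\theta_S\|_1$ under $q$ (the non-diagonal covariance is what distinguishes the family $\mathcal{Q}$), and verifying carefully that the localization of the conditional posterior genuinely pins the least-squares centre $\hat\theta_S$ down at the $\eps$ scale rather than merely the posterior mean. The remaining steps are routine Gaussian moment computations and the counting estimate $\sum_{s\le\Gamma}\binom ps\le 2p^\Gamma$.
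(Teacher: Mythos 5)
Your proposal reproduces the architecture of the paper's proof almost exactly: the reduction via minimality of $\hat Q$ to a single candidate $Q^*=N_S(\hat\theta_S,(X_S^TX_S)^{-1})\otimes\delta_{S^c}$, the pigeonhole selection on $\mathcal{T}_1$ of a model $S$ with $|S|\le\Gamma$ and $\hat q_S\gtrsim p^{-\Gamma}$ (your argument also implicitly yields $\|\theta_{0,S^c}\|_2\le\eps$, which the paper records explicitly and uses later), the decomposition $\KL(Q^*\|\Pi(\cdot|Y))=-\log\hat q_S+\KL(q\|\Pi_S(\cdot|Y))$, and your identity $\KL(q\|\Pi_S(\cdot|Y))=\log E_q e^{-\lambda\|\theta_S\|_1}+\lambda E_q\|\theta_S\|_1$, which is correct and is an equivalent repackaging of the paper's \eqref{eq:KL_split}; the centering-at-$\theta_0$ cancellation is also the paper's.

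The genuine gap is the step you yourself flag as delicate: controlling the bias $\|\hat\theta_S-\theta_{0,S}\|_2$. You propose to deduce it from the conditional localization of $\Pi_S(\cdot|Y)$ ("this transfers the localization to the least-squares centre $\hat\theta_S$"), but that inference does not follow as stated. The conditional posterior is proportional to $e^{-\frac12(\theta_S-\hat\theta_S)^TX_S^TX_S(\theta_S-\hat\theta_S)-\lambda\|\theta_S\|_1}$, i.e.\ the Gaussian centred at $\hat\theta_S$ \emph{tilted} by the Laplace factor, so its mass can sit at distance up to roughly $\lambda\sqrt{|S|}\,\|(X_S^TX_S)^{-1}\|_{\mathrm{op}}$ from $\hat\theta_S$ (the KKT displacement of the corresponding penalized mode). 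Converting "mass $\ge 1/2$ within $2\eps$ of $\theta_{0,S}$" into a bound on $\|\hat\theta_S-\theta_{0,S}\|_2$ therefore needs a quantitative log-concave concentration argument you do not supply, and even granting it, the displacement term — of order $\lambda\sqrt{\Gamma}/(\|X\|^2\widetilde\phi(\Gamma)^2)$, hence after multiplication by $\lambda\Gamma^{1/2}$ an extra $4\lambda\Gamma\sqrt{\log p}/(\|X\|\widetilde\phi(\Gamma)^2)$ — already exceeds the $3\sqrt{\log p}$ budget in the statement. The paper never extracts information about $\hat\theta_S$ from posterior localization: it writes $\hat\theta_S-\theta_{0,S}=(X_S^TX_S)^{-1}X_S^TX_{S^c}\theta_{0,S^c}+(X_S^TX_S)^{-1}X_S^TZ$ from \eqref{eq:mean_cov}, bounds the misspecification part by $\Gamma^{1/2}s_0^{1/2}\eps/\widetilde\phi(\Gamma)^2$ (using $\|\theta_{0,S^c}\|_2\le\eps$ from the model choice and Cauchy--Schwarz over the at most $s_0$ nonzero coordinates of $\theta_{0,S^c}$ — this is where $s_0^{1/2}$ enters), and bounds the noise part by $2\Gamma^{1/2}\sqrt{\log p}/(\|X\|\widetilde\phi(\Gamma)^2)$ on $\mathcal{T}_0$, both via the eigenvalue bound \eqref{eq:min_eigen}. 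Note also that your bookkeeping is inconsistent: you charge all of $3\sqrt{\log p}/\|X\|$ to the Gaussian fluctuation of $q$ while invoking $\mathcal{T}_0$ inside the bias term, yet the noise component of $\hat\theta_S-\theta_{0,S}$ necessarily produces a $\sqrt{\log p}$-type term, not an $\eps$-term. Finally, in the paper the hypothesis $4e^{1+\Gamma\log p-\kappa}\le1$ is used only to bound the normalizing-constant ratio (the analogue of your $\log E_q e^{-\lambda\|\theta_S\|_1}$) by restricting the integral to a $2\eps$-ball; your alternative of bounding that term by a Gaussian moment-generating-function estimate is workable, but again only with larger constants than those asserted in the lemma.
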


\begin{proof}
We construct our posterior approximation on the event $\mathcal{T}_1$ in \eqref{T1}. The posterior takes the form
\begin{equation}\label{eq:posterior}
\Pi(\cdot|Y)  = \sum_{S\subseteq\{1,...,p\}}\hat{q}_S \Pi_S(\cdot|Y) \otimes \delta_{S^c},
\end{equation}
where the weights $\hat{q}=(\hat{q}_S:\, S\subseteq\{1,...,p\})$ lie in the $2^p$-dimensional simplex and $\Pi_S(\cdot|Y)$ is the posterior for $\theta_S\in \R^{|S|}$ in the restricted model $Y = X_S \theta_S + Z$. Since
\begin{align*}
\Pi(\theta: \|\theta_{0,S_\theta^c}\|_2 > \eps |Y) & \leq \Pi(\theta: \|\theta - \theta_0\|_2 > \eps |Y),
\end{align*}
it follows that on $\mathcal{T}_1$,
$$\sum_{\substack{S:|S|\leq \Gamma \\ \|\theta_{0,S^c}\|_2 \leq \eps }}\hat{q}_S\geq 1-\frac{1}{4}-e^{-\kappa}\geq \frac{3}{4}-\frac{1}{4e}e^{-\Gamma \log p} \geq \frac{1}{2} $$
for all $p$ since $\Gamma>0$. Note further that
\begin{align*}
\Big|\big\{ S\subseteq\{1,...,p\}:\, |S|\leq \Gamma \big\}\Big|=\sum_{s=0}^{\lfloor \Gamma\rfloor} {p \choose s} \leq \sum_{s=0}^{\lfloor \Gamma\rfloor} \frac{p^s}{s!} \leq e p^{\Gamma}.
\end{align*}
Together, the last two displays show that on $\mathcal{T}_1$ and for all $p$, there exists a set $\tilde{S}$ satisfying
\begin{equation}\label{S}
|\tilde{S}|\leq \Gamma, \quad \|\theta_{0,\tilde{S}^c}\|_2 \leq \eps, \quad \hat{q}_{\tilde{S}} \geq (2e)^{-1} p^{-\Gamma}.
\end{equation}

Since an $N_S(\mu_S,\Sigma_S)\otimes \delta_{S^c}$ distribution is only absolutely continuous with respect to the $\hat{q}_S \Pi_S(\cdot|Y) \otimes \delta_{S^c}$ term of the posterior \eqref{eq:posterior},
\begin{equation}\label{eq:KL_radon_nik}
\begin{split}
\inf_{Q \in \mathcal{Q}}\KL(Q||\Pi(\cdot|Y))&=\inf_{S,\mu_S,\Sigma_S}E_{\theta \sim N_S(\mu_S,\Sigma_S)\otimes \delta_{S^c}}\log\frac{d N_S(\mu_S,\Sigma_S) \otimes \delta_{S^c}}{\hat{q}_S d\Pi_S(\cdot|Y) \otimes \delta_{S^c}}\\
&\leq \log\frac{1}{\hat{q}_{\tilde{S}}}+\inf_{\mu_{\tilde{S}},\Sigma_{\tilde{S}} }\KL\big(N_{\tilde{S}}(\mu_{\tilde{S}},\Sigma_{\tilde{S}})\| \Pi_{\tilde{S}}(\cdot|Y) \big),
\end{split}
\end{equation}
where the last Kullback-Leibler divergence is over $|\tilde{S}|$-dimensional distributions. On $\mathcal{T}_1$, $\log(1/\hat{q}_{\tilde{S}}) \leq \log (2ep^{\Gamma}) =\log (2e) + \Gamma \log p$. It thus remains to bound the second term in \eqref{eq:KL_radon_nik}.

Let $E_{\mu_S,\Sigma_S}$ denote the expectation under the law $\theta_S \sim N_S(\mu_S,\Sigma_S)$. Setting
\begin{equation}\label{eq:mean_cov}
\begin{split}
\mu_{\tilde{S}}  =  (X_{\tilde{S}}^TX_{\tilde{S}})^{-1}X_{\tilde{S}}^TY \quad \text{ and } \quad  \Sigma_{\tilde{S}} =(X_{\tilde{S}}^TX_{\tilde{S}})^{-1},
\end{split}
\end{equation}
one can check that the resulting normal distribution has density function proportional to $e^{-\frac{1}{2}\| Y-X_{\tilde{S}}\theta_{\tilde{S}}\|_2^2}$, $\theta_{\tilde{S}}\in \mathbb{R}^{|\tilde{S}|}$. Therefore,
\begin{equation}\label{eq:KL_split}
\begin{split}
\KL\big(N_{\tilde{S}}(\mu_{\tilde{S}},\Sigma_{\tilde{S}})\| \Pi_{\tilde{S}}(\cdot|Y)\big)&=E_{\mu_{\tilde{S}},\Sigma_{\tilde{S}}}\log\frac{D_\Pi e^{-\frac{1}{2}\|Y-X_{\tilde{S}}\theta_{\tilde{S}}\|_2^2-\lambda\|\theta_{0,\tilde{S}}\|_1}}{D_N e^{-\frac{1}{2}\|Y-X_{\tilde{S}}\theta_{\tilde{S}}\|_2^2-\lambda\|\theta_{\tilde{S}}\|_1}}\\
&=E_{\mu_{\tilde{S}},\Sigma_{\tilde{S}}}\lambda(\|\theta_{\tilde{S}}\|_1-\|\theta_{0,\tilde{S}}\|_1)+  \log(D_\Pi/D_N),
\end{split}
\end{equation}
with $D_\Pi=\int_{\R^{|\tilde{S}|}} e^{-\frac{1}{2}\|Y-X_{\tilde{S}}\theta_{\tilde{S}}\|_2^2-\lambda\|\theta_{\tilde{S}}\|_1} d\theta_{\tilde{S}}$ and $D_N=\int_{\R^{|\tilde{S}|}} e^{-\frac{1}{2}\|Y-X_{\tilde{S}}\theta_{\tilde{S}}\|_2^2-\lambda\|\theta_{0,\tilde{S}}\|_1} d\theta_{\tilde{S}}$ the normalizing constants.

We firstly upper bound $\log(D_\Pi/D_N)$. Define
$$B_{\tilde{S}} = \{ \theta_{\tilde{S}} \in \R^{|\tilde{S}|}: \|\theta_{\tilde{S}} - \theta_{0,\tilde{S}}\|_2 \leq 2\eps \}.$$
Let $\bar{\theta}_{\tilde{S}}$ denote the extension of a vector $\theta_{\tilde{S}}\in \R^{|\tilde{S}|}$ to $\R^p$ with $\bar{\theta}_{\tilde{S},j} = \theta_{\tilde{S},j}$ for $j\in \tilde{S}$ and $\bar{\theta}_{\tilde{S},j} = 0$ for $j\not\in \tilde{S}$. On $\mathcal{T}_1$, using \eqref{eq:posterior} and \eqref{S},
\begin{align*}
\Pi_{\tilde{S}}(B_{\tilde{S}}^c|Y) & \leq \frac{\hat{q}_{\tilde{S}} }{\hat{q}_{\tilde{S}} } \Pi_{\tilde{S}} (\theta_{\tilde{S}} \in \R^{|\tilde{S}|}:
\|\bar{\theta}_{\tilde{S}} - \theta_0\|_2 > 2\eps - \|\theta_{0,\tilde{S}^c}\|_2 |Y)\\
& \leq \hat{q}_{\tilde{S}}^{-1} \Pi(\theta\in \R^p: \|\theta-\theta\|_2 > \eps |Y) \\
& \leq 2e p^{\Gamma}  e^{-\kappa} = 2 e^{1+\Gamma \log p-\kappa} \leq 1/2,
\end{align*}
where the last inequality holds by assumption. Using Bayes formula, this yields
$$\Pi_{\tilde{S}}(B_{\tilde{S}}|Y)1_{\mathcal{T}_1} =\frac{\int_{B_{\tilde{S}}} e^{-\frac{1}{2}\|Y-X_{\tilde{S}}\theta_{\tilde{S}}\|_2^2-\lambda\|\theta_{\tilde{S}}\|_1} d\theta_{\tilde{S}}}{\int_{\R^{|\tilde{S}|}} e^{-\frac{1}{2}\|Y-X_{\tilde{S}}\theta_{\tilde{S}}\|_2^2-\lambda\|\theta_{\tilde{S}}\|_1} d\theta_{\tilde{S}} } 1_{\mathcal{T}_1} \geq \frac{1}{2} 1_{\mathcal{T}_1}$$
almost surely. In particular, $D_\Pi \leq 2\int_{B_{\tilde{S}}} e^{-\frac{1}{2}\|Y-X_{\tilde{S}}\theta_{\tilde{S}}\|_2^2-\lambda\|\theta_{\tilde{S}}\|_1} d\theta_{\tilde{S}}$ on $\mathcal{T}_1$. Therefore on $\mathcal{T}_1$,
\begin{align*}
\log\frac{D_\Pi}{D_N}&\leq \log \frac{2 \int_{B_{\tilde{S}}} e^{-\frac{1}{2}\|Y-X_{\tilde{S}}\theta_{\tilde{S}}\|_2^2-\lambda\|\theta_{\tilde{S}}\|_1} d\theta_{\tilde{S}}}{\int_{B_{\tilde{S}}} e^{-\frac{1}{2}\|Y-X_{\tilde{S}}\theta_{\tilde{S}}\|_2^2-\lambda\|\theta_{0,\tilde{S}}\|_1} d\theta_{\tilde{S}}} \\
&\leq \sup_{\theta_{\tilde{S}}\in B_{\tilde{S}}}\log e^{\lambda\|\theta_{0,\tilde{S}}\|_1-\lambda \|\theta_{\tilde{S}}\|_1} + \log 2 \\
&\leq \sup_{\theta_{\tilde{S}}\in B_{\tilde{S}}} \lambda\|\theta_{\tilde{S}}-\theta_{0,\tilde{S}}\|_1 + \log 2\\
&\leq \sup_{\theta_{\tilde{S}}\in B_{\tilde{S}}} \lambda |\tilde{S}|^{1/2} \|\theta_{\tilde{S}}-\theta_{0,\tilde{S}}\|_2 + \log 2 \\
&\leq 2\lambda\Gamma^{1/2}\eps + \log 2,
\end{align*}
where in the fourth inequality we have applied Cauchy-Schwarz.

We now turn to the first term in \eqref{eq:KL_split}. On $\mathcal{T}_1$, using the triangle inequality and Cauchy-Schwarz,
\begin{equation}\label{eq:post_bias_var}
\begin{split}
\lambda E_{\mu_{\tilde{S}},\Sigma_{\tilde{S}}} (\|\theta_{\tilde{S}}\|_1-\|\theta_{0,\tilde{S}}\|_1) & \leq \lambda \|\mu_{\tilde{S}} - \theta_{0,\tilde{S}}\|_1 + \lambda E_{0,\Sigma_{\tilde{S}}} \| \theta_{\tilde{S}}\|_1 \\
& \leq \lambda \Gamma^{1/2} \big( \|\mu_{\tilde{S}} - \theta_{0,\tilde{S}}\|_2 + \tr(\Sigma_{\tilde{S}})^{1/2} \big) 
\end{split}
\end{equation}
since $E_{0,\Sigma_{\tilde{S}}} \| \theta_{\tilde{S}}\|_2^2 = \tr(\Sigma_{\tilde{S}})$. Let $\Lambda_{\min}(A)$ and $\Lambda_{\max}(A)$ denote the smallest and largest eigenvalues, respectively, of a symmetric, positive definite matrix $A$. Using the variational characterization of maximal/minimal eigenvalues (\cite{horn:2013}, p. 234), for any $S\subseteq \{1,\dots,p\}$,
\begin{align}\label{eq:min_eigen}
\Lambda_{\min} (X_S^T X_S) = \min_{v\in\mathbb{R}^{|S|} :v\neq 0} \frac{v^T X_S^T X_Sv}{\|v\|_2^2} = \min_{u\in\mathbb{R}^p : u\neq 0, u_{S^c} = 0} \frac{\|Xu\|_2^2}{\|u\|_2^2} \geq \|X\|^2 \widetilde{\phi}(|S|)^2.
\end{align}
Therefore,
\begin{align*}
\tr(\Sigma_{\tilde{S}}) \leq \Gamma \Lambda_{\max}((X_{\tilde{S}}^T X_{\tilde{S}})^{-1}) \leq \frac{\Gamma}{\Lambda_{\min}(X_{\tilde{S}}^T X_{\tilde{S}})} \leq \frac{\Gamma}{\|X\|^2 \widetilde{\phi}(\Gamma)^2}.
\end{align*}
Under $P_{\theta_0}$, using \eqref{eq:model} and \eqref{eq:mean_cov}, the bias term can be decomposed as
$$\|\mu_{\tilde{S}} - \theta_{0,\tilde{S}}\|_2 \leq \|(X_{\tilde{S}}^T X_{\tilde{S}})^{-1}X_{\tilde{S}}^T X_{{\tilde{S}}^c}\theta_{0,\tilde{S}^c}\|_2 + \|(X_{\tilde{S}}^T X_{\tilde{S}})^{-1}X_{\tilde{S}}^T Z\|_2 = I + II.$$
For $I$, note first that the $\ell_2$-operator norm of $(X_{\tilde{S}}^T X_{\tilde{S}})^{-1}$ is bounded by $1/(\|X\|^2\widetilde{\phi}(|{\tilde{S}}|)^2)$ by \eqref{eq:min_eigen}. On $\mathcal{T}_1$, using Cauchy-Schwarz,
\begin{align*}
\|X_{\tilde{S}}^T X_{{\tilde{S}}^c} \theta_{0,{\tilde{S}}^c}\|_2^2 & = \sum_{i\in {\tilde{S}}} \left( \sum_{k=1}^n \sum_{j\in {\tilde{S}}^c} X_{ki} X_{kj}\theta_{0,j} \right)^2 \\
& = \sum_{i\in {\tilde{S}}} \left( \sum_{j\in {\tilde{S}}^c } \langle X_{\cdot i}, X_{\cdot j} \rangle \theta_{0,j} \right)^2\\
& \leq \|X\|^4 \sum_{i\in {\tilde{S}}} \left( \sum_{j\in {\tilde{S}}^c \cap S_{0}} |\theta_{0,j}| \right)^2\\
& \leq \|X\|^4 |\tilde{S}| s_0 \|\theta_{0,\tilde{S}^c}\|_2^2.
\end{align*}
Together with \eqref{S}, this gives
$$I \leq \frac{\|X\|^2 |\tilde{S}|^{1/2} s_0^{1/2} \|\theta_{0,\tilde{S}^c}\|_2}{\|X\|^2 \widetilde{\phi}(|{\tilde{S}}|)^2} \leq \frac{\Gamma^{1/2} s_0^{1/2} \eps}{ \widetilde{\phi}(|{\tilde{S}}|)^2}.$$
Using the same bound on the $\ell_2$-operator norm and \eqref{eq:model}, on the event $\mathcal{T}_1 \subset \mathcal{T}_0$ it holds that
\begin{align*}
II & \leq \frac{\|X_{\tilde{S}}^T Z\|_2}{\|X\|^2 \widetilde{\phi}(|{\tilde{S}}|)^2}  = \frac{1}{\|X\|^2 \widetilde{\phi}(|\tilde{S}|)^2} \left( \sum_{i\in \tilde{S}} \left( X^T(Y-X\theta_0) \right)_i^2 \right)^{1/2} \leq \frac{2|\tilde{S}|^{1/2}\sqrt{\log p}}{\|X\| \widetilde{\phi}(|\tilde{S}|)^2}.
\end{align*}
Combining all of the above bounds and using that $|\tilde{S}| \leq \Gamma$, on the event $\mathcal{T}_1$,
\begin{align*}
\lambda E_{\mu_{\tilde{S}},\Sigma_{\tilde{S}}} (\|\theta_{\tilde{S}}\|_1-\|\theta_{0,\tilde{S}}\|_1) & \leq \frac{\lambda \Gamma}{\widetilde{\phi}(\Gamma)^2}  \left( s_0^{1/2} \eps+\frac{2\sqrt{\log p}}{\|X\|} + \frac{\widetilde{\phi}(|\tilde{S}|) }{\|X\|} \right).
\end{align*}
Together with \eqref{eq:KL_split}, the bound $\log(D_\Pi/D_N) \leq 2\lambda \Gamma^{1/2} \eps + \log 2$ derived above and that $\widetilde{\phi}(|\tilde{S}|) \leq \widetilde{\phi}(1) \leq 1$, this yields
\begin{align*}
\KL\big(N_{\tilde{S}}(\mu_{\tilde{S}},\Sigma_{\tilde{S}})\| \Pi_{\tilde{S}}(\cdot|Y)\big)1_{\mathcal{T}_1} \leq \frac{\lambda \Gamma}{\widetilde{\phi}(\Gamma)^2}  \left( 2s_0^{1/2} \eps+\frac{3\sqrt{\log p}}{\|X\|} \right) + \log 2 .
\end{align*}
Combining this with \eqref{eq:KL_radon_nik} and that $\log(1/\hat{q}_{\tilde{S}}) \leq  \log (2e) + \Gamma \log p$ completes the proof.
\end{proof}

We next consider the mean-field subclass $\mathcal{Q}_{MF}$ of $\mathcal{Q}$ given by \eqref{def: variational:family3}. This again selects a single fixed support $S$ but further requires the fitted normal distribution to have diagonal covariance matrix. We consider a diagonal version of $N_S(\hat{\theta}_S,(X_S^T X_S)^{-1})\otimes \delta_{S^c}$ considered in Lemma \ref{lem:KL_non_diag}.

\begin{lemma}\label{lem:KL_Q_MF}
If $4e^{1+\Gamma \log p - \kappa} \leq 1$, then the variational posterior $\widetilde{Q}$ arising from the family \eqref{def: variational:family3} satisfies
\begin{align*}
\emph{KL} (\widetilde{Q} \| \Pi(\cdot|Y))1_{\mathcal{T}_1} \leq \Gamma \log \frac{p}{\widetilde{\phi}(\Gamma)} + \frac{\lambda \Gamma}{\widetilde{\phi}(\Gamma)^2}  \left( 2s_0^{1/2} \eps+\frac{3\sqrt{\log p}}{\|X\|} \right) + \log (4e) .
\end{align*}
\end{lemma}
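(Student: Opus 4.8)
The plan is to mirror the proof of Lemma \ref{lem:KL_non_diag} as closely as possible, reusing its entire localization machinery, and to isolate a single additional term that accounts for the diagonality constraint. On the event $\mathcal{T}_1$ I would take the same approximating model $\tilde{S}$ from \eqref{S} (so that $|\tilde{S}| \leq \Gamma$, $\|\theta_{0,\tilde{S}^c}\|_2 \leq \eps$ and $\hat{q}_{\tilde{S}} \geq (2e)^{-1}p^{-\Gamma}$) and the same mean $\mu_{\tilde{S}} = (X_{\tilde{S}}^T X_{\tilde{S}})^{-1}X_{\tilde{S}}^T Y$ as in \eqref{eq:mean_cov}. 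The only genuinely new choice is the diagonal covariance, for which I would take $D_{\tilde{S}} = \big(\mathrm{diag}(X_{\tilde{S}}^T X_{\tilde{S}})\big)^{-1}$, the diagonal matrix with entries $1/(X_{\tilde{S}}^T X_{\tilde{S}})_{ii}$. Exactly as in \eqref{eq:KL_radon_nik}, restricting to the $\tilde{S}$-block of the posterior gives $\KL(\widetilde{Q}\|\Pi(\cdot|Y)) \leq \log(1/\hat{q}_{\tilde{S}}) + \KL\big(N_{\tilde{S}}(\mu_{\tilde{S}},D_{\tilde{S}})\|\Pi_{\tilde{S}}(\cdot|Y)\big)$, and $\log(1/\hat{q}_{\tilde{S}}) \leq \log(2e) + \Gamma\log p$ on $\mathcal{T}_1$.

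The key device is to insert the full-covariance Gaussian $N_{\tilde{S}}(\mu_{\tilde{S}}, (X_{\tilde{S}}^T X_{\tilde{S}})^{-1})$ from Lemma \ref{lem:KL_non_diag} as a reference measure and split
\begin{align*}
\KL\big(N_{\tilde{S}}(\mu_{\tilde{S}},D_{\tilde{S}})\|\Pi_{\tilde{S}}(\cdot|Y)\big) &= \KL\big(N_{\tilde{S}}(\mu_{\tilde{S}},D_{\tilde{S}}) \big\| N_{\tilde{S}}(\mu_{\tilde{S}},(X_{\tilde{S}}^TX_{\tilde{S}})^{-1})\big) \\
&\quad + E_{\mu_{\tilde{S}},D_{\tilde{S}}}\log\frac{dN_{\tilde{S}}(\mu_{\tilde{S}},(X_{\tilde{S}}^TX_{\tilde{S}})^{-1})}{d\Pi_{\tilde{S}}(\cdot|Y)}.
\end{align*}
The second term has exactly the form \eqref{eq:KL_split}: its value equals $\log(D_\Pi/D_N) + \lambda E_{\mu_{\tilde{S}},D_{\tilde{S}}}(\|\theta_{\tilde{S}}\|_1 - \|\theta_{0,\tilde{S}}\|_1)$. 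The normalizing-constant bound $\log(D_\Pi/D_N) \leq 2\lambda\Gamma^{1/2}\eps + \log 2$ carries over unchanged since it does not involve the covariance, and the $\ell_1$ bias and variance estimate goes through verbatim because it only uses $E\|\theta_{\tilde{S}} - \mu_{\tilde{S}}\|_2^2 = \tr(D_{\tilde{S}})$ together with the same bias decomposition into terms $I$ and $II$ (these depend only on $\mu_{\tilde{S}}$ and on $\mathcal{T}_1 \subset \mathcal{T}_0$); here $\tr(D_{\tilde{S}}) = \sum_i 1/(X_{\tilde{S}}^T X_{\tilde{S}})_{ii} \leq \Gamma/\Lambda_{\min}(X_{\tilde{S}}^T X_{\tilde{S}}) \leq \Gamma/(\|X\|^2\widetilde{\phi}(\Gamma)^2)$, using that each diagonal entry of a positive definite matrix dominates its smallest eigenvalue and \eqref{eq:min_eigen}. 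This reproduces the bound $\tfrac{\lambda\Gamma}{\widetilde{\phi}(\Gamma)^2}(2s_0^{1/2}\eps + \tfrac{3\sqrt{\log p}}{\|X\|}) + \log 2$ exactly as before.

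It remains to control the first, purely Gaussian term, which is where the extra $\Gamma\log(1/\widetilde{\phi}(\Gamma))$ enters. Writing $H = X_{\tilde{S}}^T X_{\tilde{S}}$, the centred Gaussian KL equals $\tfrac12\big(\tr(HD_{\tilde{S}}) - \log\det(HD_{\tilde{S}}) - |\tilde{S}|\big)$. The point of choosing $D_{\tilde{S}} = (\mathrm{diag} H)^{-1}$ is that $\tr(HD_{\tilde{S}}) = \sum_i H_{ii}/H_{ii} = |\tilde{S}|$, so the trace and dimension terms cancel and the expression reduces to $\tfrac12\big(\sum_i \log H_{ii} - \log\det H\big)$, which is nonnegative by Hadamard's inequality. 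I would then bound $\sum_i \log H_{ii} \leq |\tilde{S}|\log\|X\|^2$ (as $H_{ii} = \|X_{\cdot i}\|_2^2 \leq \|X\|^2$) and $\log\det H \geq |\tilde{S}|\log\Lambda_{\min}(H) \geq |\tilde{S}|\log(\|X\|^2\widetilde{\phi}(\Gamma)^2)$ via \eqref{eq:min_eigen} and monotonicity of $\widetilde{\phi}$, yielding $\tfrac12\big(\sum_i\log H_{ii} - \log\det H\big) \leq |\tilde{S}|\log(1/\widetilde{\phi}(\Gamma)) \leq \Gamma\log(1/\widetilde{\phi}(\Gamma))$. Collecting $\log(2e)+\Gamma\log p$, the new term $\Gamma\log(1/\widetilde{\phi}(\Gamma))$, and the likelihood contribution gives precisely $\Gamma\log(p/\widetilde{\phi}(\Gamma)) + \tfrac{\lambda\Gamma}{\widetilde{\phi}(\Gamma)^2}(2s_0^{1/2}\eps + \tfrac{3\sqrt{\log p}}{\|X\|}) + \log(4e)$. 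The main obstacle is pinning down the diagonal covariance: one needs a single choice that makes $\tr(HD)$ cancel the dimension term (forcing $D = (\mathrm{diag} H)^{-1}$) while simultaneously keeping the determinant gap $\log\det H - \sum_i \log H_{ii}$ logarithmic in $\widetilde{\phi}(\Gamma)$ and the trace $\tr(D)$ at the same order as $\tr(H^{-1})$, and verifying that this one choice controls all three quantities at once is the crux of the argument.
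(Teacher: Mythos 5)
Your proposal is correct and follows essentially the same route as the paper's proof: the same localized model $\tilde{S}$, the same choices $\mu_{\tilde{S}}=(X_{\tilde{S}}^TX_{\tilde{S}})^{-1}X_{\tilde{S}}^TY$ and $D_{\tilde{S}}=(\mathrm{diag}(X_{\tilde{S}}^TX_{\tilde{S}}))^{-1}$, the same decomposition through the full-covariance Gaussian $N_{\tilde{S}}(\mu_{\tilde{S}},(X_{\tilde{S}}^TX_{\tilde{S}})^{-1})$ as in \eqref{KL:div:diag}, the same trace cancellation $\tr(\Sigma_{\tilde{S}}^{-1}D_{\tilde{S}})=|\tilde{S}|$, and the same determinant bound yielding the extra $\Gamma\log(1/\widetilde{\phi}(\Gamma))$ term. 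The only cosmetic difference is that you bound $\tr(D_{\tilde{S}})$ via $\Lambda_{\min}(X_{\tilde{S}}^TX_{\tilde{S}})$ and $\widetilde{\phi}(\Gamma)$, whereas the paper uses the slightly sharper column-norm bound with $\widetilde{\phi}(1)$; both suffice for the stated inequality.
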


\begin{proof}
We showed in the proof of Lemma \ref{lem:KL_non_diag} that on the event $\mathcal{T}_1$ given in \eqref{T1}, there exists a set $\tilde{S}$ satisfying \eqref{S}. Arguing as in \eqref{eq:KL_radon_nik},
\begin{equation*}
\begin{split}
\inf_{Q \in \mathcal{Q}_{MF}}\KL(Q||\Pi(\cdot|Y)) &\leq \log\frac{1}{\hat{q}_{\tilde{S}}}+\inf_{\mu_{\tilde{S}},D_{\tilde{S}} }\KL\big(N_{\tilde{S}}(\mu_{\tilde{S}},D_{\tilde{S}})\| \Pi_{\tilde{S}}(\cdot|Y) \big),
\end{split}
\end{equation*}
where the last Kullback-Leibler divergence is over the $|\tilde{S}|$-dimensional distributions and $D_{\tilde{S}}$ ranges over diagonal positive definite matrices. On $\mathcal{T}_1$ and for all $p$, we have $\log(1/\hat{q}_{\tilde{S}}) \leq \log (2ep^\Gamma) =\log (2e) + \Gamma\log p$ by \eqref{S}.

The latter Kullback-Leibler divergence equals
\begin{equation}\label{KL:div:diag}
\KL\big(N_{\tilde{S}}(\mu_{\tilde{S}},D_{\tilde{S}})\| \Pi_{\tilde{S}}(\cdot|Y) \big) = E_{\mu_{\tilde{S}},D_{\tilde{S}}} \left[ \log \frac{dN_{\tilde{S}}(\mu_{\tilde{S}},D_{\tilde{S}})}{dN_{\tilde{S}}(\mu_{\tilde{S}},\Sigma_{\tilde{S}})} + \log \frac{dN_{\tilde{S}}(\mu_{\tilde{S}},\Sigma_{\tilde{S}})}{d\Pi_{\tilde{S}}(\cdot|Y)}\right]
\end{equation}
for any covariance matrix $\Sigma_{\tilde{S}}$. For the first term in \eqref{KL:div:diag}, the formula for the Kullback-Leibler divergence between two multivariate Gaussians gives
\begin{align*}
\KL\big(N_{\tilde{S}}(\mu_{\tilde{S}},D_{\tilde{S}})\| N_{\tilde{S}}(\mu_{\tilde{S}},\Sigma_{\tilde{S}}) \big)=\tfrac{1}{2}\big(\log (|\Sigma_{\tilde{S}}|/|D_{\tilde{S}}|)-|\tilde{S}|+\tr(\Sigma_{\tilde{S}}^{-1}D_{\tilde{S}}) \big),
\end{align*}
where $|A|$ denotes the determinant of a square matrix $A$. Set now $\mu_{\tilde{S}}=(X_{\tilde{S}}^TX_{\tilde{S}})^{-1}X_{\tilde{S}}^TY$, $\Sigma_{\tilde{S}}=(X_{\tilde{S}}^TX_{\tilde{S}})^{-1}$ as in \eqref{eq:mean_cov} and define the diagonal matrix $D_{\tilde{S}}$ via $(D_{\tilde{S}})_{ii}=1/(\Sigma_{\tilde{S}}^{-1})_{ii} = 1/(X_{\tilde{S}}^T X_{\tilde{S}})_{ii}$. This gives $\tr (\Sigma_{\tilde{S}}^{-1} D_{\tilde{S}}) = |\tilde{S}|$, so that it remains to control $\tfrac{1}{2}\log (|\Sigma_{\tilde{S}}|/|D_{\tilde{S}}|) = \tfrac{1}{2}\log (|\Sigma_{\tilde{S}}||D_{\tilde{S}}^{-1}|)$. For our choice of $D_{\tilde{S}}$,
\begin{align*}
|D_{\tilde{S}}^{-1}| = \prod_{j=1}^{|\tilde{S}|} (\Sigma_{\tilde{S}}^{-1})_{jj} = \prod_{j=1}^{|\tilde{S}|} (X_{\tilde{S}}^T X_{\tilde{S}})_{jj}  \leq \|X\|^{2|\tilde{S}|},
\end{align*}
while for $\Lambda_{\min}(A)$ and $\Lambda_{\max}(A)$ the smallest and largest eigenvalues, respectively, of a matrix $A$ and using \eqref{eq:min_eigen},
\begin{align*}
|\Sigma_{\tilde{S}}| \leq \Lambda_{\max}((X_{\tilde{S}}^T X_{\tilde{S}})^{-1})^{|\tilde{S}|} = (1/\Lambda_{\min}(X_{\tilde{S}}^T X_{\tilde{S}}))^{|\tilde{S}|} \leq 1/(\|X\| \widetilde{\phi}(|\tilde{S}|))^{2|\tilde{S}|}.
\end{align*}
This yields that $\KL(N_{\tilde{S}}(\mu_{\tilde{S}},D_{\tilde{S}})\| N_{\tilde{S}}(\mu_{\tilde{S}},\Sigma_{\tilde{S}}) ) \leq |\tilde{S}| \log (1/\widetilde{\phi}(|\tilde{S}|)) \leq \Gamma \log (1/\widetilde{\phi}(\Gamma))$.

Note that the second term in \eqref{KL:div:diag} is identical to the expression \eqref{eq:KL_split}, except that the expectation is taken under $\theta_{\tilde{S}} \sim N_{\tilde{S}}(\mu_{\tilde{S}},D_{\tilde{S}})$ instead of $\theta_{\tilde{S}} \sim N_{\tilde{S}}(\mu_{\tilde{S}},\Sigma_{\tilde{S}})$. One may therefore use the exact same arguments as in Lemma \ref{lem:KL_non_diag} with the only difference occurring in the second term in \eqref{eq:post_bias_var}, where one instead has $\lambda E_{0,D_{\tilde{S}}} \| \theta_{\tilde{S}}\|_1 \leq \lambda |\tilde{S}|^{1/2} (E_{0,D_{\tilde{S}}} \| \theta_{\tilde{S}}\|_2^2)^{1/2} = \lambda |\tilde{S}|^{1/2}\tr(D_{\tilde{S}})^{1/2}$. For $e_i$ the $i^{th}$ unit vector in $\R^p$,
$$\tr(D_{\tilde{S}}) = \sum_{i=1}^{|\tilde{S}|} \frac{1}{(X_{\tilde{S}}^T X_{\tilde{S}})_{ii}} =\sum_{i\in \tilde{S}} \frac{1}{\|Xe_i\|_2^2} \leq \sum_{i\in \tilde{S}} \frac{1}{\|X\|^2 \|e_i\|_2^2 \widetilde{\phi}(1)^2} = \frac{|\tilde{S}|}{\|X\|^2 \widetilde{\phi}(1)^2},$$
so that $\lambda |\tilde{S}|^{1/2}\tr(D_{\tilde{S}})^{1/2} \leq \lambda \Gamma/(\|X\|\widetilde{\phi}(1))$. Combining the bounds as in Lemma \ref{lem:KL_non_diag} then gives the result.
\end{proof}

\begin{lemma}\label{lem:KL_MF}
If $4e^{1+\Gamma \log p - \kappa} \leq 1$, then the variational posterior $\widetilde{\Pi}$ arising from the family \eqref{def: variational:family} of spike-and-slab distributions satisfies
\begin{align*}
\emph{KL} (\widetilde{\Pi} \| \Pi(\cdot|Y))1_{\mathcal{T}_1} \leq \Gamma \log \frac{p}{\widetilde{\phi}(\Gamma)} + \frac{\lambda \Gamma}{\widetilde{\phi}(\Gamma)^2}  \left( 2s_0^{1/2} \eps+\frac{3\sqrt{\log p}}{\|X\|} \right) + \log (4e) .
\end{align*}
\end{lemma}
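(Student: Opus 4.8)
The plan is to avoid repeating the explicit construction of Lemmas~\ref{lem:KL_non_diag} and~\ref{lem:KL_Q_MF} and instead exploit the nesting of the variational families. The crucial observation is that $\mathcal{Q}_{MF} \subseteq \mathcal{P}_{MF}$: every distribution $N_S(\mu_S,D_S)\otimes\delta_{S^c}$ in $\mathcal{Q}_{MF}$ is an element of the family \eqref{def: variational:family} obtained by setting the inclusion probabilities $\gamma_i=1$ for $i\in S$ and $\gamma_i=0$ for $i\in S^c$ and matching the marginal Gaussian means and variances on $S$. Since the bound claimed here is \emph{identical} to that of Lemma~\ref{lem:KL_Q_MF}, it suffices to transfer the latter to $\widetilde{\Pi}$.

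First I would record that this transfer is immediate from the variational characterization of the two posteriors as Kullback--Leibler minimizers. Because $\widetilde{\Pi}$ minimizes $\KL(\cdot\|\Pi(\cdot|Y))$ over the strictly larger family $\mathcal{P}_{MF}$, while $\widetilde{Q}$ minimizes the same functional over the subfamily $\mathcal{Q}_{MF}$, for every realization of the data $Y$ one has
\[
\KL\big(\widetilde{\Pi}\,\|\,\Pi(\cdot|Y)\big) = \inf_{P\in\mathcal{P}_{MF}}\KL\big(P\,\|\,\Pi(\cdot|Y)\big) \leq \inf_{Q\in\mathcal{Q}_{MF}}\KL\big(Q\,\|\,\Pi(\cdot|Y)\big) = \KL\big(\widetilde{Q}\,\|\,\Pi(\cdot|Y)\big).
\]
Multiplying through by the nonnegative indicator $1_{\mathcal{T}_1}$ preserves the inequality, and applying the bound of Lemma~\ref{lem:KL_Q_MF} to the right-hand side yields the assertion directly.

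In this route there is essentially no remaining obstacle, since all the analytic labour --- the posterior localization on $\mathcal{T}_1$ producing a support set $\tilde{S}$ with $|\tilde{S}|\le\Gamma$, $\|\theta_{0,\tilde{S}^c}\|_2\le\eps$ and $\hat{q}_{\tilde{S}}\ge(2e)^{-1}p^{-\Gamma}$; the control of $\log(1/\hat{q}_{\tilde{S}})$; the bias--variance decomposition of the least-squares mean; and the determinantal estimate arising from the diagonal approximation --- has already been carried out for $\widetilde{Q}$ in Lemma~\ref{lem:KL_Q_MF}. The only point demanding any care is to verify the set inclusion $\mathcal{Q}_{MF}\subseteq\mathcal{P}_{MF}$ at the level of probability measures rather than merely parametrizations; this is routine, as a product over $i\in S$ of univariate Gaussians $N(\mu_i,(D_S)_{ii})$ tensored with $\delta_{S^c}$ is exactly $N_S(\mu_S,D_S)\otimes\delta_{S^c}$.
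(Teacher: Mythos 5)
Your proposal is correct and is exactly the paper's own argument: the paper's proof also observes that $\mathcal{Q}_{MF} \subset \mathcal{P}_{MF}$ implies $\KL(\widetilde{\Pi}\|\Pi(\cdot|Y)) \leq \KL(\widetilde{Q}\|\Pi(\cdot|Y))$ and then invokes Lemma~\ref{lem:KL_Q_MF}. Your additional remarks verifying the family inclusion at the level of measures (via $\gamma_i\in\{0,1\}$) are a sound, if slightly more explicit, justification of the same step.
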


\begin{proof}
Since $\mathcal{Q}_{MF} \subset \mathcal{P}_{MF}$, we have $\KL (\widetilde{\Pi}\| \Pi(\cdot|Y) ) \leq \KL (\widetilde{Q}\| \Pi(\cdot|Y))$. The result then follows from Lemma \ref{lem:KL_Q_MF}.
\end{proof}

\subsection{Oracle contraction rates for the original posterior distribution}\label{sec:contraction}

Oracle type contraction rates for the original posterior were established in Castillo et al. \cite{castillo:2015}. However, their results are not stated with exponential bounds as needed in \eqref{cond:exp_decay}, so we must reformulate them in order to apply our Theorem \ref{thm: general:VB}. The required exponential bounds in fact follow from their proofs; we recall here the required results and, since \cite{castillo:2015} is a rather technical article, we provide a brief explanation why the exponential bounds hold.

\begin{lemma}[Theorem 10 of \cite{castillo:2015}]\label{lem: support}
Suppose the prior satisfies \eqref{eq:prior_cond} and \eqref{prior_lambda}. Then for $p$ large enough depending on $A_2,A_4$, any $M>0$ and any $\theta_0,\theta_*\in \R^p$,
\begin{align*}
& E_{\theta_0} \Pi\left( \theta: |S_\theta| \geq |S_*| + \frac{4M}{A_4} \left(1 +\frac{16}{\phi(S_*)^2}\frac{\lambda}{\bar{\lambda}} \right)|S_*|  + \frac{4M\|X(\theta_0-\theta_*)\|_2^2}{A_4 \log p}\Big| Y \right)1_{\mathcal{T}_0}  \\
& \quad \quad \leq C(A_2,A_4) \exp \left(  -(M-2) \left( 1+\tfrac{16}{\phi(S_*)^2}\tfrac{\lambda}{\bar{\lambda}} \right) |S_*|\log p - (M-1)\|X(\theta_0 - \theta_*)\|_2^2 \right),
\end{align*}
where $S_* = S_{\theta_*}$ and $\mathcal{T}_0$ is the event in \eqref{eq:T_event}.
\end{lemma}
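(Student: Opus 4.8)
The plan is to trace the proof of Theorem 10 in \cite{castillo:2015} and simply record the exponential dependence on the excess dimension, which their argument already produces but the cited statement does not make explicit. Using the decomposition \eqref{eq:posterior}, the posterior probability of $\{|S_\theta| \geq k\}$ equals the ratio $\big(\sum_{|S| \geq k} w_S\big)/\big(\sum_{S} w_S\big)$, where $w_S = \tfrac{\pi_p(|S|)}{\binom{p}{|S|}} \int_{\R^{|S|}} e^{-\frac12\|Y - X_S u\|_2^2} \prod_{i\in S}\tfrac{\lambda}{2}e^{-\lambda|u_i|}\,du$ is the unnormalised evidence of model $S$. First I would lower bound the denominator by the single reference term $w_{S_*}$ and upper bound the numerator by summing the evidence over all models of each fixed size, so that the whole quantity reduces to controlling evidence ratios $w_S/w_{S_*}$ uniformly over models $S$ with $|S|$ large.

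The exponential decay is then driven by two separate mechanisms, both already present in \cite{castillo:2015}. The first is purely the prior: condition \eqref{eq:prior_cond} gives $\pi_p(s) \leq (A_2 p^{-A_4})^{s - s_*}\pi_p(s_*)$ for $s > s_*$, and the factor $1/\binom{p}{s}$ carried by each model of size $s$ offsets the $\binom{p}{s}$ models summed at that size, so the net prior cost of enlarging the model by one coordinate is of order $A_2 p^{-A_4}$. The second is the likelihood, controlled on the event $\mathcal{T}_0$: with $P_S$ the orthogonal projection onto the column span of $X_S$, the fit gain $\|(I - P_{S_*})Y\|_2^2 - \|(I - P_S)Y\|_2^2$ obtained by enlarging the model is bounded uniformly over all size-$s$ models using $\|X^T(Y - X\theta_0)\|_\infty \leq 2\|X\|\sqrt{\log p}$, so that each extra coordinate buys at most a $\log p$-order factor in the evidence. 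The Laplace slab contributes the additional correction $\tfrac{16}{\phi(S_*)^2}\tfrac{\lambda}{\bar\lambda}\log p$ per unit of excess dimension, arising from the interaction of the $\ell_1$ penalty $\lambda\|\theta\|_1$ with the compatibility number through the change-of-measure and compatibility estimates of \cite{castillo:2015}.

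Combining the two mechanisms, on $\mathcal{T}_0$ the posterior odds of a model of size $s_* + j$ relative to $S_*$ is at most $e^{-cj\log p}$, with the rate governed by the surplus of $A_4$ over the per-coordinate likelihood gain. Summing this geometric series over the excess dimensions $j \geq \tfrac{4M}{A_4}\big(1 + \tfrac{16}{\phi(S_*)^2}\tfrac{\lambda}{\bar\lambda}\big)s_* + \tfrac{4M}{A_4}\tfrac{\|X(\theta_0-\theta_*)\|_2^2}{\log p}$ yields the stated bound; the bias term $\|X(\theta_0-\theta_*)\|_2^2$ enters because a model must either include the residual signal coordinates, costing roughly $\|X(\theta_0-\theta_*)\|_2^2/\log p$ extra dimensions, or else pay the corresponding fit penalty. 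Taking $E_{\theta_0}[\,\cdot\,1_{\mathcal{T}_0}]$ is then immediate, since every evidence-ratio bound already holds deterministically on $\mathcal{T}_0$; the constant $C(A_2,A_4)$ collects the prior ratio and the sum of the geometric series, which converges once $p$ is large enough that $A_2 p^{-A_4}$ is small and the net per-coordinate rate is strictly positive.

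The main obstacle is not the mechanism, which is standard, but the bookkeeping required to reproduce the precise exponent $-(M-2)\big(1 + \tfrac{16}{\phi(S_*)^2}\tfrac{\lambda}{\bar\lambda}\big)s_*\log p - (M-1)\|X(\theta_0-\theta_*)\|_2^2$. One must track how the threshold is consumed: part of the $\tfrac{4M}{A_4}$-sized excess pays for overcoming the prior penalty and the bias, while the remainder is converted into the exponential rate, and the factors of $4$ and $A_4$ must cancel correctly to leave the coefficients $(M-2)$ and $(M-1)$. Matching these constants is delicate, and at this point one should invoke the specific inequalities of \cite{castillo:2015} verbatim rather than re-derive them, which is exactly the brief explanation the present section is meant to supply.
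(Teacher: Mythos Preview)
Your high-level plan --- trace the proof of Theorem 10 in \cite{castillo:2015} and record the exponential dependence on the excess dimension that their argument already produces --- is exactly what the paper does. However, the mechanism you describe differs from the one actually used. The paper does not bound evidence ratios deterministically on $\mathcal{T}_0$ via projection fit gains $\|(I-P_{S_*})Y\|_2^2 - \|(I-P_S)Y\|_2^2$. Instead it invokes the change-of-measure inequality (Lemma~3 of \cite{castillo:2015}) together with the denominator lower bound (their (6.3)/Lemma~2, which is where $\mathcal{T}_0$ and the compatibility correction $16/\phi(S_*)^2$ actually enter), yielding
\[
E_{\theta_0}\Pi(B|Y)1_{\mathcal{T}_0} \leq e^{\|X(\theta_0-\theta_*)\|_2^2}\Big(\tfrac{ep^{2s_*}}{\pi_p(s_*)}\, e^{8\lambda\bar\lambda s_*/(\|X\|^2\phi(S_*)^2)} \textstyle\int_B e^{-(\lambda/4)\|\theta-\theta_*\|_1+\lambda\|\theta\|_1}\,d\Pi(\theta)\Big)^{1/2}.
\]
The bias $\|X(\theta_0-\theta_*)\|_2^2$ thus enters as the cost of shifting from $P_{\theta_0}$ to $P_{\theta_*}$, not through a model-inclusion versus fit-penalty dichotomy; the remaining integral is bounded purely by the prior condition \eqref{eq:prior_cond}, giving the geometric factor $(4A_2 p^{-A_4})^{R+1-s_*}$, and the exponent is read off by choosing $R$ appropriately. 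The square root (from Cauchy--Schwarz in the change of measure) is also why the threshold carries $4M/A_4$ rather than $2M/A_4$.

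Your projection route would need uniform control of $\|(I-P_S)Y\|_2^2$ over all models $S$ of a given size, which $\mathcal{T}_0$ alone does not supply; the change-of-measure device sidesteps this by averaging the noise out under $E_{\theta_0}$, so the bound is genuinely in expectation rather than deterministic on $\mathcal{T}_0$. Your final paragraph effectively concedes the point by recommending that one invoke the specific inequalities of \cite{castillo:2015} verbatim --- which is precisely, and only, what the paper's proof does.
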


\begin{proof}
Following the proof of Theorem 10 of \cite{castillo:2015}, one obtains using (6.3) and the second display on p. 2008 of \cite{castillo:2015} that for $\bar{\lambda} = 2\|X\|\sqrt{\log p}$, any $\theta_*$ and any measurable set $B \subseteq \R^p$,
$$\sup_{\theta_0 \in \R^p} E_{\theta_0} \Pi(B|Y)1_{\mathcal{T}_0} \leq e^{\|X(\theta_0-\theta_*)\|_2^2} \left( \frac{ep^{2s_*}}{\pi_p(s_*)} e^\frac{8\lambda\bar{\lambda}s_*}{\|X\|^2\phi(S_*)^2} \int_B e^{-(\lambda/4)\|\theta-\theta_*\|_1 + \lambda\|\theta\|_1} d\Pi(\theta) \right)^{1/2}.$$
Setting now $B = \{\theta:|S_\theta| > R\}$ for $R \geq s_*$, the third display on p. 2008 of \cite{castillo:2015} shows that
\begin{align*}
\int_B e^{-(\lambda/4)\|\theta-\theta_*\|_1 + \lambda\|\theta\|_1} d\Pi(\theta) & \leq \pi_p(s_*) 4^{s_*} \left( \frac{4A_2}{p^{A_4}}\right)^{R+1-s_*} \sum_{j=0}^\infty \left( \frac{4A_2}{p^{A_4}} \right)^j.\\
& \leq C(A_2,A_4) \pi_p(s_*) 4^{s_*} \left( \frac{4A_2}{p^{A_4}}\right)^{R+1-s_*}
\end{align*}
for $p$ large enough that $4A_2/p^{A_4}<1$. Substituting this into the second last display and using that $\bar{\lambda}^2 = 4\|X\|^2 \log p$,
$$\sup_{\theta_0 \in \R^p} E_{\theta_0} \Pi(B|Y)1_{\mathcal{T}_0} \leq  C(A_2,A_4) e^{\|X(\theta_0-\theta_*)\|_2^2}
(2p)^{s_*} e^\frac{16 \lambda  s_*\log p}{\bar{\lambda}\phi(S_*)^2} \left( \frac{4A_2}{p^{A_4}}\right)^{(R+1-s_*)/2}.$$
Choosing $R = (2\delta+1)s_*-1 + 2\eta$, the right-hand side equals
\begin{align*}
C(A_2,A_4) \exp & \big\{ \|X(\theta_0-\theta_*)\|_2^2 + \left( \log 2 + \delta \log (4A_2) \right)s_* + \left( 1 + \tfrac{16\lambda}{\bar{\lambda}\phi(S_*)^2} - \delta A_4 \right) s_* \log p \\
& \quad + \eta (\log (4A_2) - A_4 \log p) \big\}.
\end{align*}
Further picking $\delta = 2M(1+16\lambda/(\bar{\lambda}\phi(S_*)^2))/A_4$ and $\eta = 2M\|X(\theta_0-\theta_*)\|_2^2/(A_4 \log p)$, the right-hand side is bounded by
$$C(A_2,A_4) \exp\{ -(M-2)(1+\tfrac{16\lambda}{\bar{\lambda}\phi(S_*)^2}) s_*\log p - (M-1) \|X(\theta_0 - \theta_*)\|_2^2 \}$$ 
for $p$ large enough depending on $A_2,A_4$, as required.
\end{proof}

The following result is a modified version of the oracle inequality in Theorem 3 of \cite{castillo:2015} with $S_* = S_0$. Since it is stated somewhat differently in \cite{castillo:2015}, we sketch why this is true. 

\begin{lemma}[Theorem 3 of \cite{castillo:2015}]\label{lem: contraction:spike:slab}
Suppose the prior satisfies \eqref{eq:prior_cond} and \eqref{prior_lambda}. Then there exists a constant $M>0$ such that for $p$ large enough, both depending only on $A_1,A_3,A_4$, any $L\geq 1$, and uniformly over all $\theta_0,\theta_* \in \R^p$ with $|S_{\theta_*}| \leq |S_{\theta_0}|$,
\begin{align*}
& E_{\theta_0} \Pi \left( \theta : \|X(\theta-\theta_0)\|_2 > \frac{ML^{1/2} }{\overline{\psi}_{L+2}(S_0)} \left[ \frac{\sqrt{s_*\log p}}{\phi(S_*)} +\|X(\theta_0-\theta_*)\|_2 \right] \Big| Y \right)1_{\mathcal{T}_0} \\
& \leq C \exp \left(- \left[ L\wedge \tfrac{4(L+2)}{A_4} \right] \left[ (1+\tfrac{16}{\phi(S_*)^2}\tfrac{\lambda}{\bar{\lambda}}) s_*\log p + \|X(\theta_0-\theta_*)\|_2^2 \right] \right) \\
& \qquad + C\exp(-L(1+\tfrac{16}{\phi(S_0)^2}\tfrac{\lambda}{\bar{\lambda}}) s_0\log p),
\end{align*}
where $s_0 = |S_{\theta_0}|$, $s_* = |S_{\theta_*}|$ and $C=C(A_2,A_4)$. Moreover, both
$$E_{\theta_0} \Pi \left( \theta : \|\theta-\theta_0\|_1 > \| \theta_0 - \theta_*\|_1 + \frac{ML }{\overline{\psi}_{L+2}(S_0)^2} \left[ \frac{s_* \sqrt{\log p}}{\|X\| \phi(S_*)^2} +\frac{\|X(\theta_0-\theta_*)\|_2^2}{\|X\| \sqrt{\log p}} \right] \Big| Y \right)1_{\mathcal{T}_0},$$
$$E_{\theta_0} \Pi \left( \theta : \|\theta-\theta_0\|_2 > \frac{ML^{1/2} }{\|X\| \widetilde{\psi}_{L+2}(S_0)^2} \left[ \frac{\sqrt{s_*\log p}}{\phi(S_*)} +\|X(\theta_0-\theta_*)\|_2 \right] \Big| Y \right)1_{\mathcal{T}_0},$$
satisfy the same inequality.
\end{lemma}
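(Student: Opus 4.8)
The plan is to follow the proof of Theorem 3 of \cite{castillo:2015} while tracking the dependence on $L$, exactly as was done for the dimension bound in Lemma \ref{lem: support}. The common starting point is the change-of-measure master inequality (from (6.3) and the display on p.~2008 of \cite{castillo:2015}), which for any measurable $B\subseteq\R^p$ and any oracle $\theta_*$ bounds $E_{\theta_0}\Pi(B|Y)1_{\mathcal{T}_0}$ by a dimension-dependent prefactor times $\big(\int_B e^{-(\lambda/4)\|\theta-\theta_*\|_1+\lambda\|\theta\|_1}\,d\Pi(\theta)\big)^{1/2}$. The factor $e^{\lambda\|\theta\|_1}$ cancels the Laplace slab, so this integral reduces to a sum over models $S$ of the prior weights $\pi_p(|S|)\binom{p}{|S|}^{-1}(\lambda/2)^{|S|}$ times $\int_{B\cap\{S_\theta=S\}}e^{-(\lambda/4)\|\theta-\theta_*\|_1}\,d\theta_S$. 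Each of the three statements is then obtained by choosing $B$ to be the relevant ``bad'' set and estimating this integral, with the only novelty being the explicit exponential tracking of $L$; no inequality beyond those already used in Lemma \ref{lem: support} is needed.

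For the prediction loss, take $B=\{\theta:\|X(\theta-\theta_0)\|_2>r\}$ with $r$ the stated radius. I would split according to the dimension of $\theta$. On $\{|S_\theta|>\Gamma\}$ the posterior mass is controlled directly by Lemma \ref{lem: support} applied with $M\asymp L+2$, which produces the first exponential term; here the factor $L\wedge\tfrac{4(L+2)}{A_4}$ is precisely the minimum of the exponent coming from the squared radius scaling $(L^{1/2})^2=L$ and the exponent coming from the prior tail in \eqref{eq:prior_cond}, where $A_4$ enters. On the complementary event $\{|S_\theta|\le\Gamma\}$, the combined support $S_\theta\cup S_0$ has cardinality at most a fixed multiple of $(L+2)s_0$, which is exactly the dimension at which $\overline{\psi}_{L+2}(S_0)=\overline{\phi}(\cdots)$ is evaluated; this lets me invoke uniform compatibility to lower bound $\|X(\theta-\theta_0)\|_2$ and hence estimate the prior integral over $B$ with the correct exponential decay in $L\Delta_*$. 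Guaranteeing that $|S_\theta\cup S_0|$ is small enough for this requires a second application of Lemma \ref{lem: support}, now with $\theta_*=\theta_0$, and this is the source of the remainder term $C\exp(-L(1+\tfrac{16}{\phi(S_0)^2}\tfrac{\lambda}{\bar{\lambda}})s_0\log p)$.

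The $\ell_2$ statement follows from the prediction bound via the smallest scaled sparse singular value: on models of size $\lesssim(L+2)s_0$ one has $\|\theta-\theta_0\|_2\le\|X(\theta-\theta_0)\|_2/(\|X\|\widetilde{\phi}(\cdots))$ by Definition \ref{def:ssssv}, which explains the factor $\widetilde{\psi}_{L+2}(S_0)^2$ in its radius and the identical $L^{1/2}$ scaling. The $\ell_1$ statement is the usual slow-rate bound obtained within the same analysis (using compatibility, Definition \ref{def:compat}, through $\overline{\psi}_{L+2}(S_0)$): since it tracks the \emph{squared} prediction error divided by $\|X\|\sqrt{\log p}$, its radius scales linearly in $L$ rather than as $L^{1/2}$, and the extra additive term $\|\theta_0-\theta_*\|_1$ reflects the bias incurred by the oracle choice. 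I expect the main obstacle to be the bookkeeping around the compatibility conversion: one must ensure that, with overwhelming posterior probability, $\theta$ lies in the compatibility cone of a model of size comparable to $(L+2)s_0$, so that $\overline{\phi}$ and $\widetilde{\phi}$ at that dimension may legitimately be used. It is precisely this requirement that forces the two-term structure of the bound, producing both the $s_*$-dependent oracle exponent and the $s_0$-dependent remainder.
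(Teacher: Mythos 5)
Your proposal is correct and follows essentially the same route as the paper's proof: localize the posterior on models of dimension $O\big(\tfrac{(L+2)}{A_4}s_0\big)$ via two applications of Lemma \ref{lem: support} (once with $\theta_*$ and once with $\theta_0$, which is exactly what produces the two-term exponential bound), then on this low-dimensional event run the change-of-measure and compatibility argument of \cite{castillo:2015} at the dimension where $\overline{\psi}_{L+2}(S_0)$ is evaluated, and deduce the $\ell_2$ and $\ell_1$ statements from the prediction bound via $\widetilde{\psi}_{L+2}(S_0)$ and compatibility, respectively. One cosmetic remark: in the paper the factor $L$ in the minimum comes from the localization exponent (Lemma \ref{lem: support} applied with $M=L+2$, giving $M-2=L$), whereas $\tfrac{4(L+2)}{A_4}$ comes from the on-event bound, whose exponent equals the dimension threshold times $\log p$ (with $A_4$ entering through the prior tail), so your attribution of the two halves of the minimum is essentially reversed — but this heuristic labelling does not affect the validity of the argument.
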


\begin{proof}Unless otherwise stated, we use here the notation from \cite{castillo:2015}. As on p. 2008 of \cite{castillo:2015}, define the event $E = \{ \theta: |S_\theta| \leq D_* \wedge D_0\}$ for
\begin{equation}\label{D_*}
D_* = D_*(L) = s_* + \frac{4(L+2)}{A_4}\left( 1+ \frac{16}{\phi(S_*)^2} \frac{\lambda}{\bar{\lambda}} \right) s_* + \frac{4(L+2)\|X(\theta_0-\theta_*)\|_2^2}{A_4 \log p},
\end{equation}
where $\bar{\lambda} = 2\|X\|\sqrt{\log p}$ and $D_0$ is the same expression with $\theta_*$ replaced by $\theta_0$. Note that we take different constants than in (6.7) of \cite{castillo:2015} to obtain the required exponential tail bound. Lemma \ref{lem: support} yields, with $M=L+2$ and since $s_* \leq s_0$,
\begin{equation}\label{localization}
\begin{split}
E_{\theta_0}\Pi(E^c|Y) 1_{\mathcal{T}_0} & = E_{\theta_0} \Pi(\theta: |S_\theta| > D_*\wedge D_0 |Y) 1_{\mathcal{T}_0} \\
& \leq C(A_2,A_4) \exp(-L(1+\tfrac{16}{\phi(S_0)^2}\tfrac{\lambda}{\bar{\lambda}}) s_0\log p) \\
& \quad + C(A_2,A_4) \exp(-L(1+\tfrac{16}{\phi(S_*)^2}\tfrac{\lambda}{\bar{\lambda}}) s_*\log p - L\|X(\theta_0-\theta_*)\|_2^2)
\end{split}
\end{equation}
for every $\theta_0 \in \R^p$, so we can intersect the desired set with $E$ in what follows.

From definition \eqref{tilde_compat}, we have $\overline{\psi}_{L+2}(S_0) = \overline{\phi}(D_0 + s_0)$. Continuing through the proof, the third last display on p. 2009 of \cite{castillo:2015} (note that up to this point, the definitions of $D_*$ and $D_0$ only affect the definition of the compatibility type constants) gives
\begin{align*}
\Pi & (\theta\in E: \|X(\theta-\theta_0)\|_2 > 4\|X(\theta_*-\theta_0)\|_2 + R|Y) 1_{\mathcal{T}_0} \\
& \leq \frac{e}{\pi_p(0)A_1^{s_*}}  p^{(2+A_3)s_*} e^{\frac{32\bar{\lambda}^2(D_*+s_*)}{\|X\|^2 \overline{\psi}_{L+2}(S_0)^2}} e^{-\frac{R^2}{8}} \sum_{s=0}^p \pi_p(s) 2^s,
\end{align*}
where again $\bar{\lambda} = 2\|X\|\sqrt{\log p}$. By condition \eqref{eq:prior_cond}, $\sum_{s=0}^p \pi_p(s) 2^s\leq \pi_p(0) \sum_{s=0}^p (2A_2p^{-A_4})^s \leq \pi_p(0) C(A_2,A_4)$ for $p$ large enough. Using this and taking $R^2 = \overline{M}^2 (D_*+s_*)\log p/\overline{\psi}_{L+2}(S_0)^2$, the last display is bounded by
\begin{align*}
& C(A_2,A_4) \exp \left\{ -s_* \log A_1 + (2+A_3)s_*\log p + \frac{128(D_*+s_*)\log p}{\overline{\psi}_{L+2}(S_0)^2}-\frac{1}{8}R^2\right\} \\
& \leq C(A_2,A_4) \exp \left\{ - \left[ \frac{\overline{M}^2}{8} -130-A_3 -\frac{|\log A_1|}{\log p} \right] \frac{(D_*+s_*)\log p}{\overline{\psi}_{L+2}(S_0)^2} \right\},
\end{align*}
where we have also used $\overline{\psi}_{L+2}(S_0) \leq \overline{\phi}(1) \leq 1$ for any $S_0$. Using the definition \eqref{D_*} of $D_*$, that $\lambda/\bar{\lambda} \leq 2$ and the inequality $\sqrt{x+y} \leq \sqrt{x} + \sqrt{y}$ for any $x,y\geq 0$,
$$(D_*+s_*)^{1/2} \leq Cs_*^{1/2}L^{1/2}/\phi(S_*) + CL^{1/2} \|X(\theta_0-\theta_*)\|_2/\sqrt{\log p}$$
for a constant $C>0$ depending only on $A_4$, yielding
$$R \leq \frac{C\overline{M}L^{1/2}}{\overline{\psi}_{L+2}(S_0)} \left( \frac{\sqrt{s_*\log p}}{\phi(S_*)} + \|X(\theta_0-\theta_*)\|_2 \right).$$
Combining this with the third last display gives
\begin{align*}
& \Pi \left( \theta\in E: \|X(\theta-\theta_0)\|_2 > \frac{ML^{1/2} }{\overline{\psi}_{L+2}(S_0)} \left[ \frac{\sqrt{s_*\log p}}{\phi(S_*)} +\|X(\theta_0-\theta_*)\|_2 \right] \Big| Y \right)1_{\mathcal{T}_0} \\
& \qquad \qquad \leq C(A_2,A_4) \exp(-(D_*+s_*)\log p/\overline{\psi}_{L+2}(S_0)^2)
\end{align*}
for some $M>0$ large enough depending only on $A_1,A_3,A_4$. Using $\overline{\psi}_{L+2}(S_0) \leq 1$ and the definition \eqref{D_*}, the probability in the last display is smaller than that in \eqref{localization} if $4(L+2)/A_4 \geq L$. Considering these two cases separately establishes the required inequality for the prediction error $\|X(\theta-\theta_0)\|_2$.

For $\ell_1$-loss, the result follows from that for prediction error and the first display on p. 2010 of \cite{castillo:2015}.

For $\ell_2$-loss, note that $\|X(\theta-\theta_0)\|_2 \geq \widetilde{\phi}(|S_{\theta-\theta_0}|)\|X\| \|\theta-\theta_0\|_2 \geq \widetilde{\psi}_{L+2}(S_0) \|X\| \|\theta-\theta_0\|_2$ for any $\theta \in E$. The result then follows from that for prediction error and that $\overline{\psi}_{L+2}(S_0) \geq \widetilde{\psi}_{L+2}(S_0)$ by Lemma \ref{lem:compat}.
\end{proof}

\setcounter{lemma}{0}
\setcounter{theorem}{0}
\setcounter{definition}{0}
\setcounter{remark}{0}
\setcounter{equation}{0}

\section{Additional methodological details}\label{sec:method_additional}\label{sec:proofs2}

\subsection{Proofs for the variational algorithm}\label{sec: proof:VBalg}

We provide here the derivations of the formulas used in the CAVI update equations of our variational algorithm in Section \ref{sec: VBalgorithm}.

Proof of \eqref{eq: joint:KL}: We compute the Kullback-Leibler divergence between $P_{\boldsymbol{\mu},\boldsymbol{\sigma},\boldsymbol{\gamma}}$ and the posterior $\Pi(\cdot |Y)$, conditional on $z_i=1$, as a function of $\mu_i$ and $\sigma_i$. Since the variational probability distribution of $\theta_i$ conditional on $z_i=1$ (i.e. $P_{\mu_i,\sigma_i|z_i=1}$) is singular to the Dirac measure $\delta_0$, in the Radon-Nikodym derivative $dP_{\mu_i,\sigma_i|z_i=1}/d \Pi_i$, where $\Pi_i$ is the prior for $\theta_i$, it suffices to consider only the continuous part of the prior measure in the denominator. Write $d\Pi(\theta|Y) = D_\Pi^{-1} e^{-\|Y-X\theta\|_2^2/2} d\Pi(\theta)$ with $D_\Pi$ the normalizing constant. Using all of these and the prior product structure, $\KL (P_{\boldsymbol{\mu},\boldsymbol{\sigma},\boldsymbol{\gamma}|z_i=1} \| \Pi(\cdot|Y) )$ equals, as a function of $\mu_i$ and $\sigma_i$,
\begin{align*}
& E_{\boldsymbol{\mu},\boldsymbol{\sigma},\boldsymbol{\gamma}|z_i=1}\left[ \tfrac{1}{2}\|Y-X\theta\|_2^2 + \log D_\Pi + \log \frac{dP_{\boldsymbol{\mu}_{-i},\boldsymbol{\sigma}_{-i},\boldsymbol{\gamma}_{-i}} \otimes N(\mu_i,\sigma_i^2)}{d\Pi_{-i} \otimes \overline{w}_i \text{Lap}(\lambda)} \right] \\
& = E_{\boldsymbol{\mu},\boldsymbol{\sigma},\boldsymbol{\gamma}|z_i=1}\left[ \tfrac{1}{2}(Y-X\theta)^T(Y-X\theta) + \log \frac{dP_{\boldsymbol{\mu}_{-i},\boldsymbol{\sigma}_{-i},\boldsymbol{\gamma}_{-i}}}{d\Pi_{-i}}(\theta_{-i}) - \log \sigma_i -\frac{(\theta_i-\mu_i)^2}{2\sigma_i^2} + \lambda |\theta_i|  \right] + C,
\end{align*}
where $C>0$ is independent of $\mu_i,\sigma_i$ and $\overline{w}_i = a_0 /(a_0+b_0)$ is the prior mean for $w_i$. Recall that the expected value of the folded normal distribution with parameters $\mu\in\mathbb{R}$ and $\sigma>0$ is $\sigma\sqrt{2/\pi} e^{-\mu^2/(2\sigma^2)}+\mu(1-2\Phi(-\mu/\sigma))$. Using this and explicitly evaluating the expectation of the first term, the last display equals
\begin{align*}
& \mu_i \sum_{k\neq i} (X^TX)_{ik} \gamma_k\mu_k+\frac{1}{2}(X^TX)_{ii}(\sigma_i^2+\mu_i^2)-(Y^TX)_i\mu_i+\lambda \sigma_i\sqrt{2/\pi}e^{-\mu_i^2/(2\sigma_i^2)}\\
&\qquad+\lambda\mu_i(1-2\Phi(-\mu_i/\sigma_i))-\log\sigma_i+C',
\end{align*}
where $C'>0$ is again independent of $\mu_i,\sigma_i$. Minimizing the last display with respect to either $\mu_i$ or $\sigma_i$ (but not jointly) gives the same minimizers as minimizing $f_i$ and $g_i$ in \eqref{eq: joint:KL}.

Proof of \eqref{eq: VB:step:lambda}:  Similarly to the derivation of \eqref{eq: joint:KL} above, the KL divergence between $P_{\boldsymbol{\mu},\boldsymbol{\sigma},\boldsymbol{\gamma}}$ and $\Pi(\cdot |Y)$ as a function of $\gamma_i$ equals
\begin{align*}
& E_{\boldsymbol{\mu},\boldsymbol{\sigma},\boldsymbol{\gamma}}\left[ \tfrac{1}{2}\|Y-X\theta\|_2^2 + \log \frac{dP_{\boldsymbol{\mu}_{-i},\boldsymbol{\sigma}_{-i},\boldsymbol{\gamma}_{-i}} }{d\Pi_{-i}}(\theta_{-i}) + \log \frac{d (\gamma_i N(\mu_i,\sigma_i^2) + (1-\gamma_i)\delta_0) }{d (\overline{w}_i \text{Lap}(\lambda)+(1-\overline{w}_i)\delta_0)}(\theta_i) \right] + C,
\end{align*}
where $C>0$ is independent of $\gamma_i$ and $\overline{w}_i = a_0/(a_0+b_0)$. Since on an event of $P_{\boldsymbol{\mu},\boldsymbol{\sigma},\boldsymbol{\gamma}}$-probability one, $\theta_i = 0$ if and only if $z_i = 0$, the last display equals
\begin{align}
& E_{\boldsymbol{\mu},\boldsymbol{\sigma},\boldsymbol{\gamma}}\left[ \tfrac{1}{2}\|Y-X\theta\|_2^2 + 1_{\{z_i=1\}} \log \frac{\gamma_i dN(\mu_i,\sigma_i^2)}{\overline{w}_i d\text{Lap}(\lambda)}(\theta_i) + 1_{\{z_i=0\}} \log \frac{1-\gamma_i}{1-\overline{w}_i} \right] + C\nonumber\\
& = E_{\boldsymbol{\mu},\boldsymbol{\sigma},\boldsymbol{\gamma}}\left[ \tfrac{1}{2}\|Y-X\theta\|_2^2 + 1_{\{z_i=1\}}\left( \log \frac{\sqrt{2}}{\sqrt{\pi}\sigma_i\lambda} -\frac{(\theta_i-\mu_i)^2}{2\sigma_i^2} + \lambda |\theta_i|  \right)  \right]\nonumber \\
& \qquad \quad + \gamma_i \log \frac{\gamma_i}{\overline{w}_i} + (1-\gamma_i) \log \frac{1-\gamma_i}{1-\overline{w}_i} + C\nonumber\\
& =  \gamma_i \bigg\{ \mu_i \sum_{k\neq i} (X^TX)_{ki} \gamma_k\mu_k+\tfrac{1}{2} (X^TX)_{ii}(\sigma_i^2+\mu_i^2) -(Y^TX)_i\mu_i +\log \frac{\sqrt{2}}{\sqrt{\pi}\sigma_i\lambda} -\frac{1}{2} \nonumber\\
& \qquad \quad + \lambda \sigma_i\sqrt{2/\pi}e^{-\mu_i^2/(2\sigma_i^2)} +\lambda\mu_i(1-2\Phi(-\mu_i/\sigma_i)) + \log \frac{\gamma_i}{1-\gamma_i} + \log \frac{b_0}{a_0} \bigg\} + \log(1-\gamma_i) + C\nonumber\\
&=:h_i(\gamma_i|\boldsymbol{\mu},\boldsymbol{\sigma},\boldsymbol{\gamma}_{-i})\label{eg: def:h_gamma_i}
\end{align}
where $C>0$ may change from line to line and is independent of $\gamma_i$. Setting the derivative with respect to $\gamma_i$ of this last expression equal to zero and rearranging gives \eqref{eq: VB:step:lambda}.

\subsection{Algorithms for Gaussian slabs}\label{sec:gauss_VB_algos}

We collect here for completeness the variational algorithms for the spike-and-slab prior with Gaussian slabs with which we have compared our method. First we give the component-wise update of the parameters as in \cite{logsdon2010variational}, see Algorithm \ref{alg: VB_Gauss} below.

\begin{algorithm}
\caption{Component-wise variational Bayes for Gaussian prior slabs}\label{alg: VB_Gauss}
\begin{algorithmic}[1]
\BState \textbf{Initialize}: $(\Delta_H,\boldsymbol{\sigma},\boldsymbol{\gamma})$, $\boldsymbol{\mu}:=\hat{\mu}^{(0)}$ (for a preliminary estimator $\hat{\mu}^{(0)}$), $\boldsymbol{a}:=order( |\boldsymbol{\mu}|)$\While{$ \Delta_{H}\geq \eps$}
\State {$\boldsymbol{\gamma}_{old}:=\boldsymbol{\gamma}$}
\For {$ i=1$ to $p$}
\State{$\sigma_i:= 1/\sqrt{(X^TX)_{ii}+1} $}
\State{$\mu_i:=\sigma_i^2\big((Y^TX)_{i}- \sum_{j\neq i}(X^TX)_{j,i}\gamma_j \mu_j \big)$}
\State{$\gamma_i=\text{logit}^{-1}\big(\log(a_0/b_0)+\log \sigma_i+\mu_i^2/(2\sigma_i^2) \big)$}
\EndFor
\State{ $\Delta_{H}:=\max_i\{ | H(\gamma_i)-H(\gamma_{old,i}) | \}$}
\EndWhile
\end{algorithmic}
\end{algorithm}

In \cite{huang:2016} the authors argue that coordinate-wise parameter updates can accumulate error from each step leading to a suboptimal optimization procedure. To resolve this, they propose simultaneously updating the entire parameter vectors $\boldsymbol{\mu},\boldsymbol{\sigma}$ and $\boldsymbol{\lambda}$ without using a CAVI type of algorithm. A version of their proposed algorithm is given in Algorithm \ref{alg: VB_Gauss:batch}, where $diag(v)$, $v\in \mathbb{R}^p$, creates a diagonal square matrix in $\mathbb{R}^{p\times p}$ with diagonal elements $v$ (see also Algorithm 1 of \cite{yang:2017} with $\alpha=1$, $\sigma=1$ and $\nu_1=1$ for a related implementation). As in the other cases, we have taken the ridge regression estimator $(X^TX+I)^{-1} X^TY$ as our initialization for $\mu$.

\begin{algorithm}
\caption{Batch-wise variational Bayes for Gaussian prior slabs}\label{alg: VB_Gauss:batch}
\begin{algorithmic}[1]
\BState \textbf{Initialize}: $(\Delta_H,\boldsymbol{\sigma},\boldsymbol{\gamma})$, $\boldsymbol{\mu}:=\hat{\mu}^{(0)}$ (for a preliminary estimator $\hat{\mu}^{(0)}$), $\boldsymbol{a}:=order( |\boldsymbol{\mu}|)$
\While{$ \Delta_{H}\geq \eps$}
\State {$\boldsymbol{\gamma}_{old}:=\boldsymbol{\gamma}$}
\State {$\Gamma:=diag(\gamma)$}
\State{$\mu:= (X^TX+ \Gamma )^{-1}X^TY$}
\For{$ i=1$ to $p$}
\State{$\sigma_i:= 1/\sqrt{(X^TX)_{ii}+\gamma_i}$}
\State{$\gamma_i:=\text{logit}^{-1}\big(\text{logit}(1/p)+\log \sigma_i+\mu_i^2/(2\sigma_i^2) \big)$}
\EndFor
\State{ $\Delta_{H}:=\max_i\{ | H(\gamma_i)-H(\gamma_{old,i}) | \}$}
\EndWhile
\end{algorithmic}
\end{algorithm}

Lastly, we provide the VB algorithm for the $\mathcal{Q}_{MF}$ mean-field variational class using Laplace slabs in the prior.

\begin{algorithm}
\caption{Variational Bayes for Laplace prior slabs and variational class $\mathcal{Q}_{MF}$}\label{alg:VB:Laplace2}
\begin{algorithmic}[1]
\BState \textbf{Initialize}: $(\Delta_H,\boldsymbol{\sigma},\boldsymbol{\gamma})$, $\boldsymbol{\mu}:=\hat{\mu}^{(0)}$ (for a preliminary estimator $\hat{\mu}^{(0)}$), $\boldsymbol{a}:=order( |\boldsymbol{\mu}|)$
\While{$ \Delta_{H}\geq \eps$}
\State {$\boldsymbol{\gamma}_{old}:=\boldsymbol{\gamma}$}
\For {$ j=1$ to $p$}
\State{$i:= a_j$}
\State{$\mu_i:=\text{argmax}_{\mu_i}f_i(\mu_i|\boldsymbol\mu_{-i},\boldsymbol\sigma,\boldsymbol\gamma,z_i=1)$ \qquad \qquad \qquad \  // see equation \eqref{eq: joint:KL}}
\State{$\sigma_i:=\text{argmax}_{\sigma_i}g_i(\sigma_i|,\boldsymbol\mu,\boldsymbol\sigma_{-i},\boldsymbol\gamma,z_i=1)$  \qquad \qquad \qquad  // see equation \eqref{eq: joint:KL}}
\State{$\gamma_i:=\text{argmax}_{\gamma_i\in\{0,1\}}h_i(\gamma_i|\boldsymbol{\mu},\boldsymbol{\sigma},\boldsymbol{\gamma}_{-i}) $\qquad\qquad\qquad\qquad // see equation \eqref {eg: def:h_gamma_i}}
\EndFor
\State{ $\Delta_{H}:=\max_i\{ | H(\gamma_i)-H(\gamma_{old,i}) | \}$}
\EndWhile
\end{algorithmic}
\end{algorithm}

\section{Examples of compatible design matrices}\label{sec:design_matrix_supp}

In addition to the compatibility type constants defined in Section \ref{sec:design_matrix}, we also consider a stronger invertibility condition involving the `mutual coherence' of the design matrix, which is the maximal correlation between the different predictors in $X$.
\begin{definition}[Mutual coherence]\label{def:mc2}
The mutual coherence number is
\begin{align}\label{def:mc}
\emph{mc}(X) = \max_{1\leq i\neq j \leq p} \frac{|\langle X_{\cdot i},X_{\cdot j}\rangle|}{\|X_{\cdot i}\|_2\|X_{\cdot j}\|_2}.
\end{align}
\end{definition}
While we do not actually use the mutual coherence in our results, it provides an easy way to understand the compatibility constants in Definitions \ref{def:compat}-\ref{def:ssssv} in several well-studied design matrix examples below. The following result relates these notions.

\begin{lemma}[Lemma 1 of \cite{castillo:2015}]\label{lem:compat}
$\phi(S)^2 \geq \overline{\phi}(1)^2 - 15|S| \mc(X)$, $\overline{\phi}(s)^2 \geq \widetilde{\phi}(s)^2 \geq \overline{\phi}(1)^2 - s\mc(X)$.
\end{lemma}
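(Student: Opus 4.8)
The plan is to bound the quadratic form $\|X\theta\|_2^2 = \theta^T(X^TX)\theta$ by isolating its diagonal contribution, controlled by the column norms, from its off-diagonal contribution, controlled by the mutual coherence. Two elementary facts drive everything. First, taking $s=1$ in Definition~\ref{def:unif_compat} forces $\theta$ to be one-sparse, so $\overline{\phi}(1) = \min_{1\leq i\leq p}\|X_{\cdot i}\|_2/\|X\|$ and consequently $\sum_i\theta_i^2\|X_{\cdot i}\|_2^2 \geq \overline{\phi}(1)^2\|X\|^2\|\theta\|_2^2$ for every $\theta$. Second, by Definition~\ref{def:mc2} each off-diagonal Gram entry satisfies $|\langle X_{\cdot i},X_{\cdot j}\rangle|\leq \mc(X)\|X_{\cdot i}\|_2\|X_{\cdot j}\|_2 \leq \mc(X)\|X\|^2$. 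The middle inequality $\overline{\phi}(s)\geq\widetilde{\phi}(s)$ needs neither fact: any $\theta$ with $0\neq|S_\theta|\leq s$ obeys $\|\theta\|_1\leq|S_\theta|^{1/2}\|\theta\|_2$ by Cauchy--Schwarz on its support, so the ratio defining $\overline{\phi}(s)$ dominates that defining $\widetilde{\phi}(s)$ pointwise, and taking infima over the common set of vectors gives the claim.

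For the sparse bound $\widetilde{\phi}(s)^2\geq\overline{\phi}(1)^2 - s\,\mc(X)$, I would fix $\theta$ with $|S_\theta|\leq s$ and combine the two facts above as
\[
\|X\theta\|_2^2 = \sum_i\theta_i^2\|X_{\cdot i}\|_2^2 + \sum_{i\neq j}\theta_i\theta_j\langle X_{\cdot i},X_{\cdot j}\rangle \geq \overline{\phi}(1)^2\|X\|^2\|\theta\|_2^2 - \mc(X)\|X\|^2\big(\|\theta\|_1^2 - \|\theta\|_2^2\big).
\]
Since $\|\theta\|_1^2\leq|S_\theta|\,\|\theta\|_2^2\leq s\|\theta\|_2^2$, the right-hand side is at least $\|X\|^2\|\theta\|_2^2\big(\overline{\phi}(1)^2 - s\,\mc(X)\big)$; dividing by $\|X\|^2\|\theta\|_2^2$ and taking the infimum over all such $\theta$ gives the result.

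The compatibility bound $\phi(S)^2\geq\overline{\phi}(1)^2 - 15|S|\,\mc(X)$ is the delicate one, and the place where care with constants matters: here $\theta$ is only approximately sparse, obeying the cone condition $\|\theta_{S^c}\|_1\leq 7\|\theta_S\|_1$, and bounding the entire off-diagonal sum crudely via $\|\theta\|_1\leq 8\|\theta_S\|_1$ would produce the far worse factor $64$ in place of $15$. The remedy is to split $\theta=\theta_S+\theta_{S^c}$ and expand
\[
\|X\theta\|_2^2 = \|X\theta_S\|_2^2 + 2\langle X\theta_S,X\theta_{S^c}\rangle + \|X\theta_{S^c}\|_2^2 \geq \|X\theta_S\|_2^2 + 2\langle X\theta_S,X\theta_{S^c}\rangle,
\]
discarding the nonnegative term $\|X\theta_{S^c}\|_2^2$. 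I would bound the within-$S$ block by $\|X\theta_S\|_2^2\geq\|X\|^2\|\theta_S\|_1^2\big(\overline{\phi}(1)^2/|S| - \mc(X)\big)$, using $\|\theta_S\|_2^2\geq\|\theta_S\|_1^2/|S|$ on the diagonal and $\sum_{i\neq j\in S}|\theta_i||\theta_j|\leq\|\theta_S\|_1^2$ on the coherence term, and bound the cross term by $|2\langle X\theta_S,X\theta_{S^c}\rangle|\leq 2\mc(X)\|X\|^2\|\theta_S\|_1\|\theta_{S^c}\|_1\leq 14\,\mc(X)\|X\|^2\|\theta_S\|_1^2$.

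The cone condition is spent in exactly one place, namely replacing $\|\theta_{S^c}\|_1$ by $7\|\theta_S\|_1$ in the cross term, which is why it contributes the factor $2\cdot 7=14$; the within-$S$ diagonal contributes the remaining $1$, and the discarded block contributes nothing. Multiplying the two displayed bounds through by $|S|/\big(\|X\|^2\|\theta_S\|_1^2\big)$ and taking the infimum over the cone yields $\phi(S)^2\geq\overline{\phi}(1)^2 - |S|\,\mc(X) - 14|S|\,\mc(X) = \overline{\phi}(1)^2 - 15|S|\,\mc(X)$, as claimed. I expect the main obstacle to be precisely this constant bookkeeping --- recognizing that one must discard $\|X\theta_{S^c}\|_2^2$ and localize the cone estimate to the cross term, rather than applying it uniformly, in order to land on $15$ rather than a weaker constant.
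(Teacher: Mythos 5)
Your proof is correct: all three inequalities check out, and in particular the constant $15 = 1 + 2\cdot 7$ emerges exactly as you say, from the within-$S$ off-diagonal term (contributing $1$) plus the cross term bounded via the cone condition $\|\theta_{S^c}\|_1 \leq 7\|\theta_S\|_1$ (contributing $14$), after discarding $\|X\theta_{S^c}\|_2^2 \geq 0$. The paper itself supplies no proof, simply citing Lemma 1 of \cite{castillo:2015}, and your argument — splitting the Gram form into diagonal terms controlled by $\overline{\phi}(1)$ and off-diagonal terms controlled by $\mc(X)$, localized to the $S\times S$ and $S \times S^c$ blocks for the compatibility bound — is essentially the same one used in that reference, so there is nothing to add.
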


By evaluating the infimum in Definition \ref{def:unif_compat} at the unit vectors, one obtains $\widetilde{\phi}(1) = \overline{\phi}(1) = \min_i \|X_{\cdot i}\|_2/\|X\| = \min_{i\neq j} \|X_{\cdot i}\|_2/\|X_{\cdot j}\|_2$, which is bounded away from zero if the columns of $X$ have comparable Euclidean norms. In this case, Lemma \ref{lem:compat} implies that the compatibility numbers and sparse singular values are bounded away from zero for models of size $O(1/\mc(X))$. The mutual coherence condition is thus the strongest of these notions. These conditions are illustrated via the following well-studied examples. 

\begin{enumerate}
\item\label{ex:seq_model} (Sequence model). We observe a vector $Y = (Y_1,\dots,Y_n)$ of independent random variables with $Y_i \sim N(\theta_i,1)$. This corresponds to model \eqref{eq:model} with $n=p$ and $X=I_p$ the identity matrix, so that $\|X\| = \|X_{\cdot i}\|_2 = 1$ for all $i$, the compatibility numbers are 1 and $\mc(X) = 0$. In this setting, all results below are valid for all sparsity levels.

\item (Sequence model, multiple observations). We observe $n$ independent $N(\theta_i,\sigma_n^2)$ random variables with $\sigma_n \to 0$. Defining $Y_i$ as $\sigma_n^{-1}$ times the original observations, this falls within the framework of model \eqref{eq:model} with $X = \sigma_n^{-1}I_p$, so that $\|X\| = \|X_{\cdot i}\|_2 = \sigma_n^{-1}$ for all $i$, the compatibility numbers are 1 and $\mc(X) = 0$, similar to Example \ref{ex:seq_model}.

\item (Regression with orthogonal design). If $X$ is an orthogonal design matrix such that $\langle X_{\cdot i}, X_{\cdot j} \rangle = 0$ for $i\neq j$, the regression problem can be transformed into a sequence model.

\item (Response model).  Suppose the entries of the original regression matrix are i.i.d. random variables $W_{ij}$. We may then normalize the entries of the design matrix by defining $X_{ij} = W_{ij}/\|W_{\cdot j}\|_2$, so that the column lengths satisfy $\|X\| = \|X_{\cdot i}\|_2 = 1$ for all $i$. If $|W_{ij}| \leq C$ for a constant $C>0$ and $\log p = o(n)$, or $Ee^{t_0|W_{ij}|^\alpha}<\infty$ for some $\alpha,t_0>0$ and $\log p = o(n^{\alpha/(4+\alpha)})$, then Theorems 1 and 2 of \cite{cai:2011} show that $\sqrt{n/\log p} \mc(W) \stackrel{P}{\to} 2$ as $n\to \infty$. Since $\mc(W) = \mc(X)$, this shows that for any $\eps>0$, $P(\mc(X) > (2+\eps) \sqrt{(\log p)/n} ) \to 0$. Thus with probability approaching one, the compatibility numbers are bounded away from zero for sparsity levels $s_n = o(\sqrt{n/\log p})$.

A classic example is $W_{ij} \stackrel{iid}{\sim} N(0,1)$. In this case, the above bound on the mutual coherence holds as long as $\log p = o(n^{1/3})$.

\item By rescaling the columns of $X$, one can set the $p\times p$ matrix $C:= X^TX/n$ to take value one for all diagonal entries. Then $\|X\| = \|X_{\cdot i}\|_2 = \sqrt{n}$ for all $i$ and the elements $C_{ij}$, $i\neq j$, are the correlations between columns. For some $m\in \mathbb{N}$, if $C_{ij}= r$ for a constant $0<r<(1+cm)^{-1}$ and all $i\neq j$ or $|C_{ij}| \leq c/(2m-1)$ for every $i\neq j$, then \cite{zhao:2006} show that models up to dimension $m$ satisfy the `strong irrepresentability condition' and are hence estimable. In particular, $\mc(X) = \max_{i\neq j} C_{ij} =O(1/m)$ and hence the compatibility numbers are bounded away from zero for sparsity levels $s_n = o(m)$.
\end{enumerate}

\bibliography{references}{}
\bibliographystyle{acm}

\end{document}